\documentclass[a4paper,11pt]{article}

\usepackage{color}
\usepackage{bbm}
\usepackage{graphicx}
\usepackage{epstopdf}
\usepackage{amsmath}
\usepackage{appendix}
\usepackage{amsthm}
\usepackage[boxruled,algo2e]{algorithm2e}
\usepackage{amsfonts}
\usepackage{amssymb}
\usepackage{enumitem}
\usepackage[margin=1cm, font={small,it}]{caption}
\usepackage[utf8]{inputenc}
\usepackage{multirow}
\usepackage{color}
\usepackage{a4wide}
\usepackage[english]{babel}
\usepackage{hyperref}
\hypersetup{colorlinks=true,citecolor=blue, urlcolor=blue, linkcolor=black, filecolor=black}
\usepackage[draft]{fixme}

\newtheorem{theorem}{Theorem}[section]
\newtheorem{lemma}[theorem]{Lemma}
\newtheorem{proposition}[theorem]{Proposition}
\newtheorem{corollary}[theorem]{Corollary}
\newtheorem{definition}{Definition}

\newtheorem{remark}{Remark}

\newtheorem{assumption}{Assumption}

\numberwithin{equation}{section}

\everymath{\displaystyle}

\graphicspath{{image/}}

\def\1{\mathbbm{1}}
\def\rme{\mathrm{e}}
\def\rmd{\mathrm{d}}

\def\IdVect{\mathbf{1}}
\def\IdMat{\mathrm{Id}}
\def\HR{\mathrm{Hr}}
\def\IR{\mathrm{Ir}}


\title{Modelling Bid and Ask prices using constrained Hawkes processes\\
Ergodicity and scaling limit}

\newcommand{\footnoteremember}[2]{
  \footnote{#2}
  \newcounter{#1}
  \setcounter{#1}{\value{footnote}}
}
\newcommand{\footnoterecall}[1]{
  \footnotemark[\value{#1}]
}
\author{Ban Zheng\footnoteremember{myfootnote1}{Natixis, Equity Markets. E-mail: ban.zheng@melix.net. The authors would like to thank the members of Natixis quantitative research team for fruitful discussions.}
\footnoteremember{myfootnote0}{T\'el\'ecom ParisTech (CNRS LTCI, Insitut Mines-Télécom). E-mail: francois.roueff@telecom-paristech.fr}
\footnoteremember{myfootnote}{Corresponding author. Chair of Quantitative Finance, Ecole Centrale Paris, MAS Laboratory. E-mail: frederic.abergel@ecp.fr}
\and
Fran\c{c}ois Roueff\footnoterecall{myfootnote0}
\and
Fr\'ed\'eric Abergel\footnoterecall{myfootnote}
}

\begin{document}
\date{\today}
\maketitle
\abstract {We introduce a multivariate point process describing the dynamics of
  the Bid and Ask price of a financial asset. The point process is similar to a
  Hawkes process, with additional constraints on its intensity corresponding to
  the natural ordering of the best Bid and Ask prices. We study this process in
  the special case where the fertility function is exponential, so that the
  process is entirely described by an underlying Markov chain including the
  constraint variable. Natural, explicit conditions on the parameters are
  established that ensure the ergodicity of the chain. Moreover, scaling limits
  are derived for the integrated point process.}

{\bf Keywords:} Point processes. Hawkes processes. Limit order book. Microstructure noise. Bid-Ask Spread. Ergodicity. Scaling limit. Markov model.

{\bf Mathematics Subject Classification:} 37A25, 60F05, 60G55, 60J05.

{\bf JEL Classification:} C62.

\newpage

\section{Introduction}

Modelling the mechanisms of liquidity taking and providing is key to
understanding the finer scale, microscopic dynamics of the price of a financial
asset. In the context of electronic, order-driven markets, the modelling effort
naturally occurs at the order-book level, and the study of order-driven markets
has received much attention over the past two decades. On the one hand,
extensive statistical studies of the limit order book dynamics and information
content have been performed, see e.g. \cite{biais-hillion-spatt-1995,
  gourieroux-jasiak-lefol-1999, smith-farmer-gillemot-krishnamurthy-2003,
  farmer-gillemot-lillo-mike-sen-2004, hollifield-miller-sandras-2004,
  bouchaud-farmer-lillo-2008, fletcher-hussain-shawe-taylor-2010,
  chakraborti-munitoke-patriarca-abergel-2011-I, zheng-moulines-abergel-2012};
on the other hand, stochastic models, including equilibrium models, agent-based
models and Markov models, have emerged as a mathematical representation of the
limit order book, see \cite{parlour-1998, rosu-2009, cont-stoikov-talreja-2010,
  chakraborti-munitoke-patriarca-abergel-2011-II, abergel-jedidi-2011,
  cont-delarrard-2011}.

It is quite clear from the references above, or better, from a direct
inspection of limit order book high frequency data, that a full-fledged
order-book model rapidly becomes cumbersome and may sometimes hide simple yet
essential mechanisms. One may therefore question the level of complexity
required to understand the essential features of the evolution of a financial
asset in the high frequency realm. Obviously, a minimal description should
model the Bid/Ask spread, since this quantity reveals essential information on
liquidity. Such a description should also endogenously account for the
interplay between the spread and the mid-price.
It is our goal in this paper to introduce a phenomenological,
point-process-based description of the joint dynamics of the mid-price and the
spread, or equivalently, of the respective best Bid and Ask prices. This
description can be viewed as a simplified form of an order-book model with only
two limits and infinite liquidity available at each best bid and ask prices, so
that its aim is to represent the joint dynamics of the price and the spread in
the absence of liquidity costs. Therefore, it can be viewed as an upgrade of
the classical modelling approach that addresses only the mid-price
dynamics. Actually, we introduce a more general model that can deal with
several constraints, a situation that can happen when studying for instance the
joint dynamics of the Bid and Ask prices of several tradable assets.

The theory of point processes provides a natural tool to model the dynamics of
the price of a financial asset at the level of individual changes, or ``tick"
level. Bauwens and Hautsch \cite{bauwens-hautsch-2006} is a comprehensive
introduction to the application of point processes in financial time
series. Such processes allow one to directly model the arrival of events
affecting the price dynamics. Recently, Hawkes processes have been introduced
in market microstructure. Hawkes processes belong to the class of self-exciting point
processes, where the intensity is driven by a weighted function of the time
distance to previous points of the process. These processes originate from the
literature in seismology, where they were introduced to model the arrival of replicas in the aftermath of an earthquake. Later, they have been successfully applied to financial markets in the context of default risk modelling \cite{errais2010}, contagion across equity markets \cite{sahalia-cacho-diaz-laeven-2010}, or foreign exchange modelling \cite{embrechts-liniger-lin-2011}. To the best of our knowledge, the first application of Hawkes processes in
financial time series is \cite{bowsher-2002}. Since then, there
has been a growing interest in using Hawkes processes to model
high frequency financial data, see \cite{bauwens-2003, hewlett-2006,
  bowsher-2007, large-2007, bacry-delattre-hoffmann-muzy-2010, munitoke-pomponio-2011, bacry-delattre-hoffmann-muzy-2012}.
There are some very convincing reasons to this growing interest: first, Hawkes
processes offer a very natural way to model the dependence structures of the
arrival of changes in the price of one or several assets, see
e.g. \cite{bacry-delattre-hoffmann-muzy-2010, bacry-dayri-muzy-2011},
leading to nice representations for high frequency volatility and
correlation. Moreover, the mutual excitation mechanism built in multivariate Hawkes processes is consistant with phenomena observed in market microstructure, where the interplay between liquidity providing and taking is crucial, see \cite{munitoke-pomponio-2011} for empirical evidence.
Obviously, other important phenomena also affect the price dynamics and liquidity supply: macro-economic news, idiosyncratic news, private information... but they are exogenous and cannot be built directly into a price model. In this work, we do not address that aspect of the modelling, but rather concentrate on the endogenous interplay between orders of various types.

It is noteworthy that previous studies using point processes focus on the description of the evolution of the mid-price of one or several assets, without incorporating the spread-price interaction.
In this contribution, we wish to model the dynamics of the best Bid and Ask prices in a stylized
limit order book; and use Hawkes processes to that aim. There is however a main difficulty in using classical Hawkes processes: there is a natural ordering between the best Bid and Ask prices, and they must evolve according to specific constraints. In particular, the Bid/Ask spread is always nonnegative, and is therefore bounded away from zero by its lower bound, the tick size. This means that, as soon as the minimal bid-ask spread is attained, the best bid price cannot increase
until the best ask price goes up, and conversely, the best Ask price cannot go down until the best Bid price does. Hence, we add a
constraint variable to the classical multivariate Hawkes process to model the
dynamics of the best limit prices: in a situation where one of the prices is constrained, the intensity of the arrival process of events that would violate the constraint becomes zero, and stays there until the constraint becomes inactive.

Below is an example of a concrete problem that motivates our theoretical approach:
Figure~\ref{fig:FTE_BBBA} shows a snapshot of the price evolution of a tradable asset, both in
physical time and event time.
\begin{figure}[htbp]
\centering
\begin{tabular}{cc}
\includegraphics[trim=0mm 0mm 0mm 0mm, clip, height=6.8cm, width=6.8cm, angle=-90]{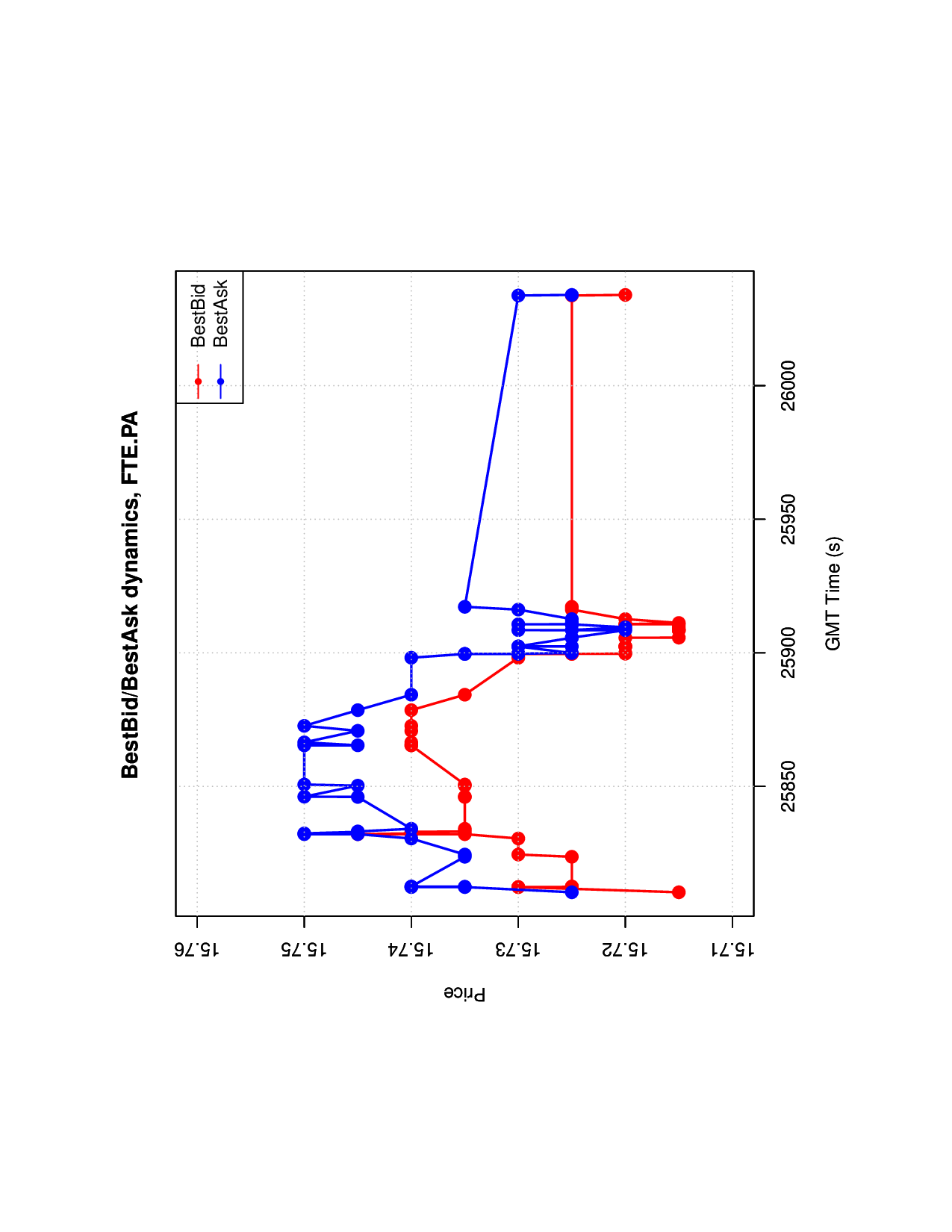}
\includegraphics[trim=0mm 0mm 0mm 0mm, clip, height=6.8cm, width=6.8cm, angle=-90]{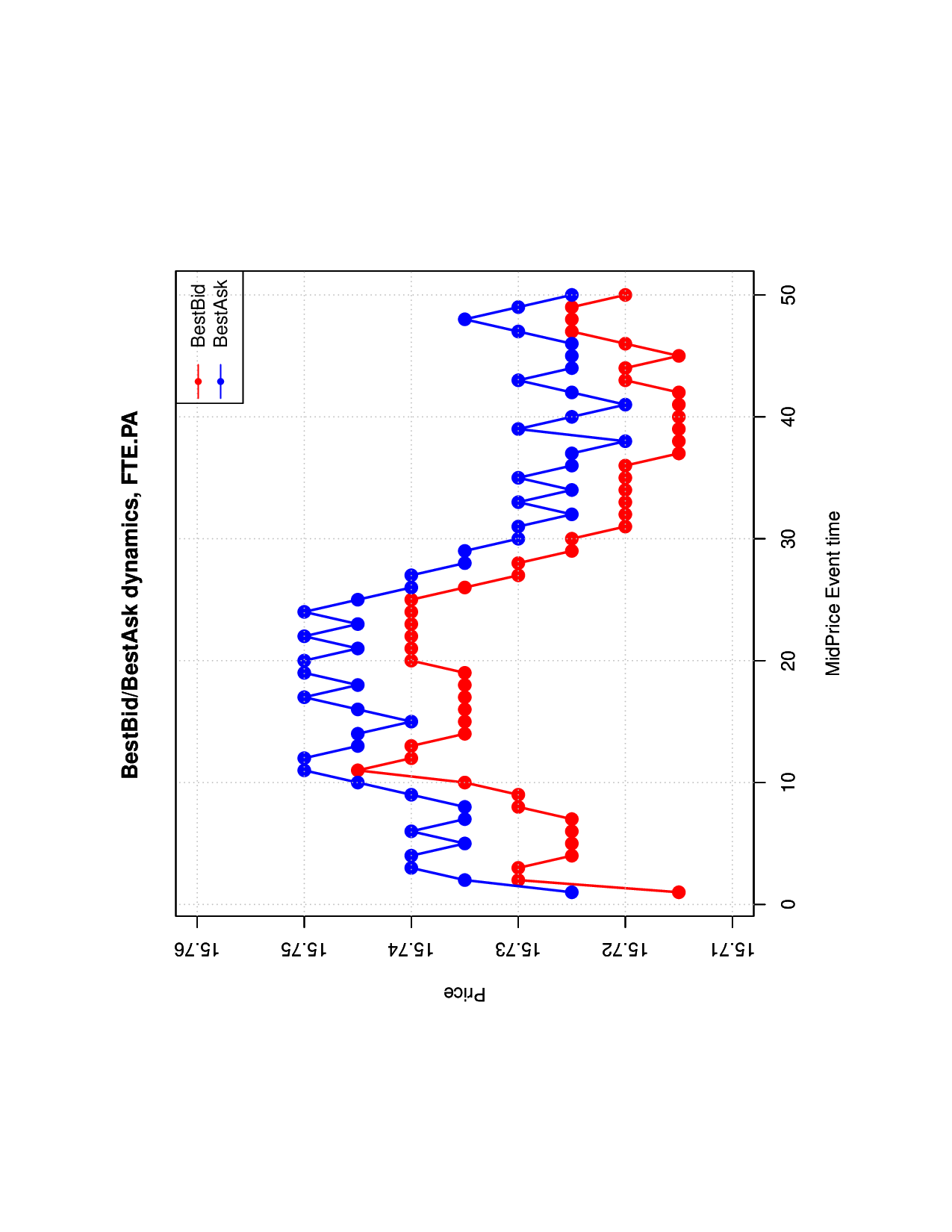}
\end{tabular}
\caption{Limit order book of Orange~:~Evolution of Best Bid and Best Ask prices over a short time window of
  April 1, 2011. The evolution is displayed in physical time (left) and in event time (right).}
\label{fig:FTE_BBBA}
\end{figure}
The simplest description of the dynamics of the best Bid and Ask prices is based on four types of events:
\begin{itemize}
  \item \emph{Event 1}: Best Ask price moves upward one tick,
  \item \emph{Event 2}: Best Ask price moves downward one tick,
  \item \emph{Event 3}: Best Bid price moves upward one tick,
  \item \emph{Event 4}: Best Bid price moves downward one tick.
\end{itemize}
The arrival of these events is described by a marked point process $\mathbf{N}$.

As argued above, Hawkes processes offer a sensible framework to model the arrival times of
these events. However, since the Best Bid has to remain at least one tick below
the Best Ask, we need to introduce a constraint. Namely, defining the spread as
$S=BestAsk-BestBid$ (in ticks), events 2 and 3 cannot happen whenever $S=1$.

Our results show that, under some explicit drift conditions, the spread
variable is stationary and the rescaled price behaves like a random walk. More
generally, we study Hawkes processes with multivariate constraints: in a
Markovian setting with exponential fertility functions, we provide sufficient
conditions that ensure the ergodicity of the embedded chain, and study the
scaling limits of the integrated point process. The main methodology is based
on establishing geometric drift conditions as detailed in
\cite[Chapter~15]{meyn-tweedie-2009}.

The paper is structured as follows: in
Section~\ref{sec:main-assump-and-notation}, we introduce the constrained multivariate Hawkes
processes and present the main assumptions and
notation. Section~\ref{sec:main-results} gives the main results on the
geometrical ergodicity and the large scale behavior. The
application to a limit order book is presented in
Section~\ref{sec:sclobchp-application}. We have gathered the detailed proofs in
Section~\ref{sec:proofs}. Some useful technical results are postponed to
Appendix~\ref{app:technical}.

%
%

\section{Main assumptions and notation}
\label{sec:main-assump-and-notation}

\subsection{Multivariate Hawkes process with constraints}
\label{sec:constr-hawk-proc}

Let $p,q$ be two positive integers. We introduce a constrained multivariate Hawkes process
defined by a couple $(\mathbf{S},\mathbf{N})$, where $\{\mathbf{S}(t),\,t\geq0\}$ is a multivariate \emph{spread} process taking its values in
$\mathbb{Z}_+^q=\{1,2,\dots\}^q$ and $\mathbf{N}$ is a $p$-dimensional
multivariate point process. We shall describe the dynamics of this model by
the conditional intensity of $\mathbf{N}$ as defined
in \cite{bremaud-massoulie-1996} and an evolution equation for $\mathbf{S}$.

We assume that $\mathbf{S}$ is right-continuous with left limits and denote by
$\mathbf{S}(t -)$ its left limit at time $t$. The conditional intensity of the
multivariate point process $\mathbf{N}$ at time $t$ shall depend on the
spread variable $\mathbf{S}(t-)$. For notational compactness we will
see $\mathbf{N}$ as a marked point process having arrivals in $\mathbb{R}$ with
marks in $\{1,\dots,p\}$. We will denote by $\mathbf{N}_j$, $j=1,\dots,p$ the
$p$ point processes corresponding to each mark $j$, namely
$$
\mathbf{N}=\sum_{j=1}^p \mathbf{N}_j\otimes\delta_{j}\;.
$$
In other words, each thinned unmarked process $\mathbf{N}_j$ is given from
$\mathbf{N}$ by the formula
$$
\mathbf{N}_j(g)=\mathbf{N}(g\otimes\1_{\{j\}})\;,
$$
for all non-negative measurable function $g$ defined on $\mathbb{R}$. Here and
all along the paper, point processes are seen as random point measures and we
use the notation $\xi(g)$ for the integral of a function $g$ with respect to a
measure $\xi$. We further use the symbol $\otimes$ for the product of measures
(that is, $\xi\otimes\zeta(A\times B)=\xi(A)\zeta(B)$) and for the tensor
product of functions (that is, $g\otimes h(x,y)=g(x)\,h(y)$).  The spread
variable $\mathbf{S}(t-)$ acts on the point process by constraining the set of
possible marks for the first event arriving after time $t$ in $\mathbf{N}$. The
constraint is added in the usual expression of the conditional intensity of a
multivariate Hawkes process. More precisely, for all time $t\geq0$ and any mark
$i=1,\dots,p$, the conditional intensity of $\mathbf{N}_i$ given
$\mathbf{N}_{|(-\infty,t)}$ and $\mathbf{S}(t-)$ is set as
\begin{align}\label{eq:cond-intensity-cons}
  \boldsymbol{\mu}(t,i) =
  \begin{cases}
    0 &\text{ if $\mathbf{S}(t-)\in \mathbf{A}_i$}\\
\boldsymbol{\mu}_0(i) +
    \sum_{j=1}^{p}{\int_{(-\infty,t)}{\phi_{i, j}(t-u)\;
        \mathbf{N}_{j}(\mathrm{d} u)}}&\text{ otherwise}\;,
  \end{cases}
\end{align}
where $\boldsymbol{\mu}_0(i)$ is the immigrant
rate of mark $i$, $\phi_{i,j}$ is the fertility rate for producing a mark $i$
event from a mark $j$ event, and $\mathbf{A}_1,\dots,\mathbf{A}_p$ are products
of finite subsets of
$\mathbb{Z}_+^q$, $\mathbf{A}_i=A_i(1)\times\dots\times A_i(q)$.

In turn, an arrival in $\mathbf{N}$ increases the value of the spread by a value
only depending on the mark. More precisely, the spread process $\mathbf{S}$
satisfies the evolution equation, for all $t>u\geq0$,
\begin{align}\label{eq:state-evolution}
\mathbf{S}(t) = \mathbf{S}(u) + \mathbf{N}(\1_{(u, t]}\otimes \mathbf{J})\;,
\end{align}
where $\mathbf{J}$ is defined on $\{1,\dots,p\}$ with values in $\mathbb{Z}^q$. In other words, at each arrival in $\mathbf{N}$ with mark equal to $i$, $\mathbf{S}$ jump by an increment given by $\mathbf{J}(i)$.

Note that equations~(\ref{eq:cond-intensity-cons}) and~(\ref{eq:state-evolution}) are valid for $t\geq0$ and $t>u\geq0$ respectively so that the distribution of the process
$$
\{\left(\mathbf{S}(t),\mathbf{N}_j((0,t])\right),\,t\geq0,\,j=1,\dots,p\}
$$
is defined conditionally to the initial conditions given by $\mathbf{S}(0-)$ and
$\mathbf{N}$ restricted to $(-\infty,0)$. To summarize, this distribution is
parameterized by the constrained sets $\mathbf{A}_1,\dots,\mathbf{A}_p$, the immigrants rates
$\boldsymbol{\mu}_0(i)$, $i=1,\dots,p$, the cross--fertility rate functions
$\phi_{i,j}:[0,\infty)\to[0,\infty)$, $i,j=1,\dots,p$, and the jump
increments $\mathbf{J}(i)\in\mathbb{Z}^q$ for $i\in\{1,\dots,p\}$.

Meanwhile, we define an unconstrained multivariate Hawkes process $\mathbf{N}'$
with the same fertility rate and immigrant intensity (but without constraints),
that is, with conditional intensity on $[0,\infty)$ satisfying, for $i=1,
\dots, p$,

\begin{align*}
\boldsymbol{\mu}'(t,i) = \boldsymbol{\mu}_0(i) + \sum_{j=1}^{p}{\int_{(-\infty,t)}{\phi_{i, j}(t-u)\;
     \mathbf{N}'_{j}( \mathrm{d} u)}}, \quad t\geq0\;,
\end{align*}
where $\mathbf{N}'_j=\mathbf{N}'(\cdot\otimes\1_{\{j\}})$ for $j=1,\dots,p$.

The \emph{average fertility} matrix is defined by
\begin{equation}
  \label{eq:aleph-def}
\aleph = [\alpha_{i,j}]_{i,j=1,\dots,p}\quad\text{with}\quad
\alpha_{i, j} = \int_0^{\infty} \phi_{i, j}(t) \mathrm{d}t,\quad 1\leq i,j \leq p\;.
\end{equation}

The following assumption ensures the existence of a stationary version of the
(unconstrained) multivariate Hawkes process $\mathbf{N}'$, see \cite[Example~8.3(c)]{daley-vere-jones-2003}.

\begin{assumption}
  \label{ass:Phi}
The spectral radius of $\aleph$ is strictly less than $1$.
\end{assumption}

Under Assumption~\ref{ass:Phi}, we will use the following (well defined) vector
\begin{equation}
  \label{eq:defu}
\mathbf{u}=(\IdMat_p-\aleph^T)^{-1}\IdVect_p=\sum_{k\geq0}\aleph^k\IdVect_p\;,
\end{equation}
where $\IdVect_p$ denotes $p$-dimensional ones vector,
$$
\IdVect_p = {\underbrace{[1\;\; \cdots\;\; 1]}_{p}}^T\;,
$$
and $\IdMat_p$ denotes the $p\times p$ identity matrix.

In the following, by convention, we use
bold faces symbols only when $q$ may be larger than one.
In the case where $q$ is set to 1, we shall write $J$
and $S_n$ in unbold faces since they are scalar valued and, similarly, we
write $A_i$ in unbold face for the constraint sets.

For a real valued function $w$ defined on $\{1,\dots,p\}$, we shall use the notation
$$
\overrightarrow{w}=[w(1),\dots,w(p)]^T\;.
$$
If $w$ is vector-valued, we use the same notation to obtain a matrix. For instance,
$\overrightarrow{\mathbf{J}}$ is a $p\times q$ matrix.

\subsection{A very special case}
\label{sec:very-special-case}
When we study the ergodicity of the process, we investigate the joint
ergodicity of the spread variable and of the point process.
A very special case is obtained by setting
$$
\phi_{i,j}\equiv0,\quad i,j=1,\dots p\;,
$$
and $q = 1$.  In this case, we only need to study the stationarity of the
spread variable since the intensities are constant. Suppose moreover that $J$
only takes values $+1$ and $-1$ and the \emph{constraint} sets $A_i$ are all
equal to $\{1\}$ if $J(i)=-1$ and to $\emptyset$ if $J(i)=1$. In this case,
$\{S(t),\,t\geq0\}$ is a birth-death process on $\mathbb{Z}_+$ with constant
birth rate $\mu_+=\sum_{i\in I_+}\boldsymbol{\mu}_0(i)$ and constant death rate
$\mu_-=\sum_{i\in I_-}\boldsymbol{\mu}_0(i)$, where $I_+,I_-$ is the partition
of $\{1,\dots,p\}$ corresponding to the indices where $J$ takes values $+1$ and
$-1$, respectively. By \cite[Corollary~2.5] {asmussen-2003}, $(S(t))_{t\geq0}$
is ergodic if and only if
\begin{equation}
  \label{eq:ergodicity-birthdeath}
\mu_+ - \mu_- = \overrightarrow{J}^T \boldsymbol{\mu}_0 = \sum_{i=1}^p J(i) \boldsymbol{\mu}_0(i)
<0 \;.
\end{equation}
Here and in the following, $A^T$ denotes
the transpose of matrix $A$.
\subsection{A simple order book}
\label{sec:sclobchp-application-desc}

In the introduction, we presented a simple order book with only two limit
prices, namely the best Bid/Ask prices. We now detail how the constrained
multivariate Hawkes process can be used to model this situation.
Recall that the four possible events are: Best Ask price moves upward one tick
($i=1$), Best Ask price moves downward one tick ($i=2$), Best Bid price moves
upward one tick ($i=3$), Best Bid price moves downward one tick
($i=4$). Moreover, the spread variable $S$ is the standard (univariate) Bid-Ask
spread.  In the formalism of Section~\ref{sec:constr-hawk-proc}, this
case corresponds to $p=4$, $q=1$ and the constrained sets $A_i$ are defined by
\begin{equation}
\label{eq:sclob-constraints}
A_i =
\begin{cases}
\emptyset\;, & i = 1 \text{ or } 4 \\
{\{1\}}\;, & i = 2 \text{ or } 3
\end{cases}
\end{equation}
The spread variable $S$ evolves according to~(\ref{eq:state-evolution}) with a
scalar $J$ defined by
\begin{equation}
\label{eq:sclob-J}
J(i) =
\begin{cases}
1 \;, & \quad i=1 \text{ or } 4 \\
-1 \;, & \quad i=2 \text{ or } 3
\end{cases}
\end{equation}
Hence the constrained multivariate Hawkes process provides a general
framework for modelling this simple order book. If $\aleph=0$ as in
Section~\ref{sec:very-special-case}, Condition~(\ref{eq:ergodicity-birthdeath})
for the ergodicity of the model becomes
\begin{equation}
  \label{eq:ergodicity-lob-birth-death}
 \boldsymbol{\mu}_0(2)+\boldsymbol{\mu}_0(3)< \boldsymbol{\mu}_0(1)+\boldsymbol{\mu}_0(4)  \;.
\end{equation}
This condition can be interpreted as a drift that forces the mean intensity of
decreases of the spread variable $S$ to be larger than the mean intensity of
increases of the spread variable $S$.  However, in such a special case, given
$\mathbf{N}$ and $S$ up to time $t$, the conditional distribution of
$\mathbf{N}$ restricted to $(t,\infty)$ only depends on $S(t)$ and not on the
past events. Hence this model has very limited interest for modelling the
dynamics of the limit order book. Therefore we wish to investigate the case
where $\aleph$ is non-zero and, in particular, to determine how the ergodicity
condition~(\ref{eq:ergodicity-lob-birth-death}) should be adapted to this
case. This will be answered in Section~\ref{sec:sclobchp-application}.

\subsection{The Markov assumption}
\label{sec:markov-assumption}

We shall consider a particular shape of the fertility functions $\phi_{i,j}$ which implies a Markov
property for the model. This allows us to obtain very precise results on the
ergodicity of $(\mathbf{S},\mathbf{N})$ and is also of interest in
applications since the model is parameterized by a restricted set of well
understood parameters.

From now on, we suppose that there exists $\beta>0$ such that
$$
\phi_{i, j}(u) = \alpha_{i, j}\beta \mathrm e^{-\beta u}\;,\quad
1\leq i,j\leq p\;.
$$
This parameter is the reciprocal of a time~: the larger $\beta$ is, the shorter
the dependence persists along the time between successive events.
We denote a new process defined
in the state space $\mathsf{X}=\mathbb{Z}_+^q\times{\mathbb{R}_{+}^{p}}$ by
$\mathbf{X}(t)=(\mathbf{S}(t), \boldsymbol{\boldsymbol{\lambda}}(t))$ where
$\boldsymbol{\boldsymbol{\lambda}}(t) = [\boldsymbol{\lambda}(t,1), \dots,
\boldsymbol{\lambda}(t,p)]^T$ with, for all $i\in\{1, \dots, p\}$,
$$
\boldsymbol{\lambda}(t,i) = \sum_{j=1}^{p}{\int_{(-\infty,t)}{\phi_{i,
      j}(t-u) \mathbf{N}_{j}(\mathrm{d} u)}}\;.
$$
\begin{proposition}
\label{prop:markov}
The process $\{\mathbf{X}(t),\,t\geq0\}$ is a Markov process.
\end{proposition}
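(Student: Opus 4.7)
The plan is to exploit the memoryless structure of the exponential kernel. Writing
$$
\boldsymbol{\lambda}(t,i) = \sum_{j=1}^{p}\int_{(-\infty,t)} \alpha_{i,j}\beta\,\mathrm{e}^{-\beta(t-u)}\mathbf{N}_j(\mathrm{d} u),
$$
I would first verify that on any interval $(s,t]$ containing no arrival of $\mathbf{N}$ one has $\boldsymbol{\lambda}(t,i)=\mathrm{e}^{-\beta(t-s)}\boldsymbol{\lambda}(s,i)$, and that at an arrival time $\tau$ of $\mathbf{N}$ with mark $j$ one has $\boldsymbol{\lambda}(\tau,i)-\boldsymbol{\lambda}(\tau-,i)=\alpha_{i,j}\beta$. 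Both facts follow by splitting the integral at $s$ (resp.\ $\tau-$) and using $\mathrm{e}^{-\beta(t-u)}=\mathrm{e}^{-\beta(t-s)}\mathrm{e}^{-\beta(s-u)}$. Thus between jumps $\boldsymbol{\lambda}$ evolves by the deterministic ODE $\dot{\boldsymbol{\lambda}}=-\beta\boldsymbol{\lambda}$, and at a mark-$j$ event it is incremented by the $j$-th column of $\beta\aleph$.

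Second, I would observe that with this representation, formula~(\ref{eq:cond-intensity-cons}) becomes
$$
\boldsymbol{\mu}(t,i)=\1_{\{\mathbf{S}(t-)\notin\mathbf{A}_i\}}\bigl(\boldsymbol{\mu}_0(i)+\boldsymbol{\lambda}(t-,i)\bigr),
$$
so the conditional intensity of each $\mathbf{N}_i$ is a measurable function only of $\mathbf{X}(t-)=(\mathbf{S}(t-),\boldsymbol{\lambda}(t-))$. Likewise, the jump rule~(\ref{eq:state-evolution}) shows that at a mark-$i$ event, $\mathbf{S}$ is incremented by $\mathbf{J}(i)$, which again only involves the mark.

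Third, these two facts permit an explicit construction of the process as a piecewise-deterministic Markov process on $\mathsf{X}=\mathbb{Z}_+^q\times\mathbb{R}_+^p$: given $\mathbf{X}(t_0)=(\mathbf{s},\boldsymbol{\ell})$, the law of the first arrival time $T_1>t_0$ has hazard $r\mapsto \sum_i \1_{\{\mathbf{s}\notin\mathbf{A}_i\}}(\boldsymbol{\mu}_0(i)+\ell_i\mathrm{e}^{-\beta(r-t_0)})$, the mark at $T_1$ is drawn from the corresponding probability vector, one updates $(\mathbf{s},\boldsymbol{\ell})$ by the deterministic decay plus the jump, and iterates. Since the resulting law of $(\mathbf{S},\mathbf{N})$ satisfies~(\ref{eq:cond-intensity-cons}) and~(\ref{eq:state-evolution}), uniqueness in law of the constrained Hawkes dynamics given the initial data identifies it with $\mathbf{X}$, and the construction is by design Markov with respect to the filtration generated by $\mathbf{X}$ itself.

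The main obstacle is verifying that the conditional intensity remains~(\ref{eq:cond-intensity-cons}) when the conditioning $\sigma$-field is reduced from the full history of $\mathbf{N}_{|(-\infty,t)}$ and $\mathbf{S}(t-)$ to the single random variable $\mathbf{X}(t-)$; this is precisely the content of the first step, since once one knows $\boldsymbol{\lambda}(t-,i)$ the whole past of $\mathbf{N}$ enters~(\ref{eq:cond-intensity-cons}) only through that scalar. Everything else is a standard PDMP construction argument.
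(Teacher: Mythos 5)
Your proof is correct and follows essentially the same route as the paper: the paper likewise argues that the exponential kernel makes $\boldsymbol{\lambda}(t)$ evolve by deterministic decay between arrivals plus a jump of $\beta\aleph_j$ at a mark-$j$ event, and then establishes the Markov property through exactly the hazard-rate/piecewise-deterministic construction you describe (the paper's version in Section~\ref{sec:markov-assumption} merely adds an auxiliary independent PPP $\mathrm{n}_0$ of sampling times, which is irrelevant to the Markov property itself and only serves aperiodicity of the embedded chain).
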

This property directly follows from the exponential form of the fertility
functions $\phi_{i,j}$ (see \cite{oakes-1975,errais2010,liniger-2009} in the
unconstrained case). We omit the proof of Proposition~\ref{prop:markov} which
is similar to the unconstrained case.

In the following, we shall essentially rely on a Markov chain
$(\mathbf{Z}_n)_{n\geq0}$ for describing the dynamics of the process
$\{\mathbf{X}(t),\,t\geq0\}$. We now detail the construction of this Markov
chain. Let $\mu^{0}_{0}$ denote an arbitrary positive constant (say
$\mu_0^0=1$). An embedded Markov chain $(\mathbf{X}_n)_{n\geq0}$ shall be
defined by sampling the continuous time process $\{\mathbf{X}(t),\,t\geq0\}$ at
increasing discrete times $(T_n)_{n\geq 0}$ gathering the arrivals of
$\mathbf{N}'$ (the unconstrained Hawkes process defined in
Section~\ref{sec:constr-hawk-proc}) and the arrivals of an independent
homogeneous Poission point process (PPP) $\mathrm{n}_0$ with intensity
$\mu^0_0$. Observe that $\mathbf{S}(t)$ is constant between two consecutive
arrivals of the point process $\mathbf{N}$. Other sampling points are those
generated by $\mathrm{n}_0$ and those present in $\mathbf{N}'$ but not in
$\mathbf{N}$ (which correspond to the arrivals of $\mathbf{N}'$ that violate
the constraint). To these points we shall assign a mark of new type $i=0$, see
the second line in~(\ref{eq:hazard-rate}) below).  Hence, almost surely,
$\mathbf{S}$ does not change at arrivals corresponding to marks of this
type. The sample time instants with marks of type $i=0$ are artificially
added to avoid the periodicity of the embedded chain, see the end of the proof
of Proposition~\ref{prop:general-smallset}.

Each jump of $\boldsymbol{\lambda}(t)$ corresponds to an arrival in the point
process $\mathbf{N}$. We denote its mark by $I_n$, which takes values in
$\{1,\dots,p\}$, where $n$ is the positive integer such that the corresponding
jump instant $t$ equals $T_n$. For other sample times $T_n$, which thus
correspond to the arrivals of $\mathrm{n}_0$, we set $I_n=0$. Hence the marks of
the embedded chain will take values in the set $\{0,1,\dots,p\}$.

Finally, we define $\Delta_1=T_1$ and, for each $n \geq 2, \Delta_{n} = T_{n} -
T_{n-1}$, $\mathbf{S}_{n} = \mathbf{S}(T_n)$, $\boldsymbol{\lambda}_{n} =
\boldsymbol{\lambda}(T_n)$ and $\mathbf{X}_n=\mathbf{X}(T_n)=(\mathbf{S}_{n},\boldsymbol{\lambda}_{n})$.
Then, given $(\mathbf{S}_{0},\boldsymbol{\lambda}_{0})$ and $T_0=0$, we can
generate the sequence $(\Delta_n,\mathbf{S}_{n},\boldsymbol{\lambda}_{n})$,
$n=0,1,\dots$ iteratively as follows.

\begin{align*}
\Delta_{n+1} =
\min(\Delta_{n+1}^{0}, \Delta_{n+1}^{1}, \cdots, \Delta_{n+1}^{p})\;,
\end{align*}
where, given $\mathbf{X}_{0},\ldots, \mathbf{X}_{n}$, $\Delta_{1}, \ldots,
\Delta_{n}$ and $I_1,\dots,I_n$, the conditional distribution of
$\Delta_{n+1}^{0}$, $\Delta_{n+1}^{1}$, ... and $\Delta_{n+1}^{p}$ is that of
positive independent variables whose marginal distributions are determined by
hazard rates $\HR_{0},\HR_{1},\dots,\HR_p$ defined for each $i$ and $t\geq0$ by
\begin{align}
\label{eq:hazard-rate}
\HR_i(t) & =
\begin{cases}
(\boldsymbol{\mu}_{0}(i) + \boldsymbol{\lambda}_n (i) \rme^{-\beta t})\1_{\mathbf{A}_i^c}(\mathbf{S}_n), & i > 0 \;,\\
\mu^{0}_{0} + \sum_{j=1}^p \1_{\mathbf{A}_j}(\mathbf{S}_n) [\boldsymbol{\mu}_{0}(j) + \boldsymbol{\lambda}_n (j) \rme^{-\beta t}], & i = 0\;.
\end{cases}
\end{align}
Then, the event type at time $T_{n+1}$ is
$$
I_{n+1} =
\arg\min_{i\in\{0,\cdots,p\}}(\Delta_{n+1}^{i})\;,
$$
the spread variable at time $T_{n+1}$ is given by
\begin{align}
\label{eq:recurrence-of-S}
\mathbf{S}_{n+1} =\mathbf{S}(T_{n+1})= \mathbf{S}_{n} + \mathbf{J}_o(I_{n+1})\;,
\end{align}
where $\mathbf{J}_o$ is the extension of $\mathbf{J}$ to $\{0,1,\dots,p\}$ defined by
\begin{align}
\label{eq:conv-Jzero}
\mathbf{J}_o(i)=
\begin{cases}
\mathbf{J}(i)&\text{ if $i\geq1$}\\
\mathbf{0}_q&\text{ if $i=0$}\;.
\end{cases}
\end{align}
and, for all $i\in\{1,\dots,p\}$, the self-excitation intensities at time
$T_{n+1}$ are given by
\begin{align}
\label{eq:intensity-dynamic}
\boldsymbol{\lambda}_{n+1}(i)=\boldsymbol{\lambda}(T_{n+1}, i) = \boldsymbol{\lambda}_{n}(i)\rme^{-\beta \Delta_{n+1}} + \beta \alpha_{i,I_{n+1}} \;.
\end{align}
Observe that, for all $T_n\leq t< T_{n+1}$, the continuous time process is
interpolated as
\begin{align*}
\mathbf{S}(t) & = \mathbf{S}_{n}\\
  \boldsymbol{\lambda}(t,i) &=\boldsymbol{\lambda}_n(i) \rme^{-\beta t}\;.
\end{align*}

For any probability distribution
$\nu$ on $\mathsf{X}=\mathbb{Z}_+^q \times \mathbb{R}_+^p$, we denote by
$\mathbb{P}^{\nu}$ the probability corresponding to the initial distribution
$\nu$ at time $t=0$. Notation $\mathbb{E}^{\nu}$ corresponds to the
corresponding expectation. If $\nu$ is a Dirac distribution at point
$\mathbf{x}\in\mathsf{X}$, we will simply write $\mathbb{P}^{\mathbf{x}}$ and
$\mathbb{E}^{\mathbf{x}}$.

The above facts thus imply that
$\mathbf{X}_n=(\mathbf{S}_{n},\boldsymbol{\lambda}_{n})$, $\mathbf{Y}_n =
({I}_n, \mathbf{S}_{n},\boldsymbol{\lambda}_{n})$ and
$\mathbf{Z}_n=(\Delta_n,I_n, \mathbf{S}_n,{\boldsymbol{\lambda}}_n)$, defined
for $n=0,1,\dots$ are Markov chains respectively valued in
$\mathsf{X}=\mathbb{Z}_+^q \times \mathbb{R}_+^p$, $\mathsf{Y} = \{0,1, \dots,
p\} \times\mathbb{Z}_+^q \times \mathbb{R}_+^p$ and $\mathsf{Z}=\mathbb{R}_+
\times \{0,\dots,p\} \times \mathbb{Z}_+^q \times \mathbb{R}_+^p$. We shall
denote by $Q$, $\tilde{Q}$ and $\overline{Q}$ the transition kernels of these
Markov processes, respectively.  We shall use these three transition kernels,
depending on the context. As we shall see, the ergodicity of $\tilde{Q}$ and
$\overline{Q}$ essentially follow from that of $Q$. However, the kernel
$\tilde{Q}$ turned out to be easier to handle for proving the ergodicity when
$q\geq2$, see Theorem~\ref{thm:geometrical-ergodicqGeQ2} below. This is
the reason why we introduced the chain $(\mathbf{Y}_n)_{n\geq0}$. The chain
$(\mathbf{Z}_n)_{n\geq0}$ was introduced because the processes
$\mathbf{N}([0,t])$ and $\mathbf{S}(t)$ can be deterministically expressed for
all $t\in\mathbb{R}_+$ using this chain. This fact is used in
Theorem~\ref{thm:fclt-cont-time}.

\begin{remark}
\label{rem:IvanishesInInitialCondition}
Observe that, from the
description above, the conditional distribution of $(\Delta_1,{I}_1,
\mathbf{S}_{1},\boldsymbol{\lambda}_{1})$ given $(\Delta_0,{I}_0,
\mathbf{S}_{0},\boldsymbol{\lambda}_{0})$ does not depend on $(\Delta_0,I_0)$. In
particular, the whole path $\{\mathbf{Z}_n,\,n\geq1\}$ only depends on the
initial condition set on $\mathbf{X}_0$ and we may write, for all
$\mathbf{z}=(\delta,i,\mathbf{x})\in \mathsf{Z}$ and all $A \in \mathcal{B}(\mathsf{Z})$,
\begin{gather}
  \label{eq:Qtilde}
\overline{Q}(\mathbf{z},A)= \mathbb{P}^{\mathbf{x}}((\Delta_1,I_1,\mathbf{X}_{1}) \in A)\;.
\end{gather}
Clearly the same remark holds for $\tilde{Q}$, namely, $\tilde{Q}(\mathbf{y},A)=
\mathbb{P}^{\mathbf{x}}((I_1,\mathbf{X}_{1}) \in A)$ for all
$\mathbf{y}=(i,\mathbf{x})\in\mathsf{Y}$ and $A \in \mathcal{B}(\mathsf{Y})$.
\end{remark}
Because the ergodicity of $Q$ will be proved using an induction on the number
of constraints $q$, we need to introduce further Markov chains and transition kernels.
To initiate the induction, we define the Markov chain $\{(\check{I}_n, \check{\boldsymbol{\lambda}}_n),\,n\geq0\}$
as the chain valued in  $\{0,1, \dots, p\}\times\mathbb{R}_+^q$ that starts at the same state as $(I_0,\boldsymbol{\lambda}_0)$ but with transition kernel $\check{Q}$, which is defined as $\tilde{Q}$ but with all the $\mathbf{A}_i$'s
replaced by the empty set and without the spread variable. We will call this chain the \emph{unconstrained Hawkes embedded chain}, since it is associated to a classical (unconstrained) multivariate Hawkes point process. This chain corresponds to the case $q=0$.
\begin{remark}
\label{rem:Icheck}
As in the case $q\geq1$, the conditional distribution of $\{(\check{I}_n,
\check{\boldsymbol{\lambda}}_n),\,n\geq1\}$ given the initial condition $(\check{I}_0,
\check{\boldsymbol{\lambda}}_0)$ does not depend on $\check{I}_0$. Hence we
will use the notation  $\mathbb{E}^{\boldsymbol{\ell}}$ or
$\mathbb{P}^{\boldsymbol{\ell}}$ to underline this fact, for instance, for
any $\mathbf{y}=(i,\boldsymbol{\ell})\in\{0,1, \dots,
p\}\times\mathbb{R}_+^p$, $j\in\{0,1, \dots,
p\}$ and Borel subset $A\subset\mathbb{R}_+^p$,
$$
\check{Q}(\mathbf{y},\{j\}\times A)= \mathbb{P}^{\boldsymbol{\ell}}(\check{I}_1=j, \check{\boldsymbol{\lambda}}_1 \in A) \;.
$$
\end{remark}
Furthermore, for any $\mathcal{J}\subseteq\{1,\dots,q\}$, we denote by
$\tilde{Q}^{(-\mathcal{J})}$ the transition kernel defined on
$\{0,1,\dots,p\}\times\mathbb{Z}_+^{q-\#\mathcal{J}}\times\mathbb{R}_+^p$
defined as the transition kernel $Q$ but without the spread variable
$\mathbf{S}_j$, $j\in\mathcal{J}$ and their corresponding constraint sets
$\mathbf{A}_1(j),\dots,\mathbf{A}_p(j)$. In particular, we have
$\tilde{Q}^{(-\emptyset)}=\tilde{Q}$ and
$\tilde{Q}^{(-\{1,\dots,q\})}=\check{Q}$.
Meanwhile, for any $\mathcal{J}\subseteq\{1,\dots,q\}$, we denote by
$\tilde{Q}^{(+\mathcal{J})}$ the transition kernel defined on
$\{0,1,\dots,p\}\times\mathbb{Z}_+^{\#\mathcal{J}}\times\mathbb{R}_+^p$
defined as the transition kernel $Q$ with the spread variable
$\mathbf{S}_j$, $j\in\mathcal{J}$ and their corresponding constraints
$\mathbf{A}_1(j),\dots,\mathbf{A}_p(j)$. In particular
$\tilde{Q}^{(+\{1,\dots,q\})}=\tilde{Q}$ and
$\tilde{Q}^{(+\emptyset)}=\check{Q}$.

\section{Main results}
\label{sec:main-results}
We now present the main results of this work. All the results will rely on the
Markov assumption introduced in Section~\ref{sec:markov-assumption}. Under this
Markov framework, in Section~\ref{sec:irreduc}, we show that the kernel $Q$ and
$\tilde Q$ defined in Section~\ref{sec:markov-assumption} are
$\psi$-irreducible and aperiodic. We provide a partial drift condition in
Section~\ref{sec:partial-drift-condition} and then prove that $\check{Q}$ is
$V$-geometrically ergodic in Section \ref{sec:case-q-0}, where $V$ is unbounded
off petite sets. Then we study the ergodicity in the case $q=1$ in
Section~\ref{sec:case-q-1}. The general case is presented in
Section~\ref{sec:case-q-lager-1}. Finally, we determine the scaling limit of the
point process in \emph{physical time} in Section \ref{sec:scaling-limit}.

\subsection{Irreducibility}
\label{sec:irreduc}

Because of the constraints sets $\mathbf{A}_i$ and the function $\mathbf{J}$, the path of
the process $\mathbf{S}$ cannot evolve arbitrarily. The following definitions
will be useful.

\begin{definition}
\label{def:accessibility}
Let $m$ be a positive integer and $\mathbf{s} \in\mathbb{Z}_+^q$, the set of admissible paths $\mathcal{A}_m (\mathbf{s})$ is defined
as the set of $(j_1, \cdots, j_m)\in \{0,\dots,p\}^m$ such that
\begin{align}
\label{eq:def:accessibility}
 \mathbf{s} \in \mathbf{A}_{j_1}^c, \mathbf{s}+\mathbf{J}_o(j_1) \in \mathbf{A}_{j_2}^c, \cdots, \mathbf{s}+\sum_{n=1}^{m-1} \mathbf{J}_o(j_n) \in
\mathbf{A}_{j_m}^c\;.
\end{align}
\end{definition}
An admissible path $(j_1, \cdots, j_m)\in\mathcal{A}_m (s)$ implies that, for
any $i=0,1,\dots,p$ and $\boldsymbol{\ell}\in(0,\infty)^p$, given
$\mathbf{Y}_0=(i,\mathbf{s},\boldsymbol{\ell})$ the conditional probability to
have $I_1=j_1,\dots,I_m=j_m$ (and thus, by~(\ref{eq:recurrence-of-S})
$\mathbf{S}_k=\mathbf{s}+\sum_{n=1}^{k} \mathbf{J}_o(j_n)$ for all
$k=1,2,\dots,m$) is positive.

The main assumption of this section is the following one.

\begin{assumption}
\label{ass:access}
The fertility matrix $\aleph$ defined in~(\ref{eq:aleph-def}) is invertible.
Moreover, if $q\geq1$, there exists $\mathbf{s}_o \in\mathbb{Z}_+^q$ such that
the following assertion holds. For all integer $K \geq 1$, there exists an
integer $m\geq p+1$ such that, for all $\mathbf{s} \in \{1,\dots, K\}^q$, there
exists an admissible path $(j_1, \cdots, j_m) \in \mathcal{A}_m (\mathbf{s})$
such that $\mathbf{s} +\sum_{n=1}^m \mathbf{J}_o(j_n) = \mathbf{s}_o$ and
$\{j_{m-p+1}, \cdots, j_{m}\} = \{1, \cdots, p\}$ (possibly in a different
order).
\end{assumption}

We say that $C$ is an
$(m,\epsilon,\nu)$-small set for the kernel $Q$ if, for all
$\mathbf{x}\in C$ and $A \in \mathcal{B}(\mathsf{X})$,
$$
Q^m(\mathbf{x},A)\geq\epsilon\,\nu(A)\;,
$$
where $m$ is a positive integer, $\epsilon>0$ and $\nu$ a probability measure.
Following \cite{meyn-tweedie-2009}, the existence of small sets is related to
$\psi$-irreducibility.  The assumption on the invertibility of $\aleph$ will be
useful for proving the $\psi$--irreducibility of the embedded Markov chain.
The second assumption says that there exists an admissible path (see
Definition~\ref{def:accessibility}) with the last $p$ steps containing the $p$
different marks. We will check this condition for the application of the
constrained Hawkes process to a limit order book in
Section~\ref{sec:sclobchp-application}. It will be used to establish the
existence of small sets for $Q$, see Proposition~\ref{prop:general-smallset}.
Most of the time, we will rather refer to the notion of petite sets, which is a
slight extension of small sets. Following \cite{meyn-tweedie-2009}, we say that
$A$ is a petite set for the kernel $Q$ if, for some probability measure $a$ on $\{0,1,2,\dots\}$, $\epsilon>0$ and some
probability measure $\nu$ on $\mathsf{X}$, it is a
$(1,\epsilon,\nu)$-small set for the $a$-sampled chain which has kernel
$\sum_{m\geq0} a(m) Q^m$.

We can now state the main result of this section.
\begin{theorem}
\label{thm:aper-irred}
Let $Q$ and $\tilde{Q}$ be the transition kernels defined in
Section~\ref{sec:markov-assumption} on the spaces
$\mathbb{Z}_+^q\times\mathbb{R}_+^p$ and
$\{0,1,\dots,p\}\times\mathbb{Z}_+^q\times\mathbb{R}_+^p$, respectively, with
$p\geq1$ and $q\geq0$. Suppose moreover that
Assumption~\ref{ass:access} holds.  Then then kernels $Q$ and $\tilde Q$ are
aperiodic and $\psi$-irreducible. Moreover for all $K\geq1$ and $M>0$,
$\{1,\dots,K\}^q\times(0,M]^p$ and
$\{0,\dots,p\}\times\{1,\dots,K\}^q\times(0,M]^p$ are petite sets for $Q$ and
$\tilde Q$, respectively.
\end{theorem}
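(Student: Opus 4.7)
My plan is to derive the petite set property first; $\psi$-irreducibility and aperiodicity then follow routinely from a small-set minorization and one extra step. Fix $K\geq1$ and $M>0$, and set $C=\{1,\dots,K\}^q\times(0,M]^p$. By Assumption~\ref{ass:access} there is $m_1\geq p+1$ such that for every $\mathbf{s}\in\{1,\dots,K\}^q$ an admissible path $(j^{\mathbf{s}}_1,\dots,j^{\mathbf{s}}_{m_1})\in\mathcal{A}_{m_1}(\mathbf{s})$ drives $\mathbf{S}$ from $\mathbf{s}$ to $\mathbf{s}_o$ with $\{j^{\mathbf{s}}_{m_1-p+1},\dots,j^{\mathbf{s}}_{m_1}\}=\{1,\dots,p\}$. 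Applying the same assumption at $\mathbf{s}_o$ itself (with any $K'$ large enough to contain $\mathbf{s}_o$) also furnishes a return path $(j^\circ_1,\dots,j^\circ_{m_2})\in\mathcal{A}_{m_2}(\mathbf{s}_o)$ looping $\mathbf{s}_o$ to itself. The concatenation $\mathbf{j}^{\mathbf{s}}=(j^{\mathbf{s}}_1,\dots,j^{\mathbf{s}}_{m_1},j^\circ_1,\dots,j^\circ_{m_2})$ is an admissible path of length $m=m_1+m_2$ from $\mathbf{s}$ to $\mathbf{s}_o$, and the key structural observation is that, since $m_2\geq 1$, the segment $(j^{\mathbf{s}}_{m_1-p+1},\dots,j^{\mathbf{s}}_{m_1})$ lies strictly inside positions $\{1,\dots,m-1\}$, so $\{j^{\mathbf{s}}_1,\dots,j^{\mathbf{s}}_{m-1}\}\supseteq\{1,\dots,p\}$.

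The hazard rates~(\ref{eq:hazard-rate}) are bounded above and below uniformly for $\boldsymbol{\ell}\in(0,M]^p$, so the probability of realizing the mark sequence $(I_1,\dots,I_m)=\mathbf{j}^{\mathbf{s}}$ is bounded below by some $\eta>0$ uniformly in $\mathbf{x}=(\mathbf{s},\boldsymbol{\ell})\in C$, and conditionally on this sequence $(\Delta_1,\dots,\Delta_m)$ has a joint density bounded below on compact subsets of $(0,\infty)^m$. Iterating~(\ref{eq:intensity-dynamic}) along $\mathbf{j}^{\mathbf{s}}$ yields
\[
\boldsymbol{\lambda}_m \;=\; b_0\,\boldsymbol{\ell}\;+\;\beta\aleph\sum_{k=1}^{m}b_k\,\mathbf{e}_{j_k}\;,
\]
where $b_k=\rme^{-\beta(T_m-T_k)}$, $b_m=1$, $\mathbf{e}_j$ is the standard basis vector of $\mathbb{R}^p$ with the convention $\mathbf{e}_0:=0$, and $(b_0,\dots,b_{m-1})$ ranges over the open simplex $\{0<b_0<\cdots<b_{m-1}<1\}$ as $(\Delta_1,\dots,\Delta_m)$ ranges over $(0,\infty)^m$. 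The Jacobian of this map with respect to $(b_0,\dots,b_{m-1})$ is the $p\times m$ matrix $[\boldsymbol{\ell}\mid\beta\aleph\mathbf{e}_{j_1}\mid\cdots\mid\beta\aleph\mathbf{e}_{j_{m-1}}]$; by the preceding paragraph its columns contain $\aleph\mathbf{e}_j$ for every $j\in\{1,\dots,p\}$, and these span $\mathbb{R}^p$ by the invertibility of $\aleph$ in Assumption~\ref{ass:access}, so the Jacobian has rank $p$. Hence the conditional law of $\boldsymbol{\lambda}_m$ has a density strictly positive on some non-empty open set $U\subset\mathbb{R}_+^p$, and a routine continuity-and-compactness argument on the compact closure $[0,M]^p$ provides a uniform lower bound. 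Combined with $\eta$, this yields $Q^m(\mathbf{x},\cdot)\geq\varepsilon\,\nu$ with $\nu$ supported on $\{\mathbf{s}_o\}\times U$, uniformly in $\mathbf{x}\in C$—the desired small/petite set property.

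For aperiodicity, I insert one extra mark-$0$ arrival (from the auxiliary PPP $\mathrm{n}_0$) somewhere along $\mathbf{j}^{\mathbf{s}}$: by~(\ref{eq:conv-Jzero}) this leaves $\mathbf{S}$ untouched and only scales the intensities by an extra exponential factor, so the identical rank argument produces a minorization of $Q^{m+1}(\mathbf{x},\cdot)$ by a comparable measure, and $\gcd(m,m+1)=1$ forces aperiodicity in the Meyn--Tweedie sense. Non-triviality of $\nu$ then provides a maximal irreducibility measure, and any $\mathbf{x}\in\mathsf{X}$ can be funnelled into $C$ by applying Assumption~\ref{ass:access} at $\mathbf{s}$ (taking $K$ large) and waiting while $\boldsymbol{\lambda}$ decays exponentially into $(0,M]^p$. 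The argument for $\tilde Q$ is identical once the mark coordinate is tracked: appending one final mark-$0$ step forces $I_m=0$ uniformly, placing the minorizing measure on $\{0\}\times\{\mathbf{s}_o\}\times U$ independently of the starting state. The main obstacle is the rank-$p$ property of the Jacobian: a single application of the assumption's path may leave only $p-1$ of the columns $\aleph\mathbf{e}_{j_k}$ available (when the last mark $j_m$ occurs only at position $m$), allowing the Jacobian to degenerate for boundary $\boldsymbol{\ell}$; the concatenation with a return path bypasses this uniformly on $C$.
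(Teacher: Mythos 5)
Your proof is correct and shares the paper's overall architecture: produce an admissible path from Assumption~\ref{ass:access}, bound the joint density of the marks and inter-arrival times from below via the hazard rates (the role played by Lemma~\ref{lem:lower-bound-small-set-step-1} and Corollary~\ref{cor:lower-bound-small-set-step-m} in the paper), change variables to the coefficients $b_k=\rme^{-\beta(T_m-T_k)}$ so that $\boldsymbol{\lambda}_m$ becomes an affine image of an open simplex, and derive aperiodicity from simultaneous $(m,\epsilon,\nu)$- and $(m+1,\epsilon,\nu)$-smallness by inserting a mark-$0$ arrival from $\mathrm{n}_0$. The genuine difference is your concatenation with a return loop at $\mathbf{s}_o$. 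The paper does not do this: it applies its change-of-variables lemma (Lemma~\ref{lem:general-smallset}, which requires $p$ free simplex coordinates multiplying the $p$ columns of $\aleph$) directly to the last $p$ steps of the path, where the coefficient of $\aleph_{j_m}$ is pinned to $1$ and only $p-1$ columns of $\aleph$ carry free coefficients; the missing direction must then be supplied by $\boldsymbol{\ell}$ itself, which degenerates when $\boldsymbol{\ell}$ lies in the span of the other $p-1$ columns and in any case ruins uniformity of the minorizing measure over $\boldsymbol{\ell}\in(0,M]^p$ (already visible for $p=1$ with a path $(0,\dots,0,1)$, where the support of $\boldsymbol{\lambda}_m$ is the interval $(\beta\alpha_{1,1},\beta\alpha_{1,1}+\ell)$, shrinking with $\ell$). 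Your structural observation --- that every mark must occur at some position $k<m$ so that its column enters with a genuinely free coefficient $b_k$ --- is exactly what is needed, and the return loop supplies it at no cost, since Assumption~\ref{ass:access} applied at $\mathbf{s}=\mathbf{s}_o$ yields the loop and admissible paths concatenate. So your route is the more careful execution of the shared idea.

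One point you should make explicit. For $\psi$-irreducibility you need a single irreducibility measure valid for every starting point, which is why the paper stresses that its $\nu_1$ depends only on $\beta$ and $\aleph$, not on $K$ or $M$. As written, your open set $U$ (hence $\nu$) is extracted from the entire concatenated path and therefore a priori depends on $m$, i.e.\ on $K$. Either arrange the minorization so that $\nu$ is read off from a fixed terminal segment only (e.g.\ append two copies of the return loop, so that all $p$ marks already occur strictly before the last position of that fixed segment, with only $\epsilon$ depending on $K$ and $M$), or fix one reference measure and add a short chaining step. This is bookkeeping rather than a gap, but the statement ``$\nu$ independent of $K,M$'' is doing real work in the irreducibility argument and deserves a sentence.
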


The proof is postponed to Section~\ref{sec:small-sets}.

\subsection{Partial drift condition}
\label{sec:partial-drift-condition}

We now derive a partial drift condition that will be useful to obtain a
``complete'' drift condition on the process $\{\mathbf{X}(t),\,t\geq0\}$. In
the following result, the drift condition is partial in the sense that it does
not control $\mathbf{S}_n$. It only says that, under Assumption~\ref{ass:Phi},
independently of the process $\mathbf{S}_n$, $\boldsymbol{\lambda}_n$ should not
have large excursions away of a compact set.

\begin{proposition}
\label{prop:drift-step-1}
Let $Q$ be the kernel defined in Section~\ref{sec:markov-assumption}  on the space
$\mathbb{Z}_+^q\times\mathbb{R}_+^p$  with $p\geq1$ and $q\geq0$.
Suppose that Assumption~\ref{ass:Phi} holds. Then there exists $\gamma>0$,
$\theta \in (0,1)$, $M > 0$ and $b > 0$ such that, for all $\mathbf{s} \in \mathbb{Z}_+^q$ and
$\boldsymbol{\ell} \in \mathbb{R}_+^p$, we have
\begin{align}
\label{eq:drift-step-1}
[Q(\1_{\mathbb{Z}_+^q}\otimes V_{1,\gamma})](\mathbf{s},\boldsymbol{\ell})
\leq \theta \; V_{1,\gamma}(\boldsymbol{\ell}) +
b \1_{(0, M]^p}(\boldsymbol{\ell}) \;,
\end{align}
where
\begin{equation}
  \label{eq:V1}
V_{1,\gamma}(\boldsymbol{\ell}) = \mathrm{e}^{\gamma \mathbf{u}^T
  \boldsymbol{\ell}}\;,
\end{equation}
with $\mathbf{u}$ defined as in~(\ref{eq:defu}).
\end{proposition}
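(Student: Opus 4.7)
The plan is to reduce the constrained problem to the unconstrained case ($q=0$) via a stochastic-domination coupling, and then to establish the drift for $\check Q$ by direct integration against the law of $(\Delta_1,\check I_1)$. The crucial algebraic fact is that $\mathbf{u}^T\aleph = \mathbf{u}^T - \IdVect_p^T$, which follows at once from $(\IdMat_p - \aleph^T)\mathbf{u} = \IdVect_p$. Combined with~\eqref{eq:intensity-dynamic} and the convention $\alpha_{\cdot,0} = 0$, it yields the clean one-step update
$$
\mathbf{u}^T\boldsymbol{\lambda}_1 = L\,e^{-\beta\Delta_1} + \beta(u(I_1) - 1)\,\1_{\{I_1\geq 1\}},\qquad L := \mathbf{u}^T\boldsymbol{\ell},
$$
so that $V_{1,\gamma}(\boldsymbol{\lambda}_1)$ factors as $\exp(\gamma L e^{-\beta\Delta_1})$ times a bounded mark-dependent factor. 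Note also that $\mathbf{u} = \sum_{k\geq 0}\aleph^k\IdVect_p \geq \IdVect_p$, so $V_{1,\gamma}$ is coordinatewise nondecreasing.

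For the reduction, inspection of~\eqref{eq:hazard-rate} shows that the total hazard rate $\HR_0 + \sum_{i=1}^p\HR_i$ coincides with its unconstrained counterpart (the indicators $\1_{\mathbf{A}_i^c}(\mathbf{s})$ and $\1_{\mathbf{A}_i}(\mathbf{s})$ inside $\HR_0$ add to one for each $i$). One may therefore couple $\Delta_1 = \check\Delta_1$ and then generate $I_1$ from $\check I_1$ by thinning: set $I_1 = \check I_1$ when $\check I_1 = 0$ or $\mathbf{s}\notin\mathbf{A}_{\check I_1}$, and $I_1 = 0$ otherwise. This produces the correct conditional law of $I_1$ and forces $\alpha_{\cdot,I_1}\leq\alpha_{\cdot,\check I_1}$ coordinatewise, hence $\boldsymbol{\lambda}_1\leq\check{\boldsymbol{\lambda}}_1$ componentwise and $\mathbb{E}^{(\mathbf{s},\boldsymbol{\ell})}[V_{1,\gamma}(\boldsymbol{\lambda}_1)]\leq \mathbb{E}^{\boldsymbol{\ell}}[V_{1,\gamma}(\check{\boldsymbol{\lambda}}_1)]$. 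It thus suffices to establish~\eqref{eq:drift-step-1} for $\check Q$.

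For $\check Q$, writing $D = \IdVect_p^T\boldsymbol{\ell}$, $C = \mu_0^0 + \sum_i\boldsymbol{\mu}_0(i)$ and $\rho_i = e^{\gamma\beta(u(i)-1)}$ (with $\rho_0 = 1$), I would compute the expectation explicitly and substitute $s = e^{-\beta t}$ to obtain a Laplace-type integral on $[0,1]$ with exponent $(\gamma L + D/\beta)s$ and algebraic weight $s^{C/\beta - 1}$. Localization near $s = 1$ gives, for large $D$,
$$
\frac{\mathbb{E}^{\boldsymbol{\ell}}[V_{1,\gamma}(\check{\boldsymbol{\lambda}}_1)]}{V_{1,\gamma}(\boldsymbol{\ell})} = \frac{C_\rho + D_\rho}{\gamma\beta L + D}\bigl(1 + O(1/(\gamma L + D))\bigr),
$$
with $C_\rho \to C$ and $D_\rho = D + \gamma\beta(L - D) + O(\gamma^2 D)$, using $\sum_i\boldsymbol{\ell}(i)(u(i) - 1) = L - D$. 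The leading prefactor collapses to $1 - \gamma\beta D/(\gamma\beta L + D) + O(\gamma^2)$, which, since $L\leq\|\mathbf{u}\|_\infty D$, is bounded above by $1 - \gamma\beta/(1 + \gamma\beta\|\mathbf{u}\|_\infty) + O(\gamma^2)$ uniformly in the direction of $\boldsymbol{\ell}$. Choosing $\gamma$ small enough then produces some $\theta\in(0,1)$ for $\boldsymbol{\ell}$ with large $D$, and on the residual bounded region $V_{1,\gamma}$ is itself bounded so the remainder is absorbed into the constant $b$ with the indicator $\1_{(0,M]^p}$.

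The main technical obstacle will be making the Laplace expansion effective and uniform over all $\boldsymbol{\ell}\in\mathbb{R}_+^p$, in particular over boundary configurations where some $\boldsymbol{\ell}(i)$ vanish. This requires either explicit error control on $\int_0^1 e^{as}s^{C/\beta - 1}\,ds$ for large $a$, or a direct non-asymptotic pointwise comparison of the integrand to a multiple of $e^{a}$, from which the explicit constants $\gamma$, $\theta$, $M$ and $b$ can be extracted.
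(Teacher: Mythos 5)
Your proposal is correct and follows essentially the same route as the paper: the identity $\aleph^T\mathbf{u}=\mathbf{u}-\IdVect_p$ driving the one-step update, the reduction to the unconstrained hazard rates (your thinning coupling is the probabilistic form of the paper's pointwise bound $\sum_i \rme^{\gamma\beta(\mathbf{u}(i)-1)}\HR_i(u)\leq \mu_0^0+\sum_{i\geq1}\rme^{\gamma\beta(\mathbf{u}(i)-1)}(\boldsymbol{\mu}_0(i)+\boldsymbol{\ell}(i)\rme^{-\beta u})$, both resting on $\mathbf{u}\geq\IdVect_p$), and then the same Laplace integral with substitution $\vartheta=\rme^{-\beta t}$ and the Taylor expansion yielding $\theta=1-\beta\gamma/(1+\beta\gamma\overline{u})+O(\gamma^2)<1$. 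The uniformity issue you flag at the end is resolved in the paper exactly as you suggest, by dominated convergence in the rescaled Laplace integral (Lemma~\ref{lem:appendix-asymptotic-equivalence}) combined with taking the supremum over directions of $\boldsymbol{\ell}$ before letting $\gamma$ be small.
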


The proof is postponed to Section~\ref{sec:proof-partial-drift-condition}.

\subsection{The case $q=0$}
\label{sec:case-q-0}

If $q=0$, the process $\mathbf{N}$ is a standard (unconstrained) multivariate
Hawkes process. In this case the process $\{\boldsymbol{\lambda}_n,\,n\geq1\}$
is a Markov chain (the spread variable $\mathbf{S}_n$ vanishes) and the
``partial'' drift condition of Proposition~\ref{prop:drift-step-1} becomes a
``complete'' drift condition for this chain. Moreover, if $q=0$ in
Theorem~\ref{thm:aper-irred}, Assumption~\ref{ass:access} boils down to
assuming that the fertility matrix $\aleph$ defined in~(\ref{eq:aleph-def}) is
invertible. We also note that the sublevel sets of $V_{1,\gamma}$ are petite
sets by Theorem~\ref{thm:aper-irred}.  Hence we obtain that
$\{\check{\boldsymbol{\lambda}}_n,\,n\geq0\}$ is $V_{1,\gamma}$-geometrically
ergodic, see~\cite[Chapter~15]{meyn-tweedie-2009} (and also
\cite[Theorem~9.1.8]{meyn-tweedie-2009} to get the Harris recurrence).  In fact
the same result holds on the extended chain
$\{(\check{I}_n,\check{\boldsymbol{\lambda}}_n),\,n\geq0\}$, whose transition
kernel has been denoted by $\check{Q}$ in Section~\ref{sec:markov-assumption}.
\begin{proposition}
  \label{prop:embeddedunconstrained-geom-erg}
  Let $\check{Q}$ be the transition kernel defined in
  Section~\ref{sec:markov-assumption} on the space
  $\{0,1,\dots,p\}\times\mathbb{R}_+^p$. Suppose that $\aleph$ is invertible
  and that Assumption~\ref{ass:Phi} holds.  Then there exists $\gamma>0$ such
  that $\check{Q}$ is $(\1_{\{0,\dots,p\}} \otimes V_{1,\gamma})$-geometrically
  ergodic, where $V_{1,\gamma}$ is defined in~(\ref{eq:V1}).
\end{proposition}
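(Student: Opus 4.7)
The plan is to deduce the proposition directly from the results already established in the excerpt, by applying the standard Meyn--Tweedie machinery for $V$-geometric ergodicity. The three ingredients to assemble are: (i) $\psi$-irreducibility and aperiodicity of $\check Q$, (ii) a geometric drift condition $\check Q V \leq \theta V + b \1_C$ with $V=\1_{\{0,\dots,p\}}\otimes V_{1,\gamma}$ and $C$ petite, and (iii) Harris recurrence, which upgrades the abstract convergence to the ergodicity statement.

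For (i), I would specialize Theorem~\ref{thm:aper-irred} to $q=0$. Since $\check Q$ coincides with $\tilde Q$ when the constraint sets are empty and the constraint variable is removed, and since Assumption~\ref{ass:access} collapses in this case to the invertibility of $\aleph$ (already assumed), the theorem yields that $\check Q$ is $\psi$-irreducible and aperiodic, and that for every $M>0$ the set $\{0,\dots,p\}\times(0,M]^p$ is petite for $\check Q$.

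For (ii), I would apply Proposition~\ref{prop:drift-step-1} with $q=0$: under Assumption~\ref{ass:Phi}, there exist $\gamma>0$, $\theta\in(0,1)$, $M>0$ and $b>0$ such that $[QV_{1,\gamma}](\boldsymbol{\ell})\leq\theta V_{1,\gamma}(\boldsymbol{\ell})+b\1_{(0,M]^p}(\boldsymbol{\ell})$, where in the case $q=0$ the kernel $Q$ of Proposition~\ref{prop:drift-step-1} governs exactly the $\check{\boldsymbol{\lambda}}_n$--marginal of the chain driven by $\check Q$. Invoking Remark~\ref{rem:Icheck}, which asserts that the conditional law of $\check{\boldsymbol{\lambda}}_1$ given the current state is a function only of $\check{\boldsymbol{\lambda}}_0$ and not of $\check I_0$, I can lift this estimate verbatim to $\check Q$ acting on $V(i,\boldsymbol{\ell})=V_{1,\gamma}(\boldsymbol{\ell})$, obtaining
\begin{equation*}
[\check Q(\1_{\{0,\dots,p\}}\otimes V_{1,\gamma})](i,\boldsymbol{\ell}) \leq \theta\, V_{1,\gamma}(\boldsymbol{\ell}) + b\,\1_{\{0,\dots,p\}\times(0,M]^p}(i,\boldsymbol{\ell}),
\end{equation*}
with the set on the right-hand side petite for $\check Q$ by step~(i). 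The Lyapunov function $V_{1,\gamma}=\mathrm{e}^{\gamma\mathbf{u}^T\boldsymbol{\ell}}$ satisfies $V_{1,\gamma}\geq1$ on $\mathbb{R}_+^p$ since $\mathbf{u}\geq\IdVect_p$ by~\eqref{eq:defu}, which is the standard normalization required for $V$-geometric ergodicity.

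For (iii) and the conclusion, combining (i) and (ii) I would apply \cite[Theorem~15.0.1]{meyn-tweedie-2009} to obtain $V$-geometric convergence of the $n$-step transition kernel of $\check Q$ to its invariant measure in $V$-norm, and \cite[Theorem~9.1.8]{meyn-tweedie-2009} to upgrade $\psi$-irreducibility plus the drift condition to Harris recurrence, whence $\check Q$ is $(\1_{\{0,\dots,p\}}\otimes V_{1,\gamma})$-geometrically ergodic. In truth there is no real obstacle here: the whole proof is an application of the previously established theorems, and the only mild point to verify is the compatibility between the ``marginal'' drift condition provided by Proposition~\ref{prop:drift-step-1} and the extended state space of $\check Q$, which is handled by Remark~\ref{rem:Icheck}.
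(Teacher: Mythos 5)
Your proposal is correct and follows essentially the same route as the paper's own proof: specialize Theorem~\ref{thm:aper-irred} and Proposition~\ref{prop:drift-step-1} to $q=0$, invoke the standard drift/petite-set machinery of \cite[Chapter~15]{meyn-tweedie-2009} together with \cite[Theorem~9.1.8]{meyn-tweedie-2009} for Harris recurrence, and lift the conclusion from the $\check{\boldsymbol{\lambda}}_n$-chain to $\check{Q}$ using that the conditional law of $(\check{I}_1,\check{\boldsymbol{\lambda}}_1)$ does not depend on $\check{I}_0$. The only addition you make is the explicit check that $V_{1,\gamma}\geq1$, which the paper leaves implicit.
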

\begin{proof}
  As explained above, Theorem~\ref{thm:aper-irred} and
  Proposition~\ref{prop:drift-step-1} apply to the case $q=0$ and,
  as a consequence,
  $\{\check{\boldsymbol{\lambda}}_n,\,n\geq0\}$ is a $V_{1,\gamma}$-geometrically
  ergodic. This conclusion also applies for the kernel $\check{Q}$ of the chain
  $\{(\check{I}_n,\check{\boldsymbol{\lambda}}_n),\,n\geq0\}$ with
  $V_{1,\gamma}$ replaced by $(\1_{\{0,\dots,p\}} \otimes V_{1,\gamma})$
  because the conditional probability of $(\check{I}_1,
  \check{\boldsymbol{\ell}}_1)$ given $(\check{I}_0,
  \check{\boldsymbol{\ell}}_0)$ does not depend on $\check{I}_0$.
\end{proof}
Having proved the ergodicity of $\check{Q}$, we can establish the following
result on its stationary distribution. It will be used to obtain the case
$q=1$.
\begin{corollary}
    \label{cor:moments-checkQvsN}  Under the same assumptions as
  Proposition~\ref{prop:embeddedunconstrained-geom-erg}, denote by
  $\check{\pi}$ the stationary distribution of  $\check{Q}$. Let
  $\mathbf{w}:\{1,\dots,p\}\to\mathbb{R}$. Let $\mathbf{w}_o$ be the extension
  of $\mathbf{w}$ to $\{0,1,\dots,p\}$ obtained by setting
  $\mathbf{w}_o(0)=0$. Then we have
  \begin{equation}
    \label{eq:moments-checkQvsN}
   \check{\pi}(\mathbf{w}_o\otimes\1_{\mathbb{R}_+^p})=\frac{\overrightarrow{\mathbf{w}}^T(\IdMat_p-\aleph)^{-1}\boldsymbol{\mu}_0}{\mu_0^0+\IdVect_p^T(\IdMat_p-\aleph)^{-1}\boldsymbol{\mu}_0}\;.
  \end{equation}
\end{corollary}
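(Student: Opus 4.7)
Since $\mathbf{w}_o(0)=0$, the left-hand side of~\eqref{eq:moments-checkQvsN} reduces to $\sum_{i=1}^p\mathbf{w}(i)\,\pi_i$ with $\pi_i:=\check\pi(\{i\}\times\mathbb{R}_+^p)$, so the plan is to identify each marginal frequency $\pi_i$ separately and then contract against $\overrightarrow{\mathbf{w}}$. Concretely, I aim to show
$$
\pi_i=\frac{\bar\mu'_i}{\mu_0^0+\IdVect_p^T\bar{\boldsymbol{\mu}}'},\qquad \bar{\boldsymbol{\mu}}':=(\IdMat_p-\aleph)^{-1}\boldsymbol{\mu}_0\;,
$$
by interpreting $\pi_i$ as the long-run frequency of mark $i$ in the embedded chain $\{\check I_n\}$ and recognising $\bar\mu'_i$ as the stationary mean intensity of the $i$-th component of the underlying unconstrained multivariate Hawkes process. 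Summing $\mathbf{w}(i)\pi_i$ over $i$ would then yield the claimed ratio.

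First I would apply Proposition~\ref{prop:embeddedunconstrained-geom-erg}: since $\check Q$ is Harris ergodic, the strong law of large numbers gives $n^{-1}\sum_{k=1}^n\1_{\{\check I_k=i\}}\to\pi_i$ a.s.\ for $\check\pi$-a.e.\ starting point. Next, the sampling construction of Section~\ref{sec:markov-assumption} translates this into continuous-time quantities: writing $\check N'_j$ for the $j$-th component of the unconstrained Hawkes process and $\mathrm{n}_0$ for the auxiliary rate-$\mu_0^0$ PPP, we have $\sum_{k=1}^n\1_{\{\check I_k=i\}}=\check N'_i((0,T_n])$ for $i\geq1$ and $n=\mathrm{n}_0((0,T_n])+\sum_{j=1}^p\check N'_j((0,T_n])$. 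Dividing numerator and denominator by $T_n$ and letting $n\to\infty$, the identification of $\pi_i$ reduces to the continuous-time a.s.\ limits
$$
T^{-1}\check N'_i((0,T])\to\bar\mu'_i,\qquad T^{-1}\mathrm{n}_0((0,T])\to\mu_0^0\;.
$$
The second is immediate. For the first I would invoke the classical Campbell-formula argument for a stationary multivariate Hawkes process (see e.g.~\cite{bremaud-massoulie-1996,daley-vere-jones-2003}), which applied to the intensity equation yields $\bar{\boldsymbol{\mu}}'=\boldsymbol{\mu}_0+\aleph\bar{\boldsymbol{\mu}}'$, hence $\bar{\boldsymbol{\mu}}'=(\IdMat_p-\aleph)^{-1}\boldsymbol{\mu}_0$; combined with the ergodicity of the continuous-time process $\check{\boldsymbol{\lambda}}(\cdot)$ inherited from the discrete chain via interpolation, this delivers the announced limit.

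The main obstacle is the rigorous transfer from the discrete-time ergodicity of $\check Q$ to the a.s.\ continuous-time LLN for $\check N'_i((0,T])$. If that route proves unwieldy, I would instead bypass continuous time altogether and compute $\pi_i$ directly from the invariance $\check\pi=\check\pi\check Q$: taking expectations of the recursion~\eqref{eq:intensity-dynamic} under $\check\pi$, together with the explicit form of the hazard rates~\eqref{eq:hazard-rate}, produces exactly the linear system $\bar{\boldsymbol{\mu}}'=\boldsymbol{\mu}_0+\aleph\bar{\boldsymbol{\mu}}'$ for the stationary intensities $\bar\mu'_i=\mathbb{E}^{\check\pi}[\boldsymbol{\mu}_0(i)+\boldsymbol{\lambda}_0(i)]$, while the Palm-type conditional probability $\mathbb{P}^{\check\pi}(\check I_1=i\mid\check{\boldsymbol{\lambda}}_0)$ averages to $\bar\mu'_i/(\mu_0^0+\IdVect_p^T\bar{\boldsymbol{\mu}}')$. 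Either route closes the argument after summing against $\mathbf{w}(i)$.
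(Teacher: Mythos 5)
Your proposal follows essentially the same route as the paper: the discrete-time LLN for the Harris ergodic chain $\check{Q}$, the counting identity $n=\mathrm{n}_0((0,\check{T}_n])+\sum_j\check{N}'_j((0,\check{T}_n])$, and the a.s.\ continuous-time LLN $T^{-1}\mathbf{N}'(\1_{[0,T]}\otimes\mathbf{w})\to\overrightarrow{\mathbf{w}}^T(\IdMat_p-\aleph)^{-1}\boldsymbol{\mu}_0$ for the unconstrained Hawkes process, combined to identify $\check{T}_n/n$. The ``main obstacle'' you flag is resolved in the paper exactly as you suggest: the continuous-time LLN is taken as a known result (cited from Bacry \emph{et al.}) rather than re-derived from the embedded chain, with only the a.s.\ divergence of $\check{T}_n$ needing a one-line check.
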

The proof is postponed to Section~\ref{sec:proof-coroll-refc}.

\subsection{The case $q=1$}
\label{sec:case-q-1}
We shall further denote
\begin{equation}
  \label{eq:def-sstar}
s^*=1+\max\left(\bigcup_{j=1,\dots,p}A_j\right)\;.
\end{equation}

In the case $q=1$, we need a drift function that applies to
$\mathbf{X}_n=({S}_n,\boldsymbol{\lambda}_n)$ which is unbounded off petite
sets, that is, the drift function must diverge as at least one component of
$\mathbf{X}_n$ goes to infinity, while, in Proposition~\ref{prop:drift-step-1}
the drift function goes to infinity only when one of the components of
$\boldsymbol{\lambda}_n$ goes to infinity. To this end we define
\begin{equation}
  \label{eq:V0}
V_{0,\gamma}(s) = \mathrm{e}^{\gamma s}\;.
\end{equation}
We introduce the following assumption.
\begin{assumption}
\label{ass:qequalsone}
The following inequality holds
\begin{align}
\label{eq:qequalsone}
\overrightarrow{J}^T(\IdMat_p-\aleph)^{-1}\boldsymbol{\mu}_0<0\;.
\end{align}
\end{assumption}
Note that Condition~(\ref{eq:ergodicity-birthdeath}) corresponds to
Assumption~\ref{ass:qequalsone} in the special case $\aleph=0$.

\begin{theorem}
\label{thm:geometrical-ergodicqIs1}
Let $q=1$ and $p\geq1$ and suppose that
Assumption~\ref{ass:Phi},~\ref{ass:access} and~\ref{ass:qequalsone} hold.  Let
$\tilde{Q}$ be the kernel defined in Section~\ref{sec:markov-assumption} on the
space $\{0,1,\dots,p\}\times\mathbb{Z}_+\times\mathbb{R}_+^p$ with
$p\geq1$. Then, for all $\gamma_1>0$ small enough, there exists $\gamma_0^*>0$
such that, for all $\gamma_0\in(0,\gamma_0^*]$, $\tilde{Q}$ is
$(\1_{\{0,\dots,p\}} \otimes V_{0,\gamma_0}\otimes
V_{1,\gamma_1})$-geometrically ergodic, where $V_{0,\gamma_0}$ and
$V_{1,\gamma_1}$ are defined in~(\ref{eq:V0}) and~(\ref{eq:V1}).
\end{theorem}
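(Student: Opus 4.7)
The plan is to establish an iterated geometric drift inequality $\tilde Q^{n_0} V \leq \theta V + b \1_C$ with $C$ petite and $\theta<1$, where $V = \1_{\{0,\dots,p\}} \otimes V_{0,\gamma_0} \otimes V_{1,\gamma_1}$, and then invoke the standard geometric ergodicity criterion of \cite[Chapter~15]{meyn-tweedie-2009}, whose aperiodicity and $\psi$-irreducibility hypotheses are supplied by Theorem~\ref{thm:aper-irred}. Using $S_1 = s + J_o(I_1)$ together with Remark~\ref{rem:IvanishesInInitialCondition}, one obtains the basic factorisation
\[
\tilde Q V(i, s, \boldsymbol{\ell}) = V_{0,\gamma_0}(s)\; \mathbb{E}^{(s,\boldsymbol{\ell})}\!\left[\rme^{\gamma_0 J_o(I_1)} V_{1,\gamma_1}(\boldsymbol{\lambda}_1)\right],
\]
so the $V_{0,\gamma_0}(s)$ factor pulls through, and the task reduces to controlling the remaining expectation through an appropriate iterate.

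I would then split the state space according to whether the constraint can influence the next $n_0$ steps. Set $c = \max_i |J(i)|$; for $s \geq s^* + n_0 c$ the constraint is inactive throughout an $n_0$-step trajectory, and on this set $\tilde Q^{n_0}$ coincides in distribution with $n_0$ iterates of the unconstrained kernel $\check Q$ (with $S_{n_0} = s + \sum_{k=1}^{n_0} J_o(\check I_k)$). Consequently the central quantity to control is
\[
\Psi_{n_0}(\boldsymbol{\ell}) := \mathbb{E}^{\boldsymbol{\ell}}\!\left[\rme^{\gamma_0 \sum_{k=1}^{n_0} J_o(\check I_k)} V_{1,\gamma_1}(\check{\boldsymbol{\lambda}}_{n_0})\right].
\]
Corollary~\ref{cor:moments-checkQvsN} combined with Assumption~\ref{ass:qequalsone} yields $\check\pi(J_o \otimes \1) < 0$. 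Combining this sign information with the $V_{1,\gamma_1}$-geometric ergodicity of $\check Q$ from Proposition~\ref{prop:embeddedunconstrained-geom-erg} and the partial drift from Proposition~\ref{prop:drift-step-1}, one can analyse the tilted kernel $\check Q_{\gamma_0}:h\mapsto \check Q\bigl(\rme^{\gamma_0 J_o(\cdot)}h\bigr)$: its spectral radius in the $V_{1,\gamma_1}$-weighted norm admits the expansion $\lambda(\gamma_0) = 1 + \gamma_0\, \check\pi(J_o \otimes \1) + O(\gamma_0^2)$, and is therefore strictly less than $1$ for every $\gamma_0 \in (0, \gamma_0^*]$ with $\gamma_0^*$ small enough. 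Iterating then yields a bound of the form $\Psi_{n_0}(\boldsymbol{\ell}) \leq C\, \lambda(\gamma_0)^{n_0}\, V_{1,\gamma_1}(\boldsymbol{\ell})$, and choosing $n_0$ sufficiently large makes the prefactor $C\lambda(\gamma_0)^{n_0}$ strictly less than $1$. This provides the multiplicative drift $\tilde Q^{n_0} V \leq \theta V$ on the region $\{s \geq s^* + n_0 c\}$.

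On the complementary region $\{s < s^* + n_0 c\}$, the factor $V_{0,\gamma_0}(s)$ is uniformly bounded; iterating Proposition~\ref{prop:drift-step-1} handles the $V_{1,\gamma_1}$ part (crudely absorbing the exponential $\rme^{\gamma_0 J_o}$ into a multiplicative constant since $|J_o|$ is bounded), and Theorem~\ref{thm:aper-irred} supplies a petite set $\{0,\dots,p\}\times\{1,\dots,s^*+n_0 c\}\times(0,M]^p$ into which the remaining bounded contribution is collected as $b\,\1_C$. Joining the two regimes produces the required iterated drift, and the geometric ergodicity of $\tilde Q$ follows.

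The hard part is the spectral analysis of the tilted kernel $\check Q_{\gamma_0}$ on the $V_{1,\gamma_1}$-weighted Banach space. The sign $\check\pi(J_o \otimes \1) < 0$ already shows the desired contraction at the formal first-order level, but making the expansion rigorous — in particular, controlling the dependence on the initial condition $\boldsymbol{\ell}$ through the geometric convergence of Proposition~\ref{prop:embeddedunconstrained-geom-erg}, and ensuring that $\check Q_{\gamma_0}$ preserves the integrability against $V_{1,\gamma_1}$ — is the delicate technical step. The thresholds on $\gamma_1$ and $\gamma_0^*$ in the statement are dictated precisely by these two constraints: $\gamma_1$ must be small for the partial drift of Proposition~\ref{prop:drift-step-1} to hold, and $\gamma_0$ must be small for the tilting to remain a compact perturbation of $\check Q$ with spectral radius below~$1$.
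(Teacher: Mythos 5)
Your overall architecture matches the one the paper uses to prove this theorem (as the $q=1$ instance of Theorem~\ref{thm:geometrical-ergodicqGeQ2}): factor out $V_{0,\gamma_0}(s)$ via $S_{n}=s+\sum_k J_o(I_k)$, split the state space according to whether $s$ is within reach of the constraint set in $n_0$ steps, couple the chain with the unconstrained chain $\check Q$ on the far region, and reduce everything to showing that
$\mathbb{E}^{\boldsymbol{\ell}}\bigl[\rme^{\gamma_0\sum_{k=1}^{n_0}J_o(\check I_k)}V_{1,\gamma_1}(\check{\boldsymbol{\lambda}}_{n_0})\bigr]$
decays geometrically in $n_0$ uniformly relative to $V_{1,\gamma_1}(\boldsymbol{\ell})$; the near region is handled exactly as you say, with Proposition~\ref{prop:drift-step-1} and the petite sets of Theorem~\ref{thm:aper-irred}. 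This is precisely the content of Proposition~\ref{prop:drift-step-m-q-larger-1} combined with Lemma~\ref{lem:new-fromqeqtoqPlus1} (your $\Psi_{n_0}$ is the quantity in Condition~(C), and $\check\pi(J_o\otimes\1)<0$ from Corollary~\ref{cor:moments-checkQvsN} plus Assumption~\ref{ass:qequalsone} is exactly hypothesis~\ref{item:q-larger-1-decreasing-en-m}). Where you genuinely diverge is in how that decay is obtained. You invoke a spectral-radius perturbation of the tilted kernel $\check Q_{\gamma_0}$ on the $V_{1,\gamma_1}$-weighted space, with $\lambda(\gamma_0)=1+\gamma_0\,\check\pi(J_o\otimes\1)+O(\gamma_0^2)$. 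The paper avoids spectral theory entirely: it blocks $m=nl+j$, applies H\"older's inequality across the $l$ interleaved subsequences, uses the quantitative $V$-geometric convergence $\tilde Q^{l}(\cdot,w^*)\leq\tilde\pi(w^*)+R\rme^{-\kappa l}\rme^{l\gamma_0\overline{w}}V$, and exploits that $\tilde\pi\bigl(\exp(a\,\mathbf{w}\otimes\1)\bigr)<1$ for small $a>0$ by dominated convergence from the first-moment condition, with the coupling $l\gamma_0=a$. The elementary route buys an explicit, self-contained estimate with uniform control in the initial intensity; your route is shorter on paper but imports heavier machinery.

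The one real gap is that the imported machinery is asserted, not established. The expansion of $\lambda(\gamma_0)$ and the accompanying power bound $\|\check Q_{\gamma_0}^{n}\|_{V_{1,\gamma_1}}\leq C\lambda(\gamma_0)^{n}$ (with $C$ independent of $n$, which you need to conclude $\Psi_{n_0}\leq C\lambda^{n_0}V_{1,\gamma_1}$) amount to a multiplicative ergodic theorem for $V$-uniformly ergodic chains with a bounded tilting functional. This is true here ($J_o$ is bounded and $\check Q$ is $V_{1,\gamma_1}$-geometrically ergodic by Proposition~\ref{prop:embeddedunconstrained-geom-erg}), but it does not follow from the geometric ergodicity statement alone without an argument: one must show the tilted kernel has an isolated maximal eigenvalue depending smoothly on $\gamma_0$, or otherwise produce the uniform-in-$\boldsymbol{\ell}$ geometric bound. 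You correctly identify this as the delicate step, but leave it unproved; either supply a proof (the paper's H\"older-blocking argument in Lemma~\ref{lem:new-fromqeqtoqPlus1} is one way) or cite a precise multiplicative-ergodicity result and verify its hypotheses. With that step filled in, the rest of your proof goes through.
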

The proof of Theorem~\ref{thm:geometrical-ergodicqIs1} is omitted as it is a particular case of
Theorem~\ref{thm:geometrical-ergodicqGeQ2}, which is proved in
Section~\ref{sec:induction-q}.

Assumptions~\ref{ass:Phi} and~\ref{ass:access} appear to be very mild.
Assumption~\ref{ass:Phi} is related to the
stability of the underlying unconstrained Hawkes process.
Assumption~\ref{ass:access} is used to obtain small sets
for the chains $Q$ and $\tilde{Q}$.
A natural question is to ask whether Assumption~\ref{ass:qequalsone}
$\overrightarrow{J}^T(\IdMat_p-\aleph)^{-1}\boldsymbol{\mu}_0<0$ is sharp.
The following theorem partially answers to this question.

\begin{theorem}
\label{thm:nonergodic}
Let $\{(S_n, {\boldsymbol{\lambda}}_n),\,n\geq0\}$ be the Markov chain
on the space $\mathbb{Z}_+\times \mathbb{R}_+^p$ with transition kernel $Q$
defined in Section~\ref{sec:markov-assumption}. Suppose that
Assumptions~\ref{ass:Phi} and~\ref{ass:access} hold.  Then $Q$ is
$\psi$-irreducible. Moreover the two following assertions holds.
\begin{enumerate}[label=(\roman*)]
\item\label{item:cas-toujours-recurrent} If $\psi(\{s^*,s^*+1,\dots\}\times(0,\infty)^p)=0$, then $Q$ is $(\1_{\{1,\dots,s^*-1\}}\otimes
V_{1,\gamma_1})$-geometrically ergodic, where $s^*$ is defined by~(\ref{eq:def-sstar}).
\item\label{item:pas-toujours-recurrent}  Otherwise, if
  \begin{equation}
    \label{eq:transiant-cond}
\overrightarrow{J}^T(\IdMat_p-\aleph)^{-1}\boldsymbol{\mu}_0>0\;,
  \end{equation}
then $Q$ is transient.
\end{enumerate}
\end{theorem}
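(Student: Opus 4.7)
The $\psi$-irreducibility of $Q$ is an immediate application of Theorem~\ref{thm:aper-irred} in the case $q=1$. For part~\ref{item:cas-toujours-recurrent}, note that by the very definition of $s^*$ in~(\ref{eq:def-sstar}) the constraint sets $A_j$ are all contained in $\{1,\dots,s^*-1\}$, so for $s\geq s^*$ no mark is forbidden and the chain coincides with the unconstrained Hawkes embedded chain as long as $S_n\geq s^*$. The hypothesis $\psi(\{s^*,s^*+1,\dots\}\times(0,\infty)^p)=0$ expresses that, up to $\psi$-null events, the chain lives in $\{1,\dots,s^*-1\}\times\mathbb{R}_+^p$. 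I would then take $V(s,\boldsymbol{\ell})=\1_{\{1,\dots,s^*-1\}}(s)\,V_{1,\gamma_1}(\boldsymbol{\ell})$, combine the partial drift inequality of Proposition~\ref{prop:drift-step-1} with the fact that $s$ runs over the finite set $\{1,\dots,s^*-1\}$, and use the petite sets $\{1,\dots,s^*-1\}\times(0,M]^p$ supplied by Theorem~\ref{thm:aper-irred}, to produce a Foster--Lyapunov drift condition unbounded off petite sets. Geometric ergodicity then follows from the standard theorem of~\cite[Chapter~15]{meyn-tweedie-2009}.

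For part~\ref{item:pas-toujours-recurrent}, the plan is to exhibit a set from which the chain escapes to $+\infty$ with positive probability. Introduce the stopping time $\tau=\inf\{n\geq 1:S_n<s^*\}$ and consider the chain started from $(S_0,\boldsymbol{\lambda}_0)$ with $S_0\geq s^*$. On the event $\{n<\tau\}$ no constraint acts, so $(I_n,\boldsymbol{\lambda}_n)_{n\leq\tau}$ has the same law as the unconstrained Hawkes embedded chain $(\check I_n,\check{\boldsymbol{\lambda}}_n)_{n\leq\tau}$ of Proposition~\ref{prop:embeddedunconstrained-geom-erg}, and
\begin{equation*}
S_n=S_0+\sum_{k=1}^{n}J_o(\check I_k)\qquad\text{on }\{n<\tau\}.
\end{equation*}
By the geometric ergodicity of $\check Q$ and Corollary~\ref{cor:moments-checkQvsN} applied with $\mathbf{w}=J$, the ergodic theorem yields, almost surely,
\begin{equation*}
\frac{1}{n}\sum_{k=1}^n J_o(\check I_k)\;\longrightarrow\;\frac{\overrightarrow{J}^T(\IdMat_p-\aleph)^{-1}\boldsymbol{\mu}_0}{\mu_0^0+\IdVect_p^T(\IdMat_p-\aleph)^{-1}\boldsymbol{\mu}_0},
\end{equation*}
whose numerator is strictly positive by~(\ref{eq:transiant-cond}). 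For $S_0$ taken large enough, one then deduces that $\mathbb{P}^{(S_0,\boldsymbol{\lambda}_0)}(\tau=\infty)>0$ and in fact $\mathbb{P}^{(S_0,\boldsymbol{\lambda}_0)}(S_n\to+\infty)>0$. This produces a set of positive $\psi$-measure that is non-recurrent, and the recurrence/transience dichotomy for $\psi$-irreducible chains (see~\cite[Theorem~8.0.1]{meyn-tweedie-2009}) then forces $Q$ to be transient.

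The main obstacle lies in the transience argument: the almost-sure LLN above does not by itself yield $\mathbb{P}(\tau=\infty)>0$; one needs a quantitative control on the maximal downward deviation of $\sum_{k=1}^n J_o(\check I_k)-n\,c$, where $c>0$ is the positive limit above, which is uniform in $\boldsymbol{\lambda}_0$ on a petite set. Such a uniform bound is available from the $V_{1,\gamma}$-geometric ergodicity of $\check Q$, for instance via a Hoeffding--Azuma-type concentration estimate applied to the martingale part of the Poisson equation associated with $J_o$; one then chooses $S_0$ beyond the corresponding fluctuation scale. Setting up this concentration step carefully, and ensuring that $\{\tau=\infty\}$ has positive probability rather than merely positive limsup, is where the bulk of the technical work will lie.
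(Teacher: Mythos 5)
Your handling of part~\ref{item:cas-toujours-recurrent} and the overall architecture of part~\ref{item:pas-toujours-recurrent} --- coupling with the unconstrained embedded chain while $S_n$ stays at or above $s^*$, invoking Corollary~\ref{cor:moments-checkQvsN} together with~(\ref{eq:transiant-cond}) to give $M_n=\sum_{k=1}^{n}J_o(\check I_k)$ a strictly positive drift, and concluding by the transience criterion for $\psi$-irreducible chains --- coincide with the paper's. The problem is exactly the one you flag yourself: the passage from the almost-sure law of large numbers to a lower bound on the escape probability that is \emph{uniform} over $\boldsymbol{\ell}$ in a set of the form $(0,M]^p$ is not carried out; it is announced as the place where ``the bulk of the technical work will lie''. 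Since transience hinges entirely on that uniform bound (one needs a set of positive $\psi$-measure from which a $\psi$-positive set is avoided with positive probability, uniformly enough to apply \cite[Theorem~8.3.6]{meyn-tweedie-2009}), the proposal is, for part~\ref{item:pas-toujours-recurrent}, a plan rather than a proof.

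You should also know that the missing step does not require any Hoeffding--Azuma or Poisson-equation concentration machinery; the paper closes it with a softer argument. Since $M_n\to\infty$ a.s., the running infimum $U=\inf_{n\geq1}M_n$ is a.s.\ finite, so for a fixed reference point $\boldsymbol{\ell}_o$ one may choose $s_1$ with $\mathbb{P}^{\boldsymbol{\ell}_o}(U>-s_1)\geq1/2$. The key observation is that for $s_1>k\overline{J}$ the first $k$ partial sums cannot drop below $-s_1$, so $w_{s_1}(\boldsymbol{\ell}):=\mathbb{P}^{\boldsymbol{\ell}}(U>-s_1)$ satisfies $w_{s_1}=\check{Q}^k(w_{s_1})$. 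The $V_{1,\gamma}$-geometric ergodicity of $\check{Q}$ (Proposition~\ref{prop:embeddedunconstrained-geom-erg}) then gives $[\check{Q}^k(w_{s_1})](\boldsymbol{\ell})\geq[\check{Q}^k(w_{s_1})](\boldsymbol{\ell}_o)-C\,\theta^k\left(V_{1,\gamma}(\boldsymbol{\ell})+V_{1,\gamma}(\boldsymbol{\ell}_o)\right)$, and choosing $k$ large enough makes the error term smaller than $1/2$ uniformly over $\boldsymbol{\ell}\in(0,M]^p$; taking $S_0=s$ beyond $s_1$ then yields the required uniform positive lower bound on $\mathbb{P}^{(s,\boldsymbol{\ell})}(\inf_{n\geq1}S_n>s')$. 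If you complete your write-up, either import this argument or genuinely carry out the martingale concentration step with the required uniformity in the initial intensity; the former is substantially shorter. Part~\ref{item:cas-toujours-recurrent} as you sketch it is fine and is exactly how the paper disposes of that case.
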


The proof is postponed to Section~\ref{sec:proof-theor-refthm:n}.
\begin{remark} The fact that $Q$ is $\psi$-irreducible follows from
  Theorem~\ref{thm:aper-irred}.  Then, by definition of $\nu$ in the proof of
  Proposition~\ref{prop:general-smallset}, Assumption~\ref{ass:access} implies
  that $\psi(\{s_0\}\times(0,\infty)^p)>0$. Thus if $s_0$ in
  Assumption~\ref{ass:access} satisfies $s_0\geq s^*$, then the
  case~\ref{item:cas-toujours-recurrent} does not happen.
\end{remark}

\subsection{General case}
\label{sec:case-q-lager-1}

To obtain a drift function in the general case $q\geq1$, we shall use the
function $V_{0,\gamma}$ defined by (\ref{eq:V0}) applied multiplicatively to each
component of $\mathbf{S}_n$ and the function $V_{1,\gamma_1}$ defined by (\ref{eq:V1}) applied to intensity $\boldsymbol{\lambda}_n$.

We shall see that the proof of the ergodicity for $q\geq1$ relies on an
induction on $q$, see the details in Section~\ref{sec:induction-q}. For $q=1$
the induction applies under the simple Assumption~\ref{ass:qequalsone} because,
using Corollary~\ref{cor:moments-checkQvsN}, it allows us to compute some
moment under the stationary distribution of the unconstrained chain (with
kernel $\check{Q}$). In general the induction is more involved. We shall need
some additional notation.  For any set $\mathcal{J} \subseteq \{1,\dots,q\}$,
we define ${J}^{\mathcal{J}}:\{1,\dots,p\}\to\mathbb{Z}$ by
\begin{equation}
  \label{eq:defJJ}
{J}^{\mathcal{J}}(i)=
\sum_{j\in\mathcal{J}}\mathbf{J}_j(i)\;.
\end{equation}
We further denote by ${J}^{\mathcal{J}}_o$ the extension of ${J}^{\mathcal{J}}$ on  $\{0,\dots,p\}$ defined by
${J}_o^{\mathcal{J}}(0)=0$.

\begin{theorem}
\label{thm:geometrical-ergodicqGeQ2}
Let $q\geq1$ and $p\geq1$, and suppose that
Assumption~\ref{ass:Phi} and~\ref{ass:access} hold.  Let $\tilde{Q}$ be the kernel
defined in Section~\ref{sec:markov-assumption} on the space
$\{0,1,\dots,p\}\times\mathbb{Z}_+^q\times\mathbb{R}_+^p$. Define, for all
$\mathcal{J} \subset \{1, \dots, q\}$, the kernel $\tilde{Q}^{(+\mathcal{J})}$
as in Section~\ref{sec:markov-assumption}.
Check the following conditions in this order~:\\
\For{$k=1,2,3,\dots,q$} {\For{$\mathcal{J} \subset \{1, \dots, q\}$ such that
    $\#\mathcal{J} = k$} {Check that
\begin{align}
\label{eq:condition-geo-ergo-induction-emptycase}
{\left({\overrightarrow{J}}^{\mathcal{J}}\right)}^T(\IdMat_p-\aleph)^{-1}\boldsymbol{\mu}_0<0\;.
\end{align}
\For{$\mathcal{J}' \subset\mathcal{J}$ with $\mathcal{J}' \notin\{\emptyset,\mathcal{J}\}$}
{Check that
\begin{align}
\label{eq:condition-geo-ergo-induction}
\tilde{\pi}^{(+\mathcal{J}')}\left[{J_o^{\mathcal{J}}}\otimes\1_{\mathbb{Z}_+^q\times\mathbb{R}_+^p}\right]<0\;.
\end{align}
}
Then, for all $\gamma_1>0$ small enough, there exists $\gamma_0^*>0$  such that
for all $\gamma_0\in(0,\gamma_0^*]$, $\tilde{Q}^{(+\mathcal{J})}$ is
$(\1_{\{0,\dots,p\}} \otimes V_{0,\gamma_0}^{\otimes (\#\mathcal{J})}\otimes
V_{1,\gamma_1})$-geometrically ergodic. We denote by
$\tilde{\pi}^{(+\mathcal{J})}$ its stationary distribution.
}
}
Then, for all $\gamma_1>0$ small enough, there exists $\gamma_0^*>0$ such that
for all $\gamma_0\in(0,\gamma_0^*]$,  $\tilde{Q}$ is
$(\1_{\{0,\dots,p\}} \otimes V_{0,\gamma_0}^q\otimes
V_{1,\gamma_1})$-geometrically ergodic.
\end{theorem}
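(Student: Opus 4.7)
We proceed by strong induction on $k=\#\mathcal{J}$, processing the subsets of $\{1,\dots,q\}$ in the nested order prescribed by the statement. The base case $k=0$ is Proposition~\ref{prop:embeddedunconstrained-geom-erg}, which gives that $\check{Q}=\tilde{Q}^{(+\emptyset)}$ is $(\1_{\{0,\dots,p\}}\otimes V_{1,\gamma_1})$-geometrically ergodic with stationary distribution $\check{\pi}=\tilde{\pi}^{(+\emptyset)}$. Corollary~\ref{cor:moments-checkQvsN} translates condition~(\ref{eq:condition-geo-ergo-induction-emptycase}) into the equivalent form $\check{\pi}[J_o^{\mathcal{J}}\otimes\1]<0$, so that (\ref{eq:condition-geo-ergo-induction-emptycase}) and (\ref{eq:condition-geo-ergo-induction}) together provide the stationary negative-drift estimate $\tilde{\pi}^{(+\mathcal{J}')}[J_o^{\mathcal{J}}\otimes\1]<0$ for \emph{every} $\mathcal{J}'\subsetneq\mathcal{J}$. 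The final conclusion of the theorem corresponds to $\mathcal{J}=\{1,\dots,q\}$.

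Fix $\mathcal{J}$ with $\#\mathcal{J}=k\geq1$ and suppose $\tilde{Q}^{(+\mathcal{J}')}$ is $V^{(\mathcal{J}')}$-geometrically ergodic with stationary distribution $\tilde{\pi}^{(+\mathcal{J}')}$ for each $\mathcal{J}'\subsetneq\mathcal{J}$, where $V^{(\mathcal{J}')}:=\1_{\{0,\dots,p\}}\otimes V_{0,\gamma_0}^{\otimes\#\mathcal{J}'}\otimes V_{1,\gamma_1}$. Theorem~\ref{thm:aper-irred} gives $\psi$-irreducibility, aperiodicity, and the petite sets $C_{K,M}:=\{0,\dots,p\}\times\{1,\dots,K\}^k\times(0,M]^p$ for $\tilde{Q}^{(+\mathcal{J})}$. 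Setting $V:=V^{(\mathcal{J})}$, it will suffice by \cite[Chapter~15]{meyn-tweedie-2009} to produce $m\geq1$, $\theta\in(0,1)$, $b<\infty$, and $K,M$ for which
\begin{equation}
\label{eq:plan-drift}
\tilde{Q}^{(+\mathcal{J})\,m}\,V\leq \theta\,V+b\,\1_{C_{K,M}}\;.
\end{equation}
Factorise the ratio $V(\mathbf{Y}_m)/V(\mathbf{Y}_0)$ as the product of $V_{1,\gamma_1}(\boldsymbol{\lambda}_m)/V_{1,\gamma_1}(\boldsymbol{\lambda}_0)$ and $\exp\bigl(\gamma_0\sum_{l=1}^m J_o^{\mathcal{J}}(I_l)\bigr)$. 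Iterating Proposition~\ref{prop:drift-step-1} handles the first factor: for $\gamma_1$ small enough there is $\theta_1\in(0,1)$ with $\mathbb{E}^{\mathbf{y}}[V_{1,\gamma_1}(\boldsymbol{\lambda}_m)]\leq \theta_1^m V_{1,\gamma_1}(\boldsymbol{\lambda}_0)+b_1\1_{(0,M]^p}(\boldsymbol{\lambda}_0)$.

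For the second factor, boundedness of $J_o^{\mathcal{J}}$ by some $B$ yields the elementary estimate $\rme^{\gamma_0 x}\leq 1+\gamma_0 x+\gamma_0^2(mB)^2\rme^{\gamma_0 mB}$ on $|x|\leq mB$, so it is enough to show $\mathbb{E}^{\mathbf{y}}[\sum_{l=1}^m J_o^{\mathcal{J}}(I_l)]\leq -c\,m$ for some $c>0$, uniformly over $\mathbf{y}=(i,\mathbf{s},\boldsymbol{\ell})$ with $\max_{j\in\mathcal{J}} s_j$ large and $\boldsymbol{\ell}$ bounded. Partition the exterior of $C_{K,M}$ according to the subset $\mathcal{J}'\subsetneq\mathcal{J}$ of indices $j\in\mathcal{J}$ for which $s_j\leq K$: provided $K$ is chosen much larger than $mB$, the coordinates indexed by $\mathcal{J}\setminus\mathcal{J}'$ stay strictly above the maximal constraint threshold for the next $m$ steps, since single-step increments of $\mathbf{S}$ are bounded by $\|\mathbf{J}\|_\infty$. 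On this region the joint law of $(I_l,(S_{j,l})_{j\in\mathcal{J}'},\boldsymbol{\lambda}_l)_{l=1}^m$ under $\tilde{Q}^{(+\mathcal{J})}$ coincides \emph{exactly} with its law under $\tilde{Q}^{(+\mathcal{J}')}$, so by the induction hypothesis the Cesàro mean $m^{-1}\mathbb{E}^{\mathbf{y}}[\sum_{l=1}^m J_o^{\mathcal{J}}(I_l)]$ lies within $c/2$ of $\tilde{\pi}^{(+\mathcal{J}')}[J_o^{\mathcal{J}}\otimes\1]$, uniformly on $V^{(\mathcal{J}')}$-sublevel sets, for $m$ large enough. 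Conditions~(\ref{eq:condition-geo-ergo-induction-emptycase}) (via Corollary~\ref{cor:moments-checkQvsN} when $\mathcal{J}'=\emptyset$) and~(\ref{eq:condition-geo-ergo-induction}) make this stationary mean a strictly negative constant $-2c$, giving the desired $-cm$ bound. Finally, choosing $\gamma_0$ small so that the $\gamma_0^2$ remainder from the exponential expansion is absorbed by half of the linear negative gain, and combining with the $V_{1,\gamma_1}$ factor, produces~(\ref{eq:plan-drift}) on the complement of $C_{K,M}$ and closes the induction.

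The principal obstacle is making the ergodic approximation for $\tilde{Q}^{(+\mathcal{J}')}$ uniform over the unbounded exterior region where the coordinates in $\mathcal{J}\setminus\mathcal{J}'$ diverge: this is where the full strength of the inductive \emph{geometric} ergodicity (rather than a mere convergence of averages) is crucial, together with the observation that the partitioning by $\mathcal{J}'$ gives an exact — not merely approximate — agreement of dynamics, sidestepping any coupling argument. Calibrating the four free parameters $m, K, M, \gamma_0$ so that the exponential-expansion remainder, the quantitative ergodic-theorem error, and the leak from the $V_{1,\gamma_1}$ drift all fit together consistently is the main bookkeeping challenge.
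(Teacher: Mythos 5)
Your overall architecture is the same as the paper's: an induction over the subsets $\mathcal{J}$ ordered by cardinality, a drift condition for $\tilde{Q}^{(+\mathcal{J})}$ split according to whether the constraint coordinates are large, and an exact (not approximate) coupling with the reduced kernel $\tilde{Q}^{(+\mathcal{J}')}$ on the region where the coordinates in $\mathcal{J}\setminus\mathcal{J}'$ exceed a threshold of order $m\overline{J}$ plus the constraint bound. This matches Proposition~\ref{prop:drift-step-m-q-larger-1} and Theorem~\ref{thm:geometrical-ergodic}. Where you diverge is in the key quantitative step, and there the argument has a genuine gap.

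The gap is the passage from the Ces\`aro estimate $\mathbb{E}^{\mathbf{y}}[\sum_{l=1}^m J_o^{\mathcal{J}}(I_l)]\leq -cm$ to the required bound on $\mathbb{E}^{\mathbf{y}}\bigl[\rme^{\gamma_0\sum_{l=1}^m J_o^{\mathcal{J}}(I_l)}\,V_{1,\gamma_1}(\boldsymbol{\lambda}_m)\bigr]$. Your one-shot second-order Taylor expansion requires the remainder $\gamma_0^2(mB)^2\rme^{\gamma_0 mB}$ to be dominated by half of the linear gain $c\gamma_0 m$, which forces $\gamma_0 m$ to stay below a fixed threshold; the resulting contraction factor $1-c\gamma_0 m/2$ is therefore bounded away from zero, uniformly in the admissible choices of $m$ and $\gamma_0$. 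But the exponential factor must be multiplied \emph{inside the expectation} by $V_{1,\gamma_1}(\boldsymbol{\lambda}_m)$, and any way of decoupling the two (Cauchy--Schwarz picks up the constant $(b_1/(1-\theta_1))^{1/2}$ from iterating Proposition~\ref{prop:drift-step-1}; the deterministic bound $V_{1,\gamma_1}(\boldsymbol{\lambda}_m)\leq V_{1,\gamma_1}(\boldsymbol{\ell})\rme^{\gamma_1 m C}$ produces a factor $\rme^{\gamma_1 m C}$ that, with the theorem's quantifier order where $\gamma_1$ is fixed before $\gamma_0$ and $\gamma_0$ ranges over all of $(0,\gamma_0^*]$, cannot be beaten by $1-c\gamma_0 m/2$) introduces a multiplicative constant strictly larger than one that a contraction capped at $1-O(1)$ cannot absorb. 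This is precisely why the paper does not use a Taylor expansion: Lemma~\ref{lem:new-fromqeqtoqPlus1} decomposes $\sum_{k=1}^m\mathbf{w}(I_k)$ by H\"older's inequality into $l$ interleaved subsequences with $l\gamma_0=a$ fixed, iterates the one-block bound $\tilde{Q}^l(\cdot,w^*)\leq b^{1/4}V$ over $n=m/l$ blocks, and obtains decay $b^{\gamma_0 m/(4a)}$ that is genuinely geometric in $m$ for each fixed $\gamma_0$ and hence eventually dominates every constant; Lemma~\ref{lem:recurrence-controle} then makes the bound uniform over the whole (unbounded) state space through a stopping-time argument at the first return of $\mathbf{S}$ to the constraint box, which your proposal also lacks (your $-cm$ estimate is claimed only for bounded $\boldsymbol{\ell}$ and cannot simply be iterated, since after $m$ steps the chain may have left that region). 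To repair the proof you would need to reintroduce this block-iteration and stopping-time machinery, at which point you recover the paper's argument.
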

The proof is postponed to the end of Section~\ref{sec:induction-q}.
\begin{remark}
By Corollary~\ref{cor:moments-checkQvsN},
Condition~(\ref{eq:condition-geo-ergo-induction-emptycase}) is equivalent
to~(\ref{eq:condition-geo-ergo-induction}) with $\mathcal{J}'=\emptyset$.
\end{remark}
\begin{remark}
  Observe that Condition~(\ref{eq:condition-geo-ergo-induction}) makes sense at this
  step. Indeed, since $\#\mathcal{J} = k$, and $\#\mathcal{J}' < k$, we have
  already checked that $\tilde{Q}^{(+\mathcal{J}')}$ is geometrically
  ergodic. Thus the stationary distribution $\tilde{\pi}^{(+\mathcal{J}')}$ is
  well defined. In practice, one may check this assumption using Monte Carlo
  simulations, since the law of large number holds in this case
  (see~\cite[Theorem~17.1.7]{meyn-tweedie-2009}).
\end{remark}

\subsection{Scaling limit}
\label{sec:scaling-limit}

In the application of this model to the limit order book, an increment of the best bid price
or the best ask price at time $T_k$ is specified by the mark $I_k\in\{0, \dots,
p\}$, see the details in Section~\ref{sec:sclobchp-application}.
It is interesting to investigate the microscopic behavior of the mid-price (mean of the best bid price and the best ask price) at
large scales, see~\cite{bacry-delattre-hoffmann-muzy-2010} for the non-constrained case.

Suppose that the Assumptions of Theorem~\ref{thm:geometrical-ergodicqGeQ2}
hold, then $\tilde{Q}$ is $(\1_{\{0,\dots,p\}} \otimes
V_{0,\gamma_0}^{q}\otimes V_{1,\gamma_1})$-geometrically ergodic for some
$\gamma_0, \gamma_1 > 0$ and $V_{0,\gamma_0}$ and $V_{1,\gamma_1}$ defined
in~(\ref{eq:V0}) and~(\ref{eq:V1}).  Applying the results of
\cite[Theorem~17.4.2]{meyn-tweedie-2009}, we obtain a functional central limit
theorem (FCLT) in \emph{discrete time} for the chain $\tilde{Q}$. To obtain a
result in (physical) \emph{continuous time}, we actually need a functional CLT for the kernel
$\overline{Q}$ defined in Section~\ref{sec:markov-assumption}. Hence we must
first check that $\overline{Q}$ is geometrically ergodic.

Let us define $V_{2,\gamma} : \mathbb{R}_+ \to \mathbb{R}_+$ by
\begin{equation}
  \label{eq:V2}
V_{2,\gamma}(\delta) = \mathrm{e}^{\gamma \delta}\;.
\end{equation}

We provide the ergodicity property of $\overline{Q}$ in the following corollary.
\begin{corollary}
\label{cor:geo-ergo-overlineQ}
Let $q,p\geq1$ and $\overline{Q}$ be the kernel defined in
Section~\ref{sec:markov-assumption} on the space
$\mathbb{R}_+\times\{0,1,\dots,p\}\times\mathbb{Z}_+^q\times\mathbb{R}_+^p$. Suppose
that the Assumptions of Theorem~\ref{thm:geometrical-ergodicqGeQ2} hold. Then, for
all $\gamma_1>0$ small enough, there exists $\gamma_0^*>0$ such that for all
$\gamma_0\in(0,\gamma_0^*]$ and all $\gamma_2>0$ small enough, $\overline{Q}$
is $(V_{2,\gamma_2} \otimes \1_{\{0,\dots,p\}} \otimes
V_{0,\gamma_0}^{q}\otimes V_{1,\gamma_1})$-geometrically ergodic.
\end{corollary}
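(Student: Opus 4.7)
The plan is to lift the $F_0$-geometric ergodicity of $\tilde{Q}$ from Theorem~\ref{thm:geometrical-ergodicqGeQ2}, with $F_0 = \1_{\{0,\dots,p\}} \otimes V_{0,\gamma_0}^q \otimes V_{1,\gamma_1}$, to the time-extended kernel $\overline{Q}$ on $\mathsf{Z}$. The crucial simplification is Remark~\ref{rem:IvanishesInInitialCondition}: the one-step transition $\overline{Q}((\delta,i,\mathbf{x}),\cdot)$ depends on the initial state only through $\mathbf{x}$. Accordingly, I would seek a geometric drift condition for the product function $F = V_{2,\gamma_2} \otimes F_0$, and derive the irreducibility, aperiodicity and petite-set structure of $\overline{Q}$ from those of $\tilde{Q}$.

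The first step is a uniform control on the exponential moment of $\Delta_1$. Inspection of~(\ref{eq:hazard-rate}) shows that $\HR_0(t) \geq \mu_0^0$, so the total hazard $\sum_i \HR_i(t)$ is bounded below by $\mu_0^0$ independently of the state. Hence $\Delta_1$ is stochastically dominated by $\mathrm{Exp}(\mu_0^0)$, and $\mathbb{E}^{\mathbf{x}}[V_{2,r}(\Delta_1)] \leq \mu_0^0/(\mu_0^0-r)$ for every $r < \mu_0^0$. Next, applying Cauchy--Schwarz to $\overline{Q} F(\delta,i,\mathbf{x}) = \mathbb{E}^{\mathbf{x}}[V_{2,\gamma_2}(\Delta_1)\,F_0(I_1,\mathbf{X}_1)]$ gives
$$\overline{Q} F(\delta,i,\mathbf{x}) \leq \bigl(\mathbb{E}^{\mathbf{x}}[V_{2,2\gamma_2}(\Delta_1)]\bigr)^{1/2}\bigl(\tilde{Q}(F_0^2)(i,\mathbf{x})\bigr)^{1/2}.$$
Provided $\gamma_0,\gamma_1$ are chosen small enough that $2\gamma_0,2\gamma_1$ also lie in the admissible range of Theorem~\ref{thm:geometrical-ergodicqGeQ2}, the function $F_0^2 = \1 \otimes V_{0,2\gamma_0}^q \otimes V_{1,2\gamma_1}$ still satisfies a drift $\tilde{Q}(F_0^2) \leq \theta F_0^2 + b \1_{C_0}$ with $\theta \in (0,1)$ and $C_0$ petite for $\tilde{Q}$. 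Combined with $F \geq F_0$ (since $V_{2,\gamma_2} \geq 1$), this yields
$$\overline{Q} F \leq \theta' F + b' \1_{\mathbb{R}_+ \times C_0},$$
with $\theta' = \bigl(\theta\,\mu_0^0/(\mu_0^0 - 2\gamma_2)\bigr)^{1/2}$, which is strictly less than $1$ as soon as $\gamma_2 < \mu_0^0(1-\theta)/2$.

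It remains to verify that $\overline{Q}$ is $\psi$-irreducible, aperiodic, and that $\mathbb{R}_+ \times C_0$ is petite for $\overline{Q}$. Each is a direct transfer from the corresponding property of $\tilde{Q}$ (Theorem~\ref{thm:aper-irred}) through the remark above: any minorization $\sum_n a_n \tilde{Q}^n((i,\mathbf{x}),\cdot) \geq \tilde\nu$ uniform on $C_0$ yields, after a one-step transition of $\overline{Q}$, a nontrivial minorization
$$\sum_n a_n \overline{Q}^{n+1}((\delta,i,\mathbf{x}), \cdot) \geq \int \mathbb{P}^{\mathbf{x}'}\bigl((\Delta_1,I_1,\mathbf{X}_1)\in\cdot\bigr)\,d\tilde\nu(i',\mathbf{x}')$$
uniform on $\mathbb{R}_+ \times C_0$; irreducibility and aperiodicity are handled analogously, using in addition the absolute continuity of the law of $\Delta_1$ on $\mathbb{R}_+$. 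The conclusion then follows from the standard drift theorem~\cite[Theorem~15.0.1]{meyn-tweedie-2009}. The main delicacy of the argument is the bookkeeping of the three smallness constraints on $\gamma_0,\gamma_1,\gamma_2$: the Cauchy--Schwarz doubling forces us to halve the admissible $\gamma_0,\gamma_1$, and $\gamma_2$ must then be further restricted in terms of the contraction constant $\theta$ produced by that doubled drift.
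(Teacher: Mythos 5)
Your argument follows the paper's proof in all essentials: both transfer the small sets of $\tilde{Q}$ to $\overline{Q}$ by exploiting (Remark~\ref{rem:IvanishesInInitialCondition}) that the transition does not depend on the $\delta$-coordinate, and both obtain the drift by a Cauchy--Schwarz split of $V_{2,\gamma_2}(\Delta)$ from the spatial part, using the uniform stochastic domination of the inter-arrival time by an exponential law together with a halving of the exponents $\gamma_0,\gamma_1$. The one inaccuracy is your assertion of a \emph{one-step} drift $\tilde{Q}(F_0^2)\leq\theta F_0^2+b\1_{C_0}$: Theorem~\ref{thm:geometrical-ergodicqGeQ2} rests on Proposition~\ref{prop:drift-step-m-q-larger-1}, which only delivers the $m$-step inequality~(\ref{eq:drift-step-m-q-larger-1}) for some $m\geq1$, and a one-step geometric drift for the function $F_0^2$ itself is not among the paper's conclusions (the equivalence in \cite[Theorem~15.0.1]{meyn-tweedie-2009} produces a one-step drift only for some $V\geq F_0^2$, not for $F_0^2$). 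This is easily repaired by applying your Cauchy--Schwarz step to $\overline{Q}^m$ instead of $\overline{Q}$, exactly as the paper does, since $\Delta_m$ is likewise stochastically dominated by an exponential; the remainder of your argument, including the transfer of the $(n,\epsilon,\tilde\nu)$- and $(n+1,\epsilon,\tilde\nu)$-minorizations of Corollary~\ref{cor:smallset-Qtilde} to $\overline{Q}$, then goes through unchanged.
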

The proof is postponed to Section~\ref{sec:proof-coroll-refc-1}.

By Corollary~\ref{cor:geo-ergo-overlineQ}, $\overline{Q}$ admits a stationary
distribution $\overline{\pi}$. We shall use the notation
$\overline{\mathbb{E}}$ for the expectation under the stationary distribution.

Let now $g:\mathsf{Z} \to \mathbb{R}$ be such that for all $\gamma_0, \gamma_1,
\gamma_2 > 0$,
\begin{align}
\label{eq:g-sub-expo}
\sup_{\mathbf{z} \in \mathsf{Z}} {\frac{g(\mathbf{z})}{[V_{2, \gamma_2} \otimes \1_{\{0,\dots,p\}} \otimes V_{0,\gamma_0}^{q} \otimes
V_{1,\gamma_1}](\mathbf{z})}} < \infty \;.
\end{align}

By Corollary~\ref{cor:geo-ergo-overlineQ}, under this condition, we have
$\overline{\pi}(|g|) < \infty$.  Let $\bar{g} = g - \overline{\pi}(g)$. Using
Corollary~\ref{cor:geo-ergo-overlineQ} and
\cite[Theorem~17.4.4]{meyn-tweedie-2009}, for all $\gamma_0, \gamma_1, \gamma_2
> 0$, there exists $R > 0$ such that the Poisson equation $\hat{g}-\overline{Q}(\hat{g})= \bar{g}$
(see \cite[Chapiter~17]{meyn-tweedie-2009}) admits a solution $\hat{g}$ satisfying
the bound $|\hat{g}| \leq R\left((V_{2,\gamma_2} \otimes \1_{\{0,\dots,p\}}
  \otimes V_{0,\gamma_0}^{q}\otimes V_{1,\gamma_1}) + 1\right)$. Moreover, we have, for all $\mathbf{z} \in \mathsf{Z}$,
$$
\hat{g}(\mathbf{z}) =\sum_{k=0}^{\infty} \overline{Q}^k
(\mathbf{z},\bar{g})\;.
$$
It also follows that $\overline{\pi}(|\hat{g}|^2) < \infty$.
Let us denote the linearly interpolated partial sums of $g(\mathbf{Z}_n)$ by
\begin{equation*}
s_n (t, g) = \Sigma_{\lfloor nt\rfloor} (\bar{g}) + (nt - \lfloor nt\rfloor) \left[\Sigma_{\lfloor nt\rfloor + 1} (\bar{g}) - \Sigma_{\lfloor nt\rfloor} (\bar{g}) \right]\;,
\end{equation*}
where $\Sigma_k(h)=\sum_{j=1}^kh(\mathbf{Z}_j)$. Applying \cite[Theorem~17.4.4]{meyn-tweedie-2009}, if the (nonnegative) constant
\begin{equation}
\label{eq:def-sigma-g}
\sigma_{g}^2 := \overline{\pi}\left(\hat{g}^2 - \{\overline{Q} \hat{g}\}^2\right)
\end{equation}
is strictly positive, we have, for any initial distribution, as $n \to \infty$,
\begin{align}
\label{eq:fclt}
(n\sigma_g^2)^{-1/2} s_n (t, g) \overset{d}{\to} B_t \;,
\end{align}
where $\overset{d}{\to}$ denotes the weak convergence. Here, the weak convergence holds in
$C([0,1])$ (the space of continuous functions defined on $[0,1]$) and $B$
denotes a standard Brownian motion on $[0, 1]$.

\begin{remark}\label{rem:spread-notscaling}
Note that  $\sigma_{g}^2$ may not be strictly positive. An interesting example
is given by $g = \1_{\mathbb{R}_+}\otimes\mathbf{J}_o \otimes
\1_{\mathbb{Z}_+^q} \otimes \1_{\mathbb{R}_+^p}$. In this case, we have
$\Sigma_n ({g}) = \mathbf{S}_n-\mathbf{S}_0$ and thus
$$
\Sigma_n (\overline{g}) = \mathbf{S}_n-\mathbf{S}_0 - \overline{\mathbb{E}}(\mathbf{S}_n) + \overline{\mathbb{E}}(\mathbf{S}_0) + \sum_{k=1}^{n} {\overline{\mathbb{E}}(\mathbf{J}_o(I_k)) - \overline{\pi}(g)} \;,
$$
Hence, we get that $\Sigma_n (\overline{g})  = 0_{P_*}(1)$, which implies $\sigma_g^2 = 0$.
\end{remark}

We now derive the main result of this section, which determines the scaling
limit of the constrained Hawkes process. We need some additional notation
before stating the result.  Let $\mathbf{w}:\{1,\dots,p\}\to\mathbb{R}$.
Define
\begin{equation}
\label{eq:mathcalE-w}
\mathcal{E}(\mathbf{w})=\frac{\overline{\mathbb{E}}[\mathbf{w}(I_1)]}{\overline{\mathbb{E}}[\Delta_1]}\;.
\end{equation}
Further define
\begin{equation}
\label{eq:v-w}
v(\mathbf{w})=\sigma_{g}^2\;,
\end{equation}
where  $\sigma_{g}^2$ is defined as in~(\ref{eq:def-sigma-g}) with
$g:\mathsf{Z} \to \mathbb{R}$ defined by $g(\mathbf{z}) = \mathbf{w}(i) -
\mathcal{E}(\mathbf{w}) \delta$  for all $\mathbf{z} = (\delta, i, \mathbf{s}, \boldsymbol{\ell})$.
\begin{theorem}
\label{thm:fclt-cont-time}
Under the assumptions of Theorem~\ref{thm:geometrical-ergodicqGeQ2}, if $v(\mathbf{w})>0$, we have,
for any initial condition, as $T\to\infty$,
\begin{equation}
  \label{eq:fclt-realtime}
T^{-1/2}\left(\mathbf{N}(\1_{[0,tT]}\otimes\mathbf{w})-t\; T\;\mathcal{E}(\mathbf{w})\right)\overset{d}{\to}
(v(\mathbf{w})/\overline{\mathbb{E}}[\Delta_1])^{1/2}\, B_t \;,
\end{equation}
where the weak convergence holds in $D([0,1])$ (the space of càdlàg functions
defined on $[0,1]$).
\end{theorem}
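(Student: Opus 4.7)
The plan is to reduce the continuous-time FCLT to the discrete-time FCLT~(\ref{eq:fclt}) by a random time change. Let $\mathbf{w}_o$ denote the extension of $\mathbf{w}$ to $\{0,1,\dots,p\}$ with $\mathbf{w}_o(0)=0$, and define $g:\mathsf{Z}\to\mathbb{R}$ by $g(\mathbf{z})=\mathbf{w}_o(i)-\mathcal{E}(\mathbf{w})\delta$ for $\mathbf{z}=(\delta,i,\mathbf{s},\boldsymbol{\ell})$, which is precisely the integrand used to define $v(\mathbf{w})=\sigma_g^2$ in~(\ref{eq:v-w}). First I would verify that $g$ satisfies the sub-exponential growth~(\ref{eq:g-sub-expo}) (immediate, as $g$ is affine in $\delta$) and that $\overline{\pi}(g)=0$: by the very definition~(\ref{eq:mathcalE-w}) of $\mathcal{E}(\mathbf{w})$,
$$
\overline{\pi}(g)=\overline{\mathbb{E}}[\mathbf{w}_o(I_1)]-\mathcal{E}(\mathbf{w})\,\overline{\mathbb{E}}[\Delta_1]=0,
$$
so $\bar{g}=g$. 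The assumption $v(\mathbf{w})=\sigma_g^2>0$ then allows applying~(\ref{eq:fclt}) with horizon $n=n_T:=\lfloor T/\mu_\Delta\rfloor$, where $\mu_\Delta:=\overline{\mathbb{E}}[\Delta_1]>0$; rescaling by $T^{1/2}$ this gives $T^{-1/2}s_{n_T}(\cdot,g)\overset{d}{\to}(\sigma_g^2/\mu_\Delta)^{1/2}B$ in $C([0,1])$.

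Second, I would link the continuous-time quantity to a discrete partial sum evaluated at a random index. Setting $N^{\mathrm{tot}}(t):=\sup\{k\geq0:T_k\leq t\}$ and using that no arrival of $\mathbf{N}$ falls in $(T_{N^{\mathrm{tot}}(tT)},tT]$, the identity $\mathbf{N}(\1_{(0,T_n]}\otimes\mathbf{w})=\sum_{k=1}^{n}\mathbf{w}_o(I_k)$ yields
$$
\mathbf{N}(\1_{[0,tT]}\otimes\mathbf{w})-tT\,\mathcal{E}(\mathbf{w})=\Sigma_{N^{\mathrm{tot}}(tT)}(g)+\mathcal{E}(\mathbf{w})\bigl(T_{N^{\mathrm{tot}}(tT)}-tT\bigr).
$$
The boundary term is bounded uniformly in $t\in[0,1]$ by $|\mathcal{E}(\mathbf{w})|\,\max_{1\leq k\leq N^{\mathrm{tot}}(T)+1}\Delta_k$. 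Since $\overline{\pi}$ has a finite exponential moment in $\Delta$ by Corollary~\ref{cor:geo-ergo-overlineQ} and $N^{\mathrm{tot}}(T)=O_p(T)$, a standard union bound gives $\max_k\Delta_k=O_p(\log T)=o_p(T^{1/2})$, with the effect of the arbitrary initial condition absorbed by the geometric ergodicity of $\overline{Q}$. Hence the boundary term may be dropped after division by $T^{1/2}$.

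Third, I would identify the limit of $T^{-1/2}\Sigma_{N^{\mathrm{tot}}(tT)}(g)$. The ergodic theorem~\cite[Theorem~17.1.7]{meyn-tweedie-2009} applied to the coordinate $\Delta$ of the chain gives $T_n/n\to\mu_\Delta$ a.s., which by monotonicity of $N^{\mathrm{tot}}$ upgrades to the uniform convergence $\sup_{t\in[0,1]}\bigl|\mu_\Delta N^{\mathrm{tot}}(tT)/T-t\bigr|\to0$ a.s. Setting $u_T(t):=\mu_\Delta N^{\mathrm{tot}}(tT)/T$, I may write $\Sigma_{N^{\mathrm{tot}}(tT)}(g)=s_{n_T}(u_T(t),g)$ up to a one-term interpolation error that is $o_p(T^{1/2})$ uniformly. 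Combining the FCLT from step one (applied with $n_T$ slightly enlarged so that $u_T(t)\in[0,1]$ uniformly in $t\in[0,1]$ with probability tending to one) with the uniform convergence $u_T(t)\to t$ and continuous mapping applied to function composition yields
$$
T^{-1/2}\Sigma_{N^{\mathrm{tot}}(tT)}(g)\overset{d}{\to}(\sigma_g^2/\mu_\Delta)^{1/2}\,B_t=(v(\mathbf{w})/\mu_\Delta)^{1/2}\,B_t
$$
in $C([0,1])$, which combined with step two gives~(\ref{eq:fclt-realtime}).

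The main obstacle is the rigorous handling of the random time change: one must ensure that replacing the deterministic index $\lfloor tT/\mu_\Delta\rfloor$ by the random index $N^{\mathrm{tot}}(tT)$ inside $\Sigma$ preserves weak convergence in $C([0,1])$, not merely at fixed $t$. This is the content of Billingsley's random time-change theorem, and the crucial input is the uniform in-probability convergence of $u_T$ to the identity map on $[0,1]$, which is why the functional-level ergodicity provided by Corollary~\ref{cor:geo-ergo-overlineQ}, together with the monotonicity of $N^{\mathrm{tot}}$, is what the proof really rests on.
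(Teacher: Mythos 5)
Your proposal is correct and follows essentially the same route as the paper's proof: the same decomposition of $\mathbf{N}(\1_{[0,tT]}\otimes\mathbf{w})-tT\,\mathcal{E}(\mathbf{w})$ into the partial sum $\Sigma_{\mathbf{N}_{tT}}(g)$ plus a boundary term controlled by the maximal inter-arrival time, the same law-of-large-numbers argument (upgraded to uniform convergence) for the random time change $\mathbf{N}_{tT}/n_T\to t$, and the same random time-change theorem applied to the discrete FCLT~(\ref{eq:fclt}). The only cosmetic differences are that you invoke Billingsley's time-change theorem where the paper cites \cite[Theorem~13.2.1]{whitt-2002}, and you bound $\max_k\Delta_k$ by $O_p(\log T)$ via exponential tails where the paper settles for the weaker but sufficient $o_{P_*}(\sqrt{T})$.
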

The proof is postponed to Section~\ref{sec:proof-theor-refeq:fc}.

\section{Application to a limit order book}
\label{sec:sclobchp-application}

Let us apply our results to the simple limit order book presented in the
introduction. The function $J$ and the sets $A_i$ of the corresponding
constrained multivariate Hawkes process are detailed in
Section~\ref{sec:sclobchp-application-desc}.  The Markov assumption of
Section~\ref{sec:markov-assumption} allows us to use a finite set of parameters
for this model, namely $\beta,\aleph$ and $\boldsymbol{\mu}_0$.

Assumptions~\ref{ass:Phi} and~\ref{ass:qequalsone}
then amount to assumptions on the parameters  $\boldsymbol{\mu}_0$ and $\aleph$.
Assumption~\ref{ass:access} says that $\aleph$ has to be invertible and
requires an additional property which only depends on the constrained sets
$A_i$ and $J$. Hence, to apply Theorem~\ref{thm:geometrical-ergodicqIs1},
we only need to check that this additional property holds.
We set $s_o=2$. For any $K\geq1$, we set $m=|K-2|+p$. Then given
$s\in\{1,\dots,K\}$ we need to show that there is an admissible path
$(j_1,\dots,j_m)\in \mathcal{A}_m (s)$ such that
$s+\sum_{k=1}^m J_o (j_k)=2(=s_o)$. To see why, let us set, for $1\leq
i\leq |s-2|$,
$$
j_i=
\begin{cases}
  2&\text{ if $s\geq2$,}\\
  1&\text{ of $s=1$.}
\end{cases}
$$
Then, by~(\ref{eq:sclob-J}), we have
$s+\sum_{k=1}^{|s-2|} J(j_k)=2$. To conclude, we set $j_i=0$ for
$|s-2|<i\leq m-4$ and $j_{m-3}=1$, $j_{m-2}=4$, $j_{m-1}=2$ and $j_{m}=3$, so
that, by~(\ref{eq:sclob-J}) and~(\ref{eq:conv-Jzero}),
$s+\sum_{k=1}^{m} J_o (j_k)=2$. Moreover, we easily check
that~(\ref{eq:def:accessibility}) holds for this choice of $(j_1,\dots,j_m)$
and that $\{j_{m-3},j_{m-2},j_{m-1},j_m\}=\{1,\dots,4\}$. Hence
Assumption~\ref{ass:access} holds and  Theorem~\ref{thm:geometrical-ergodicqIs1}
applies provided that $\aleph$ is invertible and has spectral radius smaller than
1, and $\boldsymbol{\mu}_0$ satisfies~(\ref{eq:qequalsone}) with $J$
defined by (\ref{eq:sclob-J}). The ergodicity of the underlying chain allows one
to perform meaningful statistical analysis of the data. This will be done in a
forthcoming paper.

For the moment, we focus on the scaling limit of the mid-price, defined as the middle price between the Best Bid and Best Ask prices. This value is often considered as the continuous
time price of the asset. In the framework of our model, the mid price satisfies
the following equation
\begin{align}\label{eq:mid-evolution}
P(t) = P(u) + \mathbf{N}\left(\1_{(u, t]}\otimes w\right)\;,
\end{align}
where $w(i)$ takes values $1/2, -1/2, 1/2$ and  $-1/2$ for $i=1,\dots,4$,
respectively. (Note that similar equations hold for the Best Bid and Best Ask
prices with different functions $w$s.)

Applying Theorem~\ref{thm:fclt-cont-time}, we get the scaling limit of the
mid-price in physical time is given by
\begin{equation*}
T^{-1/2}\left(P(tT)-P(0)-t\; T\; \mathcal{E}\left(w\right)\right)\overset{d}{\to}
\left(v(w\right)/\overline{\mathbb{E}}[\Delta_1])^{1/2}\, B_t \;,
\end{equation*}
where $\mathcal{E}\left(w \right)$ and $v\left(w \right)$ are respectively
defined in~(\ref{eq:mathcalE-w}) and~(\ref{eq:v-w}), provided that $v\left(w
\right)>0$. Although it does not seem easy to check that $v\left(w
\right)>0$ (except perhaps by numerical means), we do expect this to
be true. Indeed, while the spread behaves as a stationary variable,
yielding a vanishing asymptotic variance in the large scale behavior (see
Remark~\ref{rem:spread-notscaling}), the best-bid, best-ask and mid prices
behave as co-integrated random walks.

In practice, we can apply our results in several ways. We refer to the
forthcoming thesis \cite{zheng-2013} for details about the estimation methods
related to constrained Hawkes processes and their applications to limit order
books data. Let us here briefly illustrate an application on a real data
example. Using all intraday data from $1^{st}$ April 2011 to $10^{th}$ April
2011, parameters estimations for Eni SpA and Total yield the following
results. The decaying rate $\beta$ is estimated to $1.65$ for Eni SpA and to
$1.79$ for Total, which correspond to half-times respectively equal to $0.6$
and $0.55$ seconds. The spectral radii of the estimated fertility matrices
$\aleph$ are equal to $0.5723$ and $0.6221$ for Eni SpA and Total,
respectively, so Assumption \ref{ass:Phi} holds for the estimated
parameters. We also computed the corresponding constants appearing in the
left-hand side of the drift condition~(\ref{eq:qequalsone}) and obtained
$-0.0115$ and $-0.0682$ respectively for Eni SpA and Total.  We note that for
these estimated values, Assumption \ref{ass:qequalsone} holds and thus Theorem
\ref{thm:geometrical-ergodicqIs1}
applies. Following \cite{bacry-delattre-hoffmann-muzy-2010}, to illustrate the relevance of our
model, we shall briefly present empirical signature plots and compare them with
the signature plots of the fitted models. For a given asset price (best bid,
best ask or mid-price) at time $t$ $P(t)$, the empirical signature plot is
defined as an empirical estimate of the quadratic variation of $P(t)$ over $[0,
T]$ expressed as a function of the scale $\tau>0$, namely
\begin{align*}
C(\tau) = \frac{1}{T}\sum_{n=1}^{T/\tau} |{P}(n\tau) - {P}((n-1)\tau)|^2 \;.
\end{align*}
This can be compared with a theoretical signature plot based on the fitted
model, namely
\begin{align*}
\tilde C(\tau) = \frac{1}{\tau} \mathbb{E}_{\hat\theta}\left[\left(P(t+\tau) - P(t)\right)^2\right] \;,
\end{align*}
where $\mathbb{E}_\theta$ here denotes the expectation of the model with
parameter $\theta$ and $\hat\theta$ is the parameter estimated from the
data. In practice this expectation is computed via Monte Carlo simulations of
the fitted model.

In contrast with \cite{bacry-delattre-hoffmann-muzy-2010} where only the
mid-price is considered and the signature plot is always decreasing, we observe
in Figure \ref{fig:SignaturePlot_RealData_Example} that the empirical signature
plot differs depending whether the considered price is the best bid, best ask
or mid-price. In Figure \ref{fig:SP_WithConstraint_p_4_q_1}, we present the
corresponding theoretical signature plots using the fitted models. We observe
that, with a limited number of parameters, the model is able to recover shapes somewhat similar to the empirical ones.

\begin{figure}[h!]
\begin{center}
\begin{tabular}{cc}
\includegraphics[trim=0mm 0mm 0mm 0mm, clip, height=6cm, width=6cm, angle=-90]{./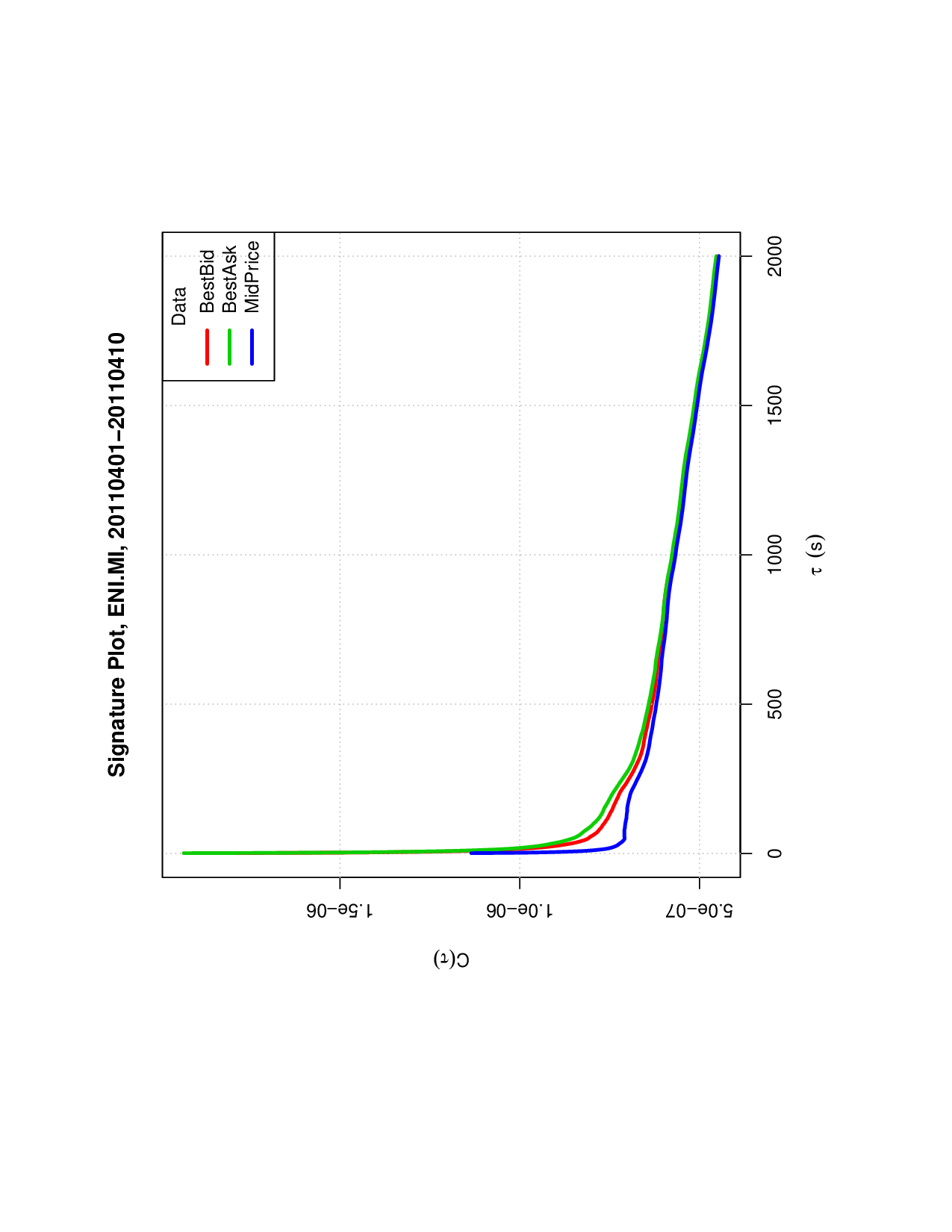}&
\includegraphics[trim=0mm 0mm 0mm 0mm, clip, height=6cm, width=6cm, angle=-90]{./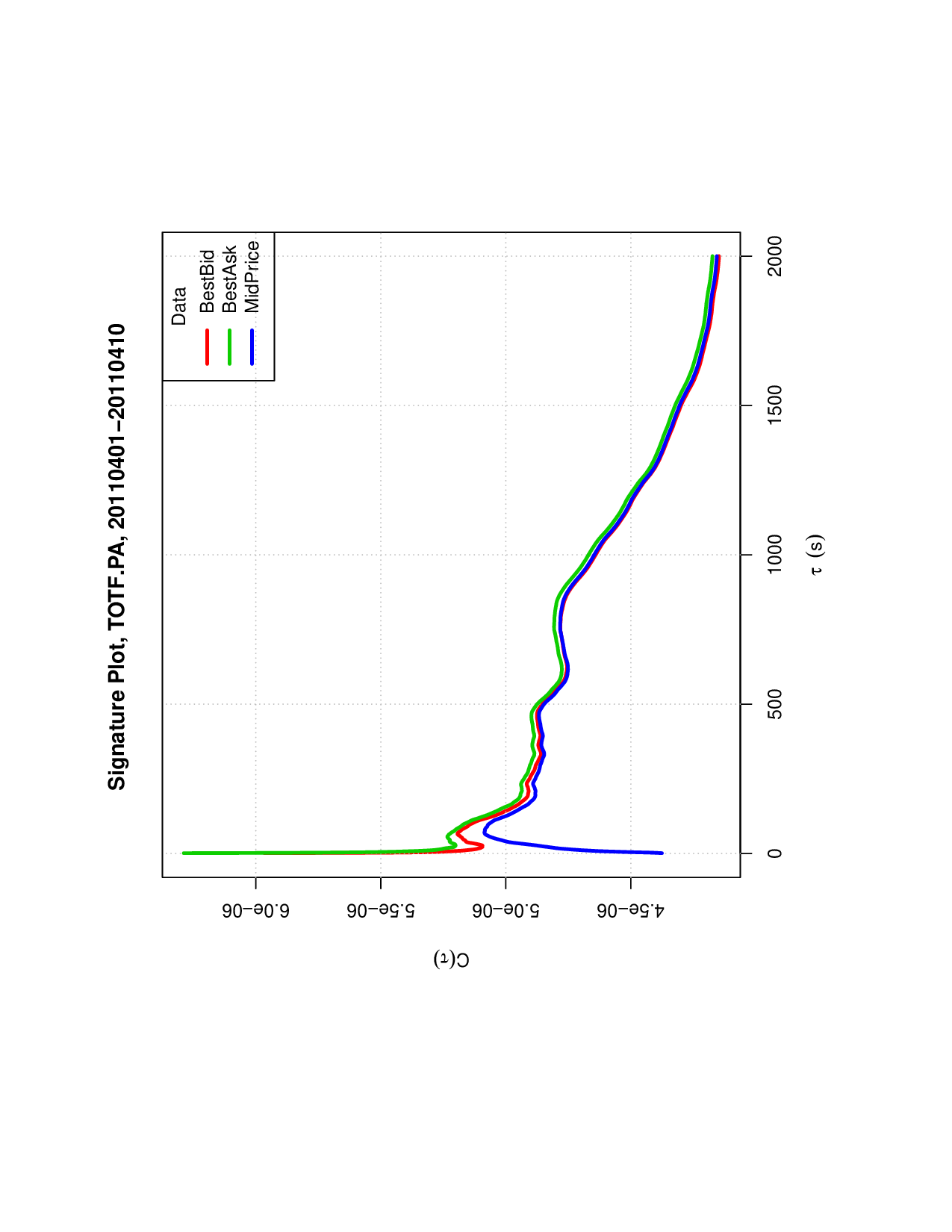}
\end{tabular}
\end{center}
\caption{Signature Plot of Eni SpA (left) and Total (right) calculated on the data of best bid, best ask and mid-price over the period from $1^{st}$ April 2011 to $10^{th}$ April 2011.}
\label{fig:SignaturePlot_RealData_Example}
\end{figure}

\begin{figure}[h!]
\begin{center}
\begin{tabular}{cc}
\includegraphics[trim=0mm 0mm 0mm 0mm, clip, height=6cm, width=6cm, angle=-90]{./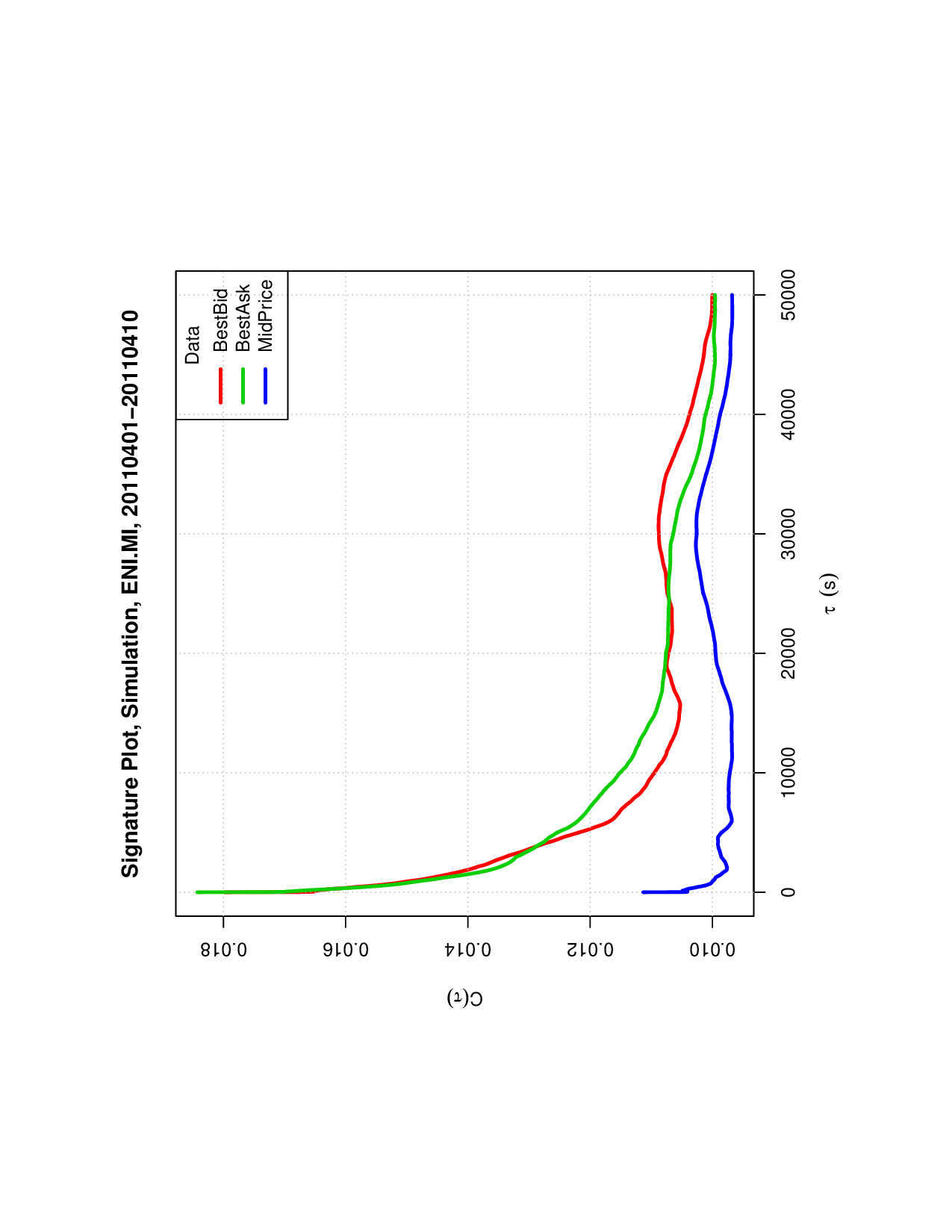}&
\includegraphics[trim=0mm 0mm 0mm 0mm, clip, height=6cm, width=6cm, angle=-90]{./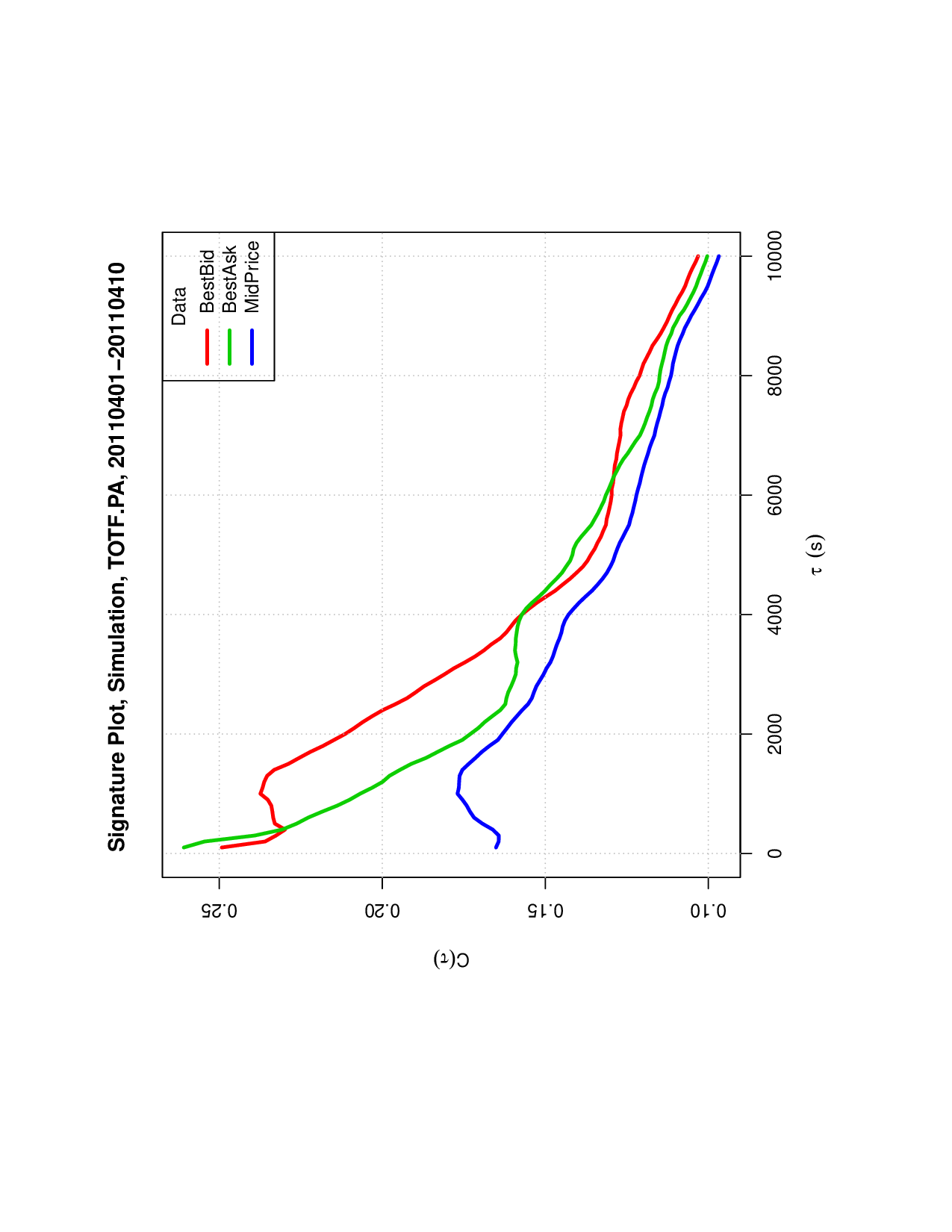}
\end{tabular}
\end{center}
\caption{Signature plots calculated with simulated data where the parameters are calibrated from the mid-quotes data of Eni SpA (left) and  Total (right) with $p=4$ and $q=1$.}
\label{fig:SP_WithConstraint_p_4_q_1}
\end{figure}

\section{Postponed proofs}
\label{sec:proofs}

\subsection{Proof of Theorem \ref{thm:aper-irred}} 
\label{sec:small-sets}

In the following we shall denote
\begin{gather}
  \label{eq:notation-majorants-minorants-mu}
\overline{\mu_0} = \max_j{\boldsymbol{\mu}_0(j)},\;\;
\underline{\mu_0} = \min_{j}{\boldsymbol{\mu}_0(j)},\\
  \label{eq:notation-majorants-minorants-alpha}
\overline{\alpha} =
\max_{i,j}{\alpha_{i, j}}, \underline{\alpha} = \min_{i,j}{\alpha_{i,
    j}}\;,\\
  \label{eq:notation-majorants-minorants-J}
\overline{J}=\max_{i=1,\dots,p}\sum_{j=1}^q|\overrightarrow{\mathbf{J}}_{i,j}|\;,
\end{gather}
and the $j$-th column vector of $\aleph$ by $\aleph_j$, so that
$\aleph=\begin{pmatrix}
\aleph_1&\dots&\aleph_{p}
\end{pmatrix}$.
To establish the existence of small sets, we need two preliminary results.
\begin{lemma}
\label{lem:lower-bound-small-set-step-1}
Let $\mathbf{x}=(\mathbf{s},\boldsymbol{\ell})\in\mathsf{X}$,
 $j\in \mathcal{A}_1 (\mathbf{s})$ with $j\neq0$ and $g: \mathsf{X}\rightarrow{\mathbb{R}_{+}}$. Then
  \begin{align}
  \label{eq:Eq1}
  \mathbb{E}^{\mathbf{x}}\left[g(\boldsymbol{\lambda}_1){\1}_{\{I_{1} = j\}}\right]
  & \geq {\underline{\mu_0}
  \int_{0}^{+\infty}g(\rme^{-t\beta}\boldsymbol{\ell}+\beta\aleph_j)\rme^{-2(\overline{\mu_0}
      + \overline{\ell})t} \mathrm{d}t} \;,
  \end{align} where
  $\overline{\ell} = \max_{i}{\boldsymbol{\ell}(i)}$.
\end{lemma}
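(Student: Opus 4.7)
The plan is to use the explicit construction of $(\Delta_1, I_1, \boldsymbol{\lambda}_1)$ given in Section~\ref{sec:markov-assumption} via the independent candidate durations $\Delta_1^0,\Delta_1^1,\ldots,\Delta_1^p$ whose hazard rates are specified in~(\ref{eq:hazard-rate}). Starting from $\mathbf{X}_0=(\mathbf{s},\boldsymbol{\ell})$, on the event $\{I_1 = j\}$ with $j\geq 1$ one has $\Delta_1 = \Delta_1^j$ and, by~(\ref{eq:intensity-dynamic}), $\boldsymbol{\lambda}_1 = \rme^{-\beta\Delta_1}\boldsymbol{\ell} + \beta\aleph_j$.

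First, I would express the left-hand side as an explicit integral using the joint law of $(\Delta_1, I_1)$. Since the $\Delta_1^i$ are independent with densities $f_i(t) = \HR_i(t)\exp(-\int_0^t \HR_i(u)\,\rmd u)$ and survival functions $\bar{F}_i(t) = \exp(-\int_0^t \HR_i(u)\,\rmd u)$, the standard argmin computation yields
\begin{equation*}
\mathbb{E}^{\mathbf{x}}\!\left[g(\boldsymbol{\lambda}_1)\1_{\{I_1 = j\}}\right] = \int_0^\infty g(\rme^{-\beta t}\boldsymbol{\ell}+\beta\aleph_j)\,\HR_j(t)\,\exp\!\left(-\int_0^t \sum_{i=0}^p \HR_i(u)\,\rmd u\right)\rmd t.
\end{equation*}

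Next, I would bound the integrand from below. For the prefactor, since $j \in \mathcal{A}_1(\mathbf{s})$ with $j\neq 0$, Definition~\ref{def:accessibility} gives $\mathbf{s} \in \mathbf{A}_j^c$, so by~(\ref{eq:hazard-rate}), $\HR_j(t) \geq \boldsymbol{\mu}_0(j) \geq \underline{\mu_0}$. For the exponent, the key observation is that the indicators $\1_{\mathbf{A}_k^c}(\mathbf{s})$ and $\1_{\mathbf{A}_k}(\mathbf{s})$ appearing respectively in the formulas for $\HR_k$ and in the $k$-th summand of $\HR_0$ partition the state space, so that the total hazard telescopes into a state-independent expression of the form $\mu_0^0 + \sum_{k=1}^p\!\left[\boldsymbol{\mu}_0(k) + \boldsymbol{\lambda}_0(k)\rme^{-\beta u}\right]$. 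Bounding each summand by $\overline{\mu_0}$ and $\overline{\ell}$ respectively (and using $\rme^{-\beta u}\leq 1$) and integrating over $[0,t]$ then produces a linear-in-$t$ upper bound for the integrated total hazard, which inserted into the previous display yields~(\ref{eq:Eq1}).

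The only substantive obstacle is the constant bookkeeping in the second step: the telescoping identity gives the sharp form $\mu_0^0 + \sum_k\boldsymbol{\mu}_0(k) + \rme^{-\beta u}\sum_k\boldsymbol{\ell}(k)$, and one must verify that this is controlled by $2(\overline{\mu_0}+\overline{\ell})$ in the appropriate regime (in particular absorbing $\mu_0^0$, which by convention may be chosen arbitrarily small, into this majorant). The rest is a direct unfolding of the Markov construction together with the defining property of admissibility, namely $\mathbf{s}\notin\mathbf{A}_j$.
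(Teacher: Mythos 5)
Your argument is essentially the paper's: you write the competing-risks density of $(\Delta_1,I_1)$ exactly, where the paper splits $\Delta_1^j$ from $U_1^j=\min_{i\neq j}\Delta_1^i$ and uses independence plus stochastic domination, but the two computations coincide, and your two bounds --- $\HR_j(t)\geq\underline{\mu_0}$ from $\mathbf{s}\in\mathbf{A}_j^c$, and a linear-in-$t$ majorant of the integrated total hazard --- are precisely the paper's two ingredients. The obstacle you flag at the end is real and cannot be resolved as you suggest: the total hazard is $\mu_0^0+\IdVect_p^T\boldsymbol{\mu}_0+\rme^{-\beta u}\IdVect_p^T\boldsymbol{\ell}$, which is bounded by $\mu_0^0+p(\overline{\mu_0}+\overline{\ell})$ but not by $2(\overline{\mu_0}+\overline{\ell})$ once $p\geq2$, and shrinking $\mu_0^0$ does not remove the factor $p$ (taking $g\equiv1$, $\boldsymbol{\ell}=0$ and all immigration rates equal gives left-hand side $1/(p+1)$ against a claimed lower bound of $1/2$). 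The paper's own proof makes the identical slip, asserting $U_1^j\overset{st}{\geq}\mathrm{Exp}(\overline{\mu_0}+\overline{\ell})$ although the hazard of $U_1^j$ is only controlled by $\mu_0^0+(p-1)(\overline{\mu_0}+\overline{\ell})$; the fix in both cases is to replace $2(\overline{\mu_0}+\overline{\ell})$ by a $p$-dependent constant such as $\mu_0^0+p(\overline{\mu_0}+\overline{\ell})$, which is harmless since only the positivity of the exponential rate is used in Corollary~\ref{cor:lower-bound-small-set-step-m} and Proposition~\ref{prop:general-smallset}.
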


\begin{proof}
Using (\ref{eq:intensity-dynamic}), we get that
$$
\mathbb{E}^{\mathbf{x}}\left[g(\boldsymbol{\lambda}_1){\1}_{\{I_{1} = j\}}\right] =
\mathbb{E}^{\mathbf{x}}\left[g(\rme^{-\Delta_{1}^{j}\beta}\boldsymbol{\ell}+\beta\aleph_j){\1}_{\{\Delta_{1}^{j} < U_{1}^{j}\}}\right] \;,
$$
where $U_{1}^{j} = \min_{i\neq{j}}\Delta_{1}^{i}$.

Observe that $U_{1}^{j} \overset{st}{\geq}
\mathrm{Exp}\left(\overline{\mu_0}+\overline{\ell}\right)$ where
$\mathrm{Exp}(z)$ denotes the exponential distribution with mean $\frac{1}{z}$,
thus, since $\Delta_{1}^{j}$ and $U_{1}^{j}$ are  independent, we have
$$
\mathbb{E}^{\mathbf{x}}\left[g(\boldsymbol{\lambda}_1){\1}_{\{I_{1} = j\}}\right]
\geq
\mathbb{E}^{\mathbf{x}}\left[g(\rme^{-\Delta_{1}^{j}\beta}\boldsymbol{\ell}+\beta\aleph_j)\rme^{-(\overline{\mu_0}+\overline{\ell})\Delta_{1}^{j}}\right] \;.
$$
Since $j \in \mathcal{A}_1 (\mathbf{s})$, using that the hazard rate of
$\Delta_{1}^{j}$ is bounded between $\underline{\mu_0}$ and $\overline{\mu_0} +
\overline{\ell}$, we see that the density of $\Delta_{1}^{j}$ is bounded from
below by $\underline{\mu_0}e^{-(\overline{\mu_0} + \overline{\ell})t}$ on
$t\in{\mathbb{R}_{+}}$.  Hence we get (\ref{eq:Eq1}).
\end{proof}

\begin{corollary}
\label{cor:lower-bound-small-set-step-m}
Let $m$ be an integer $\geq1$, $M \in \mathbb{R}_+$ and $g:
\mathsf{X}\rightarrow{\mathbb{R}_{+}}$. Let $\mathbf{x}=(\mathbf{s},\boldsymbol{\ell})\in\mathsf{X}$ with
$\boldsymbol{\ell} \in (0, M]^p$ and $(j_1, \cdots, j_m)\in
\mathcal{A}_m (\mathbf{s})$ with $j_i\neq0$ for $i=1,\dots,m$. Then, we have
\begin{multline*}
 \mathbb{E}^{\mathbf{x}}\left[g(\boldsymbol{\lambda}_m){\1}_{\{I_1 = j_1,
     \cdots, I_m = j_m\}}\right] \\
\geq  {\underline{\mu_0}}^{m}\int_{\mathbb{R}_{+}^m} g[e^{-\sum_{i=1}^{m}{t_i}\beta}\boldsymbol{\ell}+e^{-\sum_{i=2}^{m}{t_i}\beta} \beta\aleph_{j_1}
+\cdots+\beta\aleph_{j_m}] \\
 \times e^{- C_{m,M} \sum_{i=1}^m t_i} \; \mathrm{d}t_m\cdots \mathrm{d}t_1 \;,
\end{multline*}
where $C_{m,M} = 2(\overline{\mu_0}+ M + (m-1)\beta\overline{\alpha})$.
\end{corollary}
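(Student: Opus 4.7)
My plan is to prove Corollary~\ref{cor:lower-bound-small-set-step-m} by induction on $m$, using the Markov structure of the embedded chain described in Section~\ref{sec:markov-assumption} to peel off one step at a time and apply Lemma~\ref{lem:lower-bound-small-set-step-1} iteratively. For the base case $m=1$, the statement reduces to Lemma~\ref{lem:lower-bound-small-set-step-1}: on the event $\{I_1=j_1\}$ we have $\overline{\ell} \le M$, hence $\mathrm e^{-2(\overline{\mu_0}+\overline{\ell})t_1} \ge \mathrm e^{-C_{1,M}t_1}$, and the formula for $\boldsymbol{\lambda}_1$ matches the claimed expression.

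For the inductive step, I would use the Markov property to write
$$
\mathbb{E}^{\mathbf{x}}\!\left[g(\boldsymbol{\lambda}_m)\1_{\{I_1=j_1,\dots,I_m=j_m\}}\right]
= \mathbb{E}^{\mathbf{x}}\!\left[\1_{\{I_1=j_1\}}\, F(\boldsymbol{\lambda}_1)\right],
$$
where $F(\boldsymbol{\ell}')=\mathbb{E}^{(\mathbf{s}+\mathbf{J}(j_1),\boldsymbol{\ell}')}\!\left[g(\boldsymbol{\lambda}_{m-1})\1_{\{I_1=j_2,\dots,I_{m-1}=j_m\}}\right]$. Two points must be checked before invoking the induction hypothesis on $F$: first, from Definition~\ref{def:accessibility}, the tail $(j_2,\dots,j_m)$ belongs to $\mathcal{A}_{m-1}(\mathbf{s}+\mathbf{J}_o(j_1))$; second, from~(\ref{eq:intensity-dynamic}) and $\boldsymbol{\ell}\in(0,M]^p$, each component of $\boldsymbol{\lambda}_1 = \mathrm e^{-\beta t_1}\boldsymbol{\ell}+\beta\aleph_{j_1}$ is positive and bounded by $M'=M+\beta\overline{\alpha}$. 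Applying the inductive hypothesis with $M'$ in place of $M$ and noting that $C_{m-1,M'}=2(\overline{\mu_0}+M+\beta\overline{\alpha}+(m-2)\beta\overline{\alpha})=C_{m,M}$ gives a lower bound for $F(\boldsymbol{\lambda}_1)$ as an $(m-1)$-fold integral over $(t_2,\dots,t_m)$ with the integrand $g[\,\mathrm e^{-\beta\sum_{i=2}^m t_i}\boldsymbol{\lambda}_1+\mathrm e^{-\beta\sum_{i=3}^m t_i}\beta\aleph_{j_2}+\cdots+\beta\aleph_{j_m}\,]\,\mathrm e^{-C_{m,M}\sum_{i=2}^m t_i}$.

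Next I would apply Lemma~\ref{lem:lower-bound-small-set-step-1} to the outer expectation with this $F$ playing the role of the test function. The hypothesis $\boldsymbol{\ell}\in(0,M]^p$ gives $\mathrm e^{-2(\overline{\mu_0}+\overline{\ell})t_1}\ge\mathrm e^{-C_{m,M}t_1}$, yielding a further integral $\underline{\mu_0}\int_0^\infty F(\mathrm e^{-\beta t_1}\boldsymbol{\ell}+\beta\aleph_{j_1})\,\mathrm e^{-C_{m,M}t_1}\,\mathrm d t_1$. Substituting $\boldsymbol{\lambda}_1\mapsto\mathrm e^{-\beta t_1}\boldsymbol{\ell}+\beta\aleph_{j_1}$ inside the $(m-1)$-fold integral and using the identity $\mathrm e^{-\beta\sum_{i=2}^m t_i}\mathrm e^{-\beta t_1}=\mathrm e^{-\beta\sum_{i=1}^m t_i}$ collapses everything into the announced $m$-fold integral with constant $\underline{\mu_0}^{\,m}$ and decay rate $C_{m,M}$.

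The only mildly delicate point is the bookkeeping: one must check that $C_{m-1,M+\beta\overline{\alpha}}=C_{m,M}$ so that the exponential rates inherited from the inductive step match the target rate, and that the substitution $\boldsymbol{\lambda}_1\mapsto\mathrm e^{-\beta t_1}\boldsymbol{\ell}+\beta\aleph_{j_1}$ interacts correctly with the nested exponentials $\mathrm e^{-\beta\sum_{i=k+1}^m t_i}$ multiplying the $\aleph_{j_k}$ terms. Both are routine telescopings once written out, so no genuine obstacle is expected beyond this accounting.
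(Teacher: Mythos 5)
Your proposal is correct and follows essentially the same route as the paper's proof: induction on $m$ with the Markov property peeling off the first step, the observation that $\boldsymbol{\lambda}_1\in(0,M+\beta\overline{\alpha}]^p$ so that $C_{m-1,M+\beta\overline{\alpha}}=C_{m,M}$, and a second application of Lemma~\ref{lem:lower-bound-small-set-step-1} to the outer expectation followed by Tonelli. The bookkeeping you flag as the only delicate point is exactly what the paper carries out, and it checks out as you describe.
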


\begin{proof}
The case $m=1$ is given by
Lemma~\ref{lem:lower-bound-small-set-step-1}.

To get the result for any $m\geq2$, we proceed by induction. Let
$\mathbf{x}=(\mathbf{s},\boldsymbol{\ell})\in\mathsf{X}$ with
$\boldsymbol{\ell} \in (0, M]^p$ and $(j_1, \cdots, j_m)\in \{1,\dots,p\}^m$,
if $(j_1, \dots, j_m) \in \mathcal{A}_m (\mathbf{s})$. We have
\begin{multline}\label{eq:small-set-conditonning-step1}
\mathbb{E}^{\mathbf{x}}\left[g(\boldsymbol{\lambda}_m)
{\1}_{\{I_1 = j_1, \cdots, I_m = j_m\}}\right] \\=
\mathbb{E}^{\mathbf{x}}\left[{\1}_{\{I_1 = j_1\}}
\mathbb{E}\left[g(\boldsymbol{\lambda}_m)
{\1}_{\{I_2 = j_2, \cdots, I_m = j_m\}}|(I_1,\mathbf{X}_1)\right]\right]\;.
\end{multline}
Recall that given $(I_1,\mathbf{X}_1)$, the distribution of
$\{(I_k,\mathbf{X}_k),\,k\geq2\}$ does not depend on $I_1$, hence, with the
Markov property, we obtain that
\begin{align*}
  \mathbb{E}\left[g(\boldsymbol{\lambda}_m)
{\1}_{\{I_2 = j_2, \cdots, I_m = j_m\}}|(I_1,\mathbf{X}_1)\right]=
\mathbb{E}^{\mathbf{X}_1}\left[g(\boldsymbol{\lambda}_{m-1})
{\1}_{\{I_1 = j_2, \cdots, I_{m-1} = j_m\}}\right]\;.
\end{align*}
Note that, using
(\ref{eq:intensity-dynamic}) and~(\ref{eq:notation-majorants-minorants-alpha}),
$\{\mathbf{X}_0=\mathbf{x}\}$ and $\boldsymbol{\ell}\in(0,M]^p$ implies
$\boldsymbol{\lambda}_1\in(0,M+\beta\overline{\alpha}]^p$.  Thus,  in the event
$\{\mathbf{X}_0=\mathbf{x}\}$, the right-hand side of the last display can be
bounded from below by applying the induction hypothesis with $(j_1, \cdots, j_{m-1})$
replaced by $(j_2, \cdots, j_m)$, $\boldsymbol{\ell}=\boldsymbol{\lambda}_1$
and the constant
$$
C_{m-1,M+\beta\overline{\alpha}}= 2(\overline{\mu_0}+ M + \beta\overline{\alpha} +
(m-2)\beta\overline{\alpha})
=C_{m,M}\;.
$$
Plugging this bound in~(\ref{eq:small-set-conditonning-step1}),
we get that
\begin{multline}
\mathbb{E}^{\mathbf{x}}\left[g(\boldsymbol{\lambda}_m)
{\1}_{\{I_1 = j_1, \cdots, I_m = j_m\}}\right] \\
\label{eq:small-set-m-step-1}
\geq
{\underline{\mu_0}}^{m-1}\mathbb{E}^{\mathbf{x}}\left[{\1}_{\{I_1 = j_1\}}
  \int_{\mathbb{R}_{+}^{m-1}} g\left[\dots\right]
 \times e^{- C_{m,M} \sum_{i=1}^{m-1} t_i} \; \mathrm{d}t_{m-1}\cdots \mathrm{d}t_1
\right] \;,
\end{multline}
where the omitted argument of $g$ reads
$$
[\dots]=e^{-\sum_{i=1}^{m-1}{t_i}\beta}\boldsymbol{\lambda}_1+e^{-\sum_{i=2}^{m-1}{t_i}\beta} \beta\aleph_{j_2}
+\cdots+\beta\aleph_{j_{m}} \;.
$$
Using again Lemma~\ref{lem:lower-bound-small-set-step-1}, we have
\begin{align}
\mathbb{E}^{\mathbf{x}}\left[{\1}_{\{I_1 = j_1\}} g\left[\dots\right] \right] &
\geq\underline{\mu_0} \int_{0}^{+\infty}{g\left[\dots^\prime\right]\rme^{-2(\overline{\mu_0} + \overline{\ell})t} \mathrm{d}t}
\nonumber \\
\label{eq:small-set-m-step-2}
& \geq \underline{\mu_0}\int_{0}^{+\infty}{g\left[\dots^\prime\right]\rme^{-C_{m, M} t} \mathrm{d}t}
\;,
\end{align}
where the omitted argument in the right-hand side reads
\begin{align*}
[\dots^\prime] & = e^{-\sum_{i=1}^{m-1}{t_i}\beta}(e^{-{t}\beta} \boldsymbol{\ell} +
\beta\aleph_{j_1} )+e^{-\sum_{i=2}^{m-1}{t_i}\beta} \beta\aleph_{j_2} + \cdots+\beta\aleph_{j_{m}}
\\
& = e^{-\sum_{i=1}^{m-1}{t_i}\beta-t\beta}\boldsymbol{\ell}+e^{-\sum_{i=1}^{m-1}{t_i}\beta} \beta\aleph_{j_1}
+\cdots+\beta\aleph_{j_m} \;.
\end{align*}
Hence we get the result by applying the Tonelli theorem in
(\ref{eq:small-set-m-step-1}) and then using (\ref{eq:small-set-m-step-2}).
\end{proof}

\begin{proposition}
  \label{prop:general-smallset}
  Let $Q$ be the transition kernel defined in Section~\ref{sec:markov-assumption}
  on the space  $\mathbb{Z}_+^q\times\mathbb{R}_+^p$ with $p,q\geq1$. Suppose
  that Assumption~\ref{ass:access} holds.  Then there exists a probability
  measure $\nu$ on $\mathsf{X}$ such that, for all $K \geq1$ and $M>0$, there
  exists $\epsilon>0$ and a positive integer $m$ such that $\{1, \dots, K\}^q
  \times (0, M]^p$ is an $(m,\epsilon,\nu)$ and $(m+1,\epsilon,\nu)$-small set
  for $Q$.
\end{proposition}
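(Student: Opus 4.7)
The plan is to turn the admissible path from Assumption~\ref{ass:access} into a quantitative $m$-step transition lower bound via Corollary~\ref{cor:lower-bound-small-set-step-m}, and then to use the invertibility of $\aleph$ to convert this into a minorization by a common probability measure $\nu$, uniformly over the candidate small set.

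First I would apply Assumption~\ref{ass:access} to the given $K$ to obtain an integer $m\geq p+1$ valid for every $\mathbf{s}\in\{1,\dots,K\}^q$, and select for each such $\mathbf{s}$ an admissible path $(j_1^{\mathbf{s}},\dots,j_m^{\mathbf{s}})\in\mathcal{A}_m(\mathbf{s})$ terminating at $\mathbf{s}_o$, whose last $p$ coordinates form a permutation $\sigma^{\mathbf{s}}$ of $\{1,\dots,p\}$. Because such a path may contain zero marks (corresponding to arrivals of the auxiliary PPP $\mathrm{n}_0$), I would first state the obvious extension of Corollary~\ref{cor:lower-bound-small-set-step-m} allowing $j_i=0$: one replaces $\underline{\mu_0}$ at that step by $\underline{\mu_0}\wedge\mu^0_0$ and omits the $\beta\aleph_{j_i}$ term (consistent with the convention that $I=0$ produces no kick in $\boldsymbol{\lambda}$), the inductive proof being identical. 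Applied to the chosen path, this yields, for every nonnegative measurable $g$,
\[
\mathbb{E}^{(\mathbf{s},\boldsymbol{\ell})}[g(\mathbf{X}_m)]\geq c\int_{\mathbb{R}_+^m}g\bigl(\mathbf{s}_o,\Phi_{\mathbf{s},\boldsymbol{\ell}}(t_1,\dots,t_m)\bigr)\,e^{-C\sum_i t_i}\,dt_1\cdots dt_m\,,
\]
with constants $c,C>0$ depending only on $K,M,m$ and $\Phi_{\mathbf{s},\boldsymbol{\ell}}$ the explicit expression for $\boldsymbol{\lambda}_m$ along the path.

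The main obstacle is to minorize the right-hand side by $\epsilon\,\nu$ with $\nu$ \emph{independent} of $(\mathbf{s},\boldsymbol{\ell})$. I would fix the first $m-p$ integration variables in a compact box $[a,b]^{m-p}$ with $a$ chosen large enough that the $\boldsymbol{\ell}$-dependent part of $\Phi_{\mathbf{s},\boldsymbol{\ell}}$ becomes a small perturbation uniformly in $\boldsymbol{\ell}\in(0,M]^p$, and then perform a change of variables on the last $p$ times. Writing $u_k=\exp(-\beta(t_{m-p+k+1}+\dots+t_m))$ for $k=0,\dots,p-1$, the map $(u_0,\dots,u_{p-1})\mapsto\boldsymbol{\lambda}_m$ becomes affine with column coefficients $\mathbf{w}(\mathbf{s},\boldsymbol{\ell},t_1,\dots,t_{m-p}),\,\beta\aleph_{\sigma^{\mathbf{s}}(1)},\dots,\beta\aleph_{\sigma^{\mathbf{s}}(p-1)}$ plus constant $\beta\aleph_{\sigma^{\mathbf{s}}(p)}$. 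Since $\aleph$ is invertible, the vectors $\aleph_{\sigma^{\mathbf{s}}(1)},\dots,\aleph_{\sigma^{\mathbf{s}}(p)}$ form a basis of $\mathbb{R}^p$, so for generic $(t_1,\dots,t_{m-p})\in[a,b]^{m-p}$ the above Jacobian matrix has rank $p$ and the map is a local diffeomorphism. Combining this with a compactness/continuity argument (the set $\{1,\dots,K\}^q$ is finite and $(0,M]^p$ is bounded, so only finitely many permutations $\sigma^{\mathbf{s}}$ appear and $\mathbf{w}$ varies continuously), one extracts a nonempty open ball $V\subset\mathbb{R}_+^p$ contained in the image for every $(\mathbf{s},\boldsymbol{\ell})$, together with a uniform lower bound on the Jacobian determinant. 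Taking $\nu$ to be the normalized Lebesgue measure on $\{\mathbf{s}_o\}\times V$ delivers the $(m,\epsilon,\nu)$-small set property.

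Finally, to obtain the $(m+1,\epsilon,\nu)$-small set property I would extend each path by appending $j_{m+1}^{\mathbf{s}}=0$; this corresponds to an arrival of $\mathrm{n}_0$, which leaves $\mathbf{S}_m=\mathbf{s}_o$ unchanged and multiplies $\boldsymbol{\lambda}_m$ by $e^{-\beta\Delta_{m+1}}\in(0,1)$. Running the same change-of-variables argument through the extended corollary, and slightly shrinking $V$ a posteriori to absorb the extra exponential contraction, yields an analogous lower bound against the \emph{same} $\nu$. The joint $(m,\epsilon,\nu)$ and $(m+1,\epsilon,\nu)$ small-set property is precisely what will give aperiodicity of $Q$, via $\gcd(m,m+1)=1$, in the deduction of Theorem~\ref{thm:aper-irred}.
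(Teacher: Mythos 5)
Your proof follows the same skeleton as the paper's: extract the admissible path from Assumption~\ref{ass:access}, lower bound the $m$-step transition along that path via Corollary~\ref{cor:lower-bound-small-set-step-m} (splitting off the last $p$ steps whose marks exhaust $\{1,\dots,p\}$), change variables to get an absolutely continuous minorization, and add a $0$-mark to get the $(m{+}1)$-step bound. Your remark that the corollary must be extended to zero marks is a point the paper glosses over and is welcome. However, the crucial minorization step contains a genuine gap. After your change of variables, the linear part of the affine map $(u_0,\dots,u_{p-1})\mapsto\boldsymbol{\lambda}_m$ has columns $\mathbf{w}$ and $\beta\aleph_{\sigma(1)},\dots,\beta\aleph_{\sigma(p-1)}$: only $p-1$ columns of $\aleph$ appear, the $p$-th entering solely as the constant offset $\beta\aleph_{\sigma(p)}$. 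Invertibility of $\aleph$ therefore does \emph{not} make this Jacobian nonsingular; it is singular exactly when $\mathbf{w}\in H:=\mathrm{span}\{\aleph_{\sigma(1)},\dots,\aleph_{\sigma(p-1)}\}$, and this can hold for \emph{every} choice of $(t_1,\dots,t_{m-p})$, not merely non-generically. Indeed $\mathbf{w}=c(t)\boldsymbol{\ell}+\sum_i c_i(t)\beta\aleph_{j_i}$ with positive coefficients; Assumption~\ref{ass:access} allows the first $m-p$ marks to avoid $\sigma(p)$ (they may all be $0$), and $H$ may meet $(0,M]^p$ (e.g.\ $p=2$, $\aleph_1=(1,1)^T$, $\boldsymbol{\ell}=(1,1)^T$), in which case $\mathbf{w}\in H$ identically in $t$, the image of your map is Lebesgue-null, and no ball $V$ exists. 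Even off this degenerate set the determinant tends to $0$ as $\boldsymbol{\ell}$ approaches $H$, so the asserted uniform lower bound on the Jacobian over $\boldsymbol{\ell}\in(0,M]^p$ fails; your "generic $t$" plus compactness argument does not repair this.

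The paper avoids the problem through Lemma~\ref{lem:general-smallset}: there the full invertible matrix $\Gamma=\beta\aleph$ acts on all $p$ coordinates of $\boldsymbol{\vartheta}$, while the coefficient $u$ of the $\boldsymbol{\ell}$-dependent vector is kept as a \emph{separate} parameter restricted to $(0,\eta']$. For each fixed $u$ the substitution $\boldsymbol{\omega}=u\boldsymbol{\ell}+\Gamma\boldsymbol{\vartheta}$ has Jacobian exactly $\det\Gamma\neq0$, independent of $\boldsymbol{\ell}$ and $u$, and the perturbation $u\Gamma^{-1}\boldsymbol{\ell}$ is absorbed into the admissible domain by taking $\eta'$ small; the $M$-dependence then lands entirely in $\epsilon$. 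This also secures a feature your write-up does not address but the statement requires: $\nu$ must be independent of $K$ and $M$ (the paper's $\nu_1$ depends only on $\beta$ and $\aleph$), and it is precisely this independence that yields $\psi$-irreducibility in the proof of Theorem~\ref{thm:aper-irred}. A $V$ extracted by compactness over $\{1,\dots,K\}^q\times(0,M]^p$ would a priori depend on both $K$ and $M$ (and on $m$). Finally, for the $(m{+}1)$-step bound the paper \emph{prepends} the zero mark rather than appending it, which is cleaner: the extra initial step contracts $\boldsymbol{\ell}$ within $(0,M]^p$ and the $m$-step argument applies verbatim with the same $\nu$, whereas appending forces you to compare the randomly contracted image of $V$ with $V$ itself, and "shrinking $V$ a posteriori" changes $\nu$ unless the shrunken set is fixed once and for all at the outset.
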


Before providing the proof of this result, let us state the following corollary
which follows by observing that the conditional distribution of $({I}_1,
\mathbf{S}_{1},\boldsymbol{\lambda}_{1})$ given $({I}_0,
\mathbf{S}_{0},\boldsymbol{\lambda}_{0})$ does not depend on $I_0$.

\begin{corollary}
\label{cor:smallset-Qtilde}
Let $\tilde{Q}$ be the transition kernel defined in
Section~\ref{sec:markov-assumption} on the space
$\{0,1,\dots,p\}\times\mathbb{Z}_+^q\times\mathbb{R}_+^p$ with $p,q\geq1$.
Suppose that Assumption~\ref{ass:access} holds. Then there exists a probability
measure $\tilde{\nu}$ on $\mathsf{X}$ such that, for all $K \geq 1$ and $M >
0$, there exists $\epsilon>0$ and a positive integer $m$ such that $\{1,\dots,
p\} \times \{1, \dots, K\}^q \times (0, M]^p$ is an $(m,\epsilon,\tilde{\nu})$
and $(m+1,\epsilon,\tilde{\nu})$-small set for $\tilde{Q}$.
\end{corollary}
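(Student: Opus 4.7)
The idea is to build, for each starting state $(\mathbf{s},\boldsymbol{\ell})\in\{1,\dots,K\}^q\times(0,M]^p$, an admissible path of length $m$ supplied by Assumption~\ref{ass:access}, to combine the forced event trajectory with the lower bounds of Lemma~\ref{lem:lower-bound-small-set-step-1} and Corollary~\ref{cor:lower-bound-small-set-step-m} in order to obtain a density-type minorization on $\mathbf{X}_m$, and finally to use the auxiliary PPP $\mathrm{n}_0$ to transfer the bound to $m+1$ steps with the same measure $\nu$.

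Fixing $K\geq1$ and $M>0$, I would apply Assumption~\ref{ass:access} to obtain an integer $m\geq p+1$ such that every $\mathbf{s}\in\{1,\dots,K\}^q$ admits an admissible path $(j_1^{\mathbf{s}},\dots,j_m^{\mathbf{s}})\in\mathcal{A}_m(\mathbf{s})$ ending at $\mathbf{s}_o$ whose last $p$ coordinates form a permutation of $\{1,\dots,p\}$. After a straightforward extension of Corollary~\ref{cor:lower-bound-small-set-step-m} to admissible paths containing possibly zero marks (for which $\HR_0\geq\mu_0^0>0$, and which only apply exponential decay to $\boldsymbol{\lambda}$), I would obtain, uniformly over $\mathbf{x}=(\mathbf{s},\boldsymbol{\ell})\in\{1,\dots,K\}^q\times(0,M]^p$,
$$
\mathbb{E}^{\mathbf{x}}\!\left[g(\mathbf{X}_m)\1_{\{I_k=j_k^{\mathbf{s}},\,k=1,\dots,m\}}\right]\geq c\int_{\mathbb{R}_+^m}g\bigl(\mathbf{s}_o,\Psi_{\mathbf{s}}(\boldsymbol{\ell},\mathbf{t})\bigr)\,\rme^{-C\sum_i t_i}\,\rmd\mathbf{t},
$$
where $\Psi_{\mathbf{s}}(\boldsymbol{\ell},\mathbf{t})$ is the explicit expression for $\boldsymbol{\lambda}_m$ obtained by iterating~(\ref{eq:intensity-dynamic}) along the prescribed path, and $c,C>0$ depend only on $p,q,K,M,m$.

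I would then perform a change of variables on the last $p$ coordinates of $\mathbf{t}$ to turn this lower bound into a density bound for the law of $\boldsymbol{\lambda}_m$. The partial Jacobian of the map $(t_{m-p+1},\dots,t_m)\mapsto\boldsymbol{\lambda}_m$ is computable from~(\ref{eq:intensity-dynamic}) and reduces, after extracting the exponential factors, to a linear combination of the columns $\aleph_{j_k^{\mathbf{s}}}$ for $k=m-p+1,\dots,m$; since these columns are a permutation of $\aleph_1,\dots,\aleph_p$ and $\aleph$ is invertible by Assumption~\ref{ass:access}, this Jacobian is non-degenerate, uniformly over the finitely many path choices. Pushing the integral forward through this change of variables yields a positive density bound on a set of the form $\{\mathbf{s}_o\}\times[c_1,c_2]^p$ that is uniform in the starting state. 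Defining $\nu$ as the normalized Lebesgue measure on this compact box (with $\mathbf{s}_o$ fixed in the first coordinate) then gives $Q^m(\mathbf{x},\cdot)\geq\epsilon\,\nu$ on $\{1,\dots,K\}^q\times(0,M]^p$.

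For the $(m+1)$-small set property with the \emph{same} $\nu$, I would prepend an auxiliary mark-$0$ step: an arrival of the PPP $\mathrm{n}_0$ has hazard rate bounded below by $\mu_0^0$, it leaves $\mathbf{S}$ unchanged and only decays $\boldsymbol{\lambda}$, so after this preliminary step the chain is again in $\{1,\dots,K\}^q\times(0,M]^p$ and the previously established $m$-step bound applies, yielding $Q^{m+1}(\mathbf{x},\cdot)\geq\epsilon'\,\nu$ with the same $\nu$. I expect the main technical obstacle to be the change-of-variables step, specifically verifying uniform non-degeneracy of the Jacobian and the uniformity of the density bound across all starting states; however, since $\mathbf{s}$ ranges over the finite set $\{1,\dots,K\}^q$ and the dependence on $\boldsymbol{\ell}\in(0,M]^p$ is continuous with a limiting determinant controlled by the invertible matrix $\aleph$, this reduces to a standard compactness argument.
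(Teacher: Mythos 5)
Your proof is correct in substance and follows the same route as the paper: it is, in effect, a re-derivation of Proposition~\ref{prop:general-smallset}, from which the paper deduces the corollary in a single line (the conditional law of $(I_1,\mathbf{S}_1,\boldsymbol{\lambda}_1)$ given $(I_0,\mathbf{S}_0,\boldsymbol{\lambda}_0)$ does not depend on $I_0$, so the minorization for $Q^m$ transfers verbatim to $\tilde{Q}^m$ on $\{1,\dots,p\}\times\{1,\dots,K\}^q\times(0,M]^p$). Your forced-path lower bound, the change of variables exploiting the invertibility of $\aleph$ through the last $p$ marks (this is exactly the paper's Lemma~\ref{lem:general-smallset} applied with $\Gamma=\beta\aleph$), and the prepended mark-$0$ step giving the $(m+1)$-step minorization with the same measure all match the paper's argument. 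Your explicit handling of zero marks in the first $m-p$ steps is, if anything, more careful than the paper's, which invokes Corollary~\ref{cor:lower-bound-small-set-step-m} even though that result is stated only for nonzero marks.

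One point needs more care than your concluding ``standard compactness argument'' suggests: the statement fixes $\tilde{\nu}$ \emph{before} quantifying over $K$ and $M$, so the box $[c_1,c_2]^p$ supporting your $\nu$ must not depend on $K$ or $M$ (only $\epsilon$ and $m$ may). The paper arranges this by taking the support of $\nu_1$ to be $\beta\aleph B_\eta$ for a fixed $\eta<1/(p+1)$ and pushing all the $M$-dependence into the constant $\epsilon_1$, via the threshold $\eta'$ below which $u$ must lie so that the perturbation $u\boldsymbol{\ell}$ keeps the preimage inside the admissible simplex $D$. Your construction can be arranged in the same way, but as written the support of $\nu$ is left implicit and could a priori shrink with $M$, which would break the quantifier order; this uniformity of the support is precisely what allows the proof of Theorem~\ref{thm:aper-irred} to use the single measure $\nu$ as the irreducibility measure for all starting points.
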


\begin{proof}[Proof of Proposition~\ref{prop:general-smallset}]
  Let $K\geq1$ and $M>0$ and denote $C = \{1, \dots, K\}^q \times (0, M]^p$.
  Let $\mathbf{x} = (\mathbf{s}, \boldsymbol{\ell})\in C$.  By
  Assumption~\ref{ass:access}, there exists an integer $m>0$, and an admissible
  path $(j_{1}, \cdots,j_{m})$ such that $\mathbf{s} + \sum_{n=1}^m
  \mathbf{J}(j_n) = \mathbf{s}_{o}$ and $\{j_{m-p+1},\cdots,j_{m}\} = \{1,
  \cdots, p\}$, where $\mathbf{s}_{o} \in \mathbb{Z}_+^q$ is independent of
  $m$, $M$ and $K$.

Now, for all subsets $A \in \mathcal{B}(\mathbb{Z}_{+}^q)$ and $B \in
\mathcal{B}(\mathbb{R}_{+}^p)$, we may write
\begin{align}\nonumber
 Q^m (\mathbf{x}, A\times B)
& = \mathbb{E}^{\mathbf{x}} \left[\1_{A}(\mathbf{S}_m) \1_{B}(\boldsymbol{\lambda}_m)\right] \nonumber \\
& \geq \mathbb{E}^{\mathbf{x}} \left[\1_{A}(\mathbf{S}_m)
  \1_{B}(\boldsymbol{\lambda}_m) \prod_{i=1}^{m}\1_{\{I_i = j_i\}}\right] \nonumber \\
\label{eq:ineq-last-m-step}
& = \1_{A}(\mathbf{s}_{o}) \mathbb{E}^{\mathbf{x}}
\left[\prod_{i=1}^{m-p}\1_{\{I_i = j_i\}}
\mathbb{E}^{\mathbf{X}_{{m-p}}}\left[\1_{B}(\boldsymbol{\lambda}_p) \prod_{i=1}^{p}\1_{\{I_i = j_{m-p+i}\}}\right]\right] \;,
\end{align}
where we used that, given $\mathbf{Y}_i$, $i=0,1,\dots,m-p$, the conditional
distribution of $\mathbf{Y}_j$, $j> m-p$ only depends on $\mathbf{X}_{{m-p}}$
and the Markov property.

Take now $K\geq1$ and $M>0$ and let $\mathbf{x} \in \{1, \cdots, K\}^q \times
(0, M]^p$.  By~(\ref{eq:intensity-dynamic})
and~(\ref{eq:notation-majorants-minorants-alpha}), under
$\{\mathbf{X}_0=\mathbf{x}\}$, we have ${\boldsymbol{\lambda}}_{m-p} \in (0,
M']^p$ where $M' = M + \overline{\alpha} \beta (m-p)$. Moreover, under
$\{\mathbf{X}_0=\mathbf{x},I_i = j_i,\,i=1,\dots,m-p\}$, the last $p$-steps
path $(j_{m-p+1}, \cdots, j_{m})$ is admissible, that is, $(j_{m-p+1}, \cdots,
j_{m}) \in \mathcal{A}_{p} (\mathbf{S}_{m-p})$. Applying
Corollary~\ref{cor:lower-bound-small-set-step-m}, we get that, under
$\{\mathbf{X}_0=\mathbf{x},I_i = j_i,\,i=1,\dots,m-p\}$,
\begin{multline}
\label{eq:ineq-last-p-step}
\mathbb{E}^{\mathbf{X}_{{m-p}}}\left[\1_{B}(\boldsymbol{\lambda}_m) \prod_{i=m-p+1}^{m}\1_{\{I_i = j_i\}}\right] \\ \geq
{\underline{\mu_0}}^{p}\int_{\mathbb{R}_+^p} \1_{B}\left(
\rme^{-\sum_{i=1}^{p}{t_i}\beta}\boldsymbol{\ell}+\rme^{-\sum_{i=2}^{p}{t_i}\beta} \beta\aleph_{j_1}+ \cdots+ \beta\aleph_{j_{p}}\right)
\\
\times \rme^{- C_{m, M'} \sum_{i=1}^{p}{t_i}} \mathrm{d}t_{p}\cdots \mathrm{d}t_1 \;,
\end{multline}
where  $C_{m, M'} = 2(\overline{\mu_0}+ M' + (m-1)\overline{\alpha})$.
Setting $u = \rme^{-\beta\sum_{i=1}^{p}{t_i}}$, $\boldsymbol{\vartheta} =
[\boldsymbol{\vartheta}(1), \cdots, \boldsymbol{\vartheta}(p-1)]^T$ where
$\boldsymbol{\vartheta}(j) = \rme^{-\sum_{i=j+1}^{p}{t_i}\beta}$ for $j = 1,
\dots, p-1$, Inequality (\ref{eq:ineq-last-p-step}) yields, under
$\{\mathbf{X}_0=\mathbf{x},I_i = j_i,\,i=1,\dots,m-p\}$,
\begin{multline}
\label{eq:ineq-last-p-step-2}
\mathbb{E}^{\mathbf{X}_{{m-p}}}\left[\1_{B}(\boldsymbol{\lambda}_m) \prod_{i=m-p+1}^{m}\1_{\{I_i = j_i\}}\right]
\\ \geq
{\underline{\mu_0}}^{p} \frac{1}{\beta^{p-1}} \int_{D} \1_{B}\left(u\boldsymbol{\ell}+\beta\aleph \boldsymbol{\vartheta}\right) \quad u^{\frac{C_{m, M'}}{\beta}} \mathrm{d}\boldsymbol{\vartheta}\mathrm{d}u\;,
\end{multline}
where $D = \{(u, \boldsymbol{\vartheta}): 0< u < \boldsymbol{\vartheta}(1) < \cdots < \boldsymbol{\vartheta}(p-1) < 1\}$.
Applying Lemma~\ref{lem:general-smallset} with $\Gamma = \beta\aleph$ and
$\gamma = \frac{C_{m, M'}}{\beta}$ in (\ref{eq:ineq-last-p-step-2}), we get that there exists $\epsilon_1 > 0$ and a
probability measure $\nu_1$ such that, under
$\{\mathbf{X}_0=\mathbf{x},I_i = j_i,\,i=1,\dots,m-p\}$,
\begin{align}
\label{eq:ineq-last-p-step-3}
\mathbb{E}^{\mathbf{X}_{{m-p}}}\left[\1_{B}\left(\boldsymbol{\lambda}_m\right) \prod_{i=m-p+1}^{m}\1_{\{I_i = j_i\}}\right] \geq \epsilon_1 \nu_1 (B)\;.
\end{align}
Moreover, it is important to note that $\nu_1$ only depends on $\beta$ and $\aleph$.

Using Inequalities (\ref{eq:ineq-last-m-step}) and
(\ref{eq:ineq-last-p-step-3}), we get that
\begin{align}
\label{eq:ineq-last-final-1}
Q^m (\mathbf{x}, A\times B) \geq \epsilon_1 \1_{A}(\mathbf{s}_{o}) \; \nu_1 (B) \mathbb{E}^{\mathbf{x}} \left[\prod_{i=1}^{m-p}\1_{\{I_i = j_i\}}\right] \;.
\end{align}
Using  that  $\mathbf{x} \in \{1, \cdots, K\}^q \times
(0, M]^p$ and Corollary~\ref{cor:lower-bound-small-set-step-m} with $g = 1$, we
further have that
$$
\mathbb{E}^{\mathbf{x}}
\left[\prod_{i=1}^{m-p}\1_{\{I_i = j_i\}}\right] \geq
{\underline{\mu_0}}^{m}\int_{\mathbb{R}_{+}^m} e^{-  C_{m, M} \sum_{i=1}^m t_i}
\; \mathrm{d}t_m\cdots \mathrm{d}t_1 = \left(\frac{\underline{\mu_0}}{C_{m, M}}\right)^{m}\;.
$$
From Inequality (\ref{eq:ineq-last-final-1}), we get that
$Q^m (\mathbf{x}, A \times B) \geq \epsilon \nu(A \times B)$
for some positive constant $\epsilon$ not depending on $\mathbf{x}$ and $\nu$
is the measure on $\mathsf{X}$ defined by $\nu(A \times B) =
\1_{A}(\mathbf{s}_{o}) \nu_1 (B)$, which does neither depend on $M$ nor $K$.
In other words, $C$ is a $(m, \epsilon, \nu)$-small
set (see \cite{meyn-tweedie-2009}).

To obtain that $C$ is also an $(m+1, \epsilon, \nu)$-small set (by possibly
decreasing $\epsilon$), we simply observe that we can carry out the same proof
as above with $m$ replaced by $m+1$ and the sequence of marks $(j_1, \cdots,
j_m)$ replaced by $(0,j_1, \cdots, j_m)$.
\end{proof}

\begin{proof}[Proof of Theorem~\ref{thm:aper-irred}]
  In Proposition~\ref{prop:general-smallset}, the measure $\nu$ does not depend
  on the set $C$, which can be chosen to contain any arbitrary point
  $\mathbf{x}$ in the state space $\mathsf{X}$. Hence the chain is
  $\phi$-irreducible with $\phi=\nu$. Moreover the set $C$ is $(m, \epsilon,
  \nu)$-small and $(m+1, \epsilon, \nu)$-small; hence, the chain is aperiodic,
  sse \cite[Section~5.4.3]{meyn-tweedie-2009}. These arguments hold for the
  Kernel $\tilde{Q}$ and thus we obtain Theorem~\ref{thm:aper-irred}.
\end{proof}

\subsection{Proof of Proposition~\ref{prop:drift-step-1}}
\label{sec:proof-partial-drift-condition}

Recall the definition of $\mathbf{u}$ in~(\ref{eq:defu}).  We shall further
denote
\begin{equation}
  \label{eq:defubar}
1\leq \underline{u}=\min_{i=1,\dots,p}\mathbf{u}(i)
\leq \overline{u}=\max_{i=1,\dots,p}\mathbf{u}(i) <\infty\;.
\end{equation}
The inequality $\underline{u} \geq 1$ follows from the fact that $\aleph^k$ has
non-negative entries for all $k\geq1$.

Observe that, given the initial condition $\mathbf{x}=(\mathbf{s},
  \boldsymbol{\ell}) \in \mathbb{Z}_+^q \times \mathbb{R}_+^q$, the hazard rate
  of $\Delta_1=\min(\Delta_1^i,\,i=0,\dots,p)$ reads
$$
\HR(u) = \sum_{i=0}^p \HR_i (u) = \mu_0^0 + \IdVect_p^T \boldsymbol{\mu}_0 + \IdVect_p^T\boldsymbol{\ell} \rme^{-\beta u} \;,
$$
where we used~(\ref{eq:hazard-rate}).
Moreover, for any $i \in \{0, 1, \cdots, p\}$, we get that, given $\Delta_1$,
the conditional probability that the first arrival's mark is $i$ is given by
\begin{align}
\label{eq:proba-mark-cond}
\mathbb{P}^{\mathbf{x}}(I_1 = i |\Delta_1)={\HR_i (\Delta_1)}/{\HR(\Delta_1)} \;.
\end{align}
Using  (\ref{eq:intensity-dynamic}), we have $\boldsymbol{\lambda}_1=\boldsymbol{\ell}\rme^{-\beta
  \Delta_1}+\beta\aleph_{I_1}\1_{\{I_1\neq0\}}$, and thus
\begin{align*}
& \quad [Q(\1_{\mathbb{Z}_+^q}\otimes V_{1,\gamma})](\mathbf{x}) \\
& = \mathbb{E}^{\mathbf{x}} \left[\exp{\left(\gamma \mathbf{u}^T \boldsymbol{\ell}\rme^{-\beta \Delta_1} + \gamma \beta \mathbf{u}^T \aleph_{I_1}\1_{\{I_1\neq0\}} \right)} \right] \\
& = \mathbb{E}^{\mathbf{x}} \left[
\exp\left(\gamma \mathbf{u}^T \boldsymbol{\ell}\rme^{-\beta \Delta_1}\right)
\sum_{i=0}^p \mathbb{P}^{\mathbf{x}} (I_1 = i | \Delta_1)
\mathrm{e}^{\gamma \beta (\mathbf{u}(i)-1)} \right] \;,
\end{align*}
by conditioning on $\Delta_1$ and using that
$\aleph^T\mathbf{u}=\mathbf{u}-\IdVect_p$ (see~(\ref{eq:defu})) and setting
$\mathbf{u}(0)=1$. Inserting Formula (\ref{eq:proba-mark-cond}), we get
$$
[Q(\1_{\mathbb{Z}_+^q}\otimes V_{1,\gamma})](\mathbf{x})
 = \mathbb{E}^{\mathbf{x}} \left[\sum_{i=0}^p \exp{\left(\gamma \mathbf{u}^T \boldsymbol{\ell}\rme^{-\beta \Delta_1} + \gamma \beta (\mathbf{u}(i)-1) \right)} {\HR_i (\Delta_1)}/{\HR(\Delta_1)}\right]\;.
$$
Using the hazard rate of $\Delta_1$ under
$\mathbb{P}^{\mathbf{x}}$, the corresponding probability density function is
given by $u\mapsto\HR(u)\mathrm{e}^{-\IR(u)}$ on $\mathbb{R}_{+}$, where
\begin{align}\nonumber
\IR_i(t)  & = \int_0^t \HR_i(u)\; \mathrm{d}u,
\quad t\geq0\;,\\
\label{eq:definition-ir}
\IR(t) & = \sum_{i=1}^p \IR^i (t) = \mu_0^0 t + \IdVect_p^T \boldsymbol{\mu}_0 t + \frac{\IdVect_p^T \boldsymbol{\ell} }{\beta} (1 - \rme^{-\beta t}) \;.
\end{align}
Hence we get
$$
[Q(\1_{\mathbb{Z}_+^q}\otimes V_{1,\gamma})](\mathbf{x}) = \int_0^{\infty}
\rme^{\gamma \mathbf{u}^T \boldsymbol{\ell} \rme^{-\beta
    u}-\IR(u)} \left[
\sum_{i=0}^p \rme^{\gamma \beta (\mathbf{u}(i) - 1)} {\HR_i (u)}\right]
\mathrm{d}u\;.
$$
Using that $\rme^{\gamma \beta (\mathbf{u}(0)-1)}\leq \rme^{\gamma \beta (\mathbf{u}(i)-1)}$ for all $i=1,\dots,p$ and the particular form of $\HR(0)$ and
$\HR(i)$ in (\ref{eq:hazard-rate}), we easily obtain that
$$
\sum_{i=0}^p \rme^{\gamma \beta (\mathbf{u}(i)-1)} {\HR_i (u)}
\leq \left[\mu_0^0+
\sum_{i=1}^p \rme^{\gamma \beta (\mathbf{u}(i)-1)}
\left(\boldsymbol{\mu}_{0}(i) + \boldsymbol{\ell}(i)\rme^{-u \beta}\right)
\right]\;.
$$
Consequently, we have
\begin{multline}
\label{eq:drift-uncond-1}
\frac1{V_{1,\gamma}(\boldsymbol{\ell})} [Q(\1_{\mathbb{Z}_+^q}\otimes V_{1,\gamma})](\mathbf{x}) \\
\leq {\left(\mu_0^0 + \sum_{i=1}^p \rme^{\gamma \beta (\mathbf{u}(i)-1)} \boldsymbol{\mu}_{0}(i)\right)}\frac1{V_{1,\gamma}(\boldsymbol{\ell})} \int_0^{\infty} \rme^{\gamma \mathbf{u}^T \boldsymbol{\ell} \rme^{-\beta u}-\IR(u)} \mathrm{d}u \\ + \left(\sum_{i=1}^p \rme^{\gamma \beta (\mathbf{u}(i)-1)} \boldsymbol{\ell}(i)\right)
\frac1{V_{1,\gamma}(\boldsymbol{\ell})}\int_0^{\infty}
\rme^{\gamma \mathbf{u}^T \boldsymbol{\ell} \rme^{-\beta u}-\IR(u) - u \beta}
\mathrm{d}u \;.
\end{multline}
Using (\ref{eq:definition-ir}), we have
\begin{align}
\frac1{V_{1,\gamma}(\boldsymbol{\ell})}
\int_0^{\infty}
\rme^{\gamma \mathbf{u}^T \boldsymbol{\ell} \rme^{-\beta
    u}-\IR(u)}
\mathrm{d}u & =
\int_0^{\infty} \rme^{(\gamma \mathbf{u}^T \boldsymbol{\ell} + \frac{\IdVect_p^T
    \boldsymbol{\ell} }{\beta} )(\rme^{-\beta u} - 1) - (\mu_0^0 + \IdVect_p^T
  \boldsymbol{\mu}_0) u} \mathrm{d}u \nonumber  \\
\label{eq:drift-limit-part-1}
&\to 0 \quad\text{as $\IdVect_p^T \boldsymbol{\ell}\to\infty$}\;,
\end{align}
where we used Lemma~\ref{lem:appendix-asymptotic-equivalence} with $a=\gamma \mathbf{u}^T \boldsymbol{\ell} + \frac{\IdVect_p^T
    \boldsymbol{\ell} }{\beta} $ and $\beta' = \mu_0^0 + \IdVect_p^T \boldsymbol{\mu}_0$.

Similarly, applying Lemma~\ref{lem:appendix-asymptotic-equivalence} with $\beta'= \beta + \mu_0^0 + \IdVect_p^T
  \boldsymbol{\mu}_0$, we obtain

\begin{align}
& \frac1{V_{1,\gamma}(\boldsymbol{\ell})}
\int_0^{\infty}
\rme^{\gamma \mathbf{u}^T \boldsymbol{\ell} \rme^{-\beta
    u}-\IR(u)-u\beta}
\mathrm{d}u \nonumber
\\
& =
\int_0^{\infty} \exp{\left((\gamma \mathbf{u}^T \boldsymbol{\ell} + \frac{\IdVect_p^T
    \boldsymbol{\ell} }{\beta})(\rme^{-\beta u} - 1) - (\beta + \mu_0^0 + \IdVect_p^T
  \boldsymbol{\mu}_0) u\right)} \mathrm{d}u \nonumber \\
\label{eq:drift-limit-part-2}
  & \sim \frac1{\gamma \beta \mathbf{u}^T \boldsymbol{\ell} + \IdVect_p^T
    \boldsymbol{\ell}} \quad\text{as $\IdVect_p^T \boldsymbol{\ell}\to\infty$} \;.
\end{align}
Therefore, inserting (\ref{eq:drift-limit-part-1}) and (\ref{eq:drift-limit-part-2}) in (\ref{eq:drift-uncond-1}), we get
\begin{align*}
\limsup_{\IdVect_p^T \boldsymbol{\ell}\to\infty}
\frac1{V_{1,\gamma}(\boldsymbol{\ell})} [Q(\1_{\mathbb{Z}_+^q}\otimes V_{1,\gamma})](\mathbf{x})
\leq \sup_{\boldsymbol{\ell}\in(0,\infty)^p}\frac{\sum_{i=1}^p \rme^{\gamma \beta (\mathbf{u}(i)-1)}
\boldsymbol{\ell}(i)}{\gamma \beta \mathbf{u}^T \boldsymbol{\ell} + \IdVect_p^T
    \boldsymbol{\ell}} \;.
\end{align*}
Now observe that, using~(\ref{eq:defubar}), we have
$$
0<\underline{u}\leq \mathbf{u}^T\boldsymbol{\ell}/\IdVect_p^T \boldsymbol{\ell}
\leq \overline{u} <\infty\;,
$$
and thus, for $\gamma>0$ small enough, using a Taylor expansion of the
exponential function at the origin, we have
\begin{align*}
\frac{\sum_{i=1}^p
   \mathrm{e}^{\gamma \beta (\mathbf{u}(i)-1)} \boldsymbol{\ell}(i)}
{\beta \gamma
  \mathbf{u}^T \boldsymbol{\ell} + {\IdVect_p^T \boldsymbol{\ell} }}
& \leq \frac{\IdVect_p^T \boldsymbol{\ell}+\gamma\beta(\mathbf{u}^T
  \boldsymbol{\ell} - \IdVect_p^T \boldsymbol{\ell})+ C\,\gamma^2\IdVect_p^T \boldsymbol{\ell}}
{\beta \gamma
  \mathbf{u}^T \boldsymbol{\ell} + {\IdVect_p^T \boldsymbol{\ell} }}
\\
&\leq 1- \beta\gamma\frac{\IdVect_p^T \boldsymbol{\ell}}{\beta \gamma
  \mathbf{u}^T \boldsymbol{\ell} + {\IdVect_p^T \boldsymbol{\ell} }} +
C'\,\gamma^2  \;,\\
&\leq 1- \beta\gamma\frac{1}{1+\beta \gamma \overline{u}} +
C'\,\gamma^2 =1-\beta\gamma+O(\gamma^2) \;,
\end{align*}
where $C, C'$ do not depend on $\boldsymbol{\ell}$. Hence taking
$\gamma_1 > 0$ small enough, we have for all $\gamma\in(0,\gamma_1]$,
$$
\theta(\gamma)=\sup_{\boldsymbol{\ell}\in(0,\infty)^p}\frac{\sum_{i=1}^p
  \mathrm{e}^{\gamma \beta (\mathbf{u}(i)-1)} \boldsymbol{\ell}(i)}
{\beta \gamma
  \mathbf{u}^T \boldsymbol{\ell} + {\IdVect_p^T \boldsymbol{\ell} }} <1\;.
$$
Hence choosing such a $\gamma$ and setting $\theta=\theta(\gamma)$, we
get~(\ref{eq:drift-step-1}) for $\boldsymbol{\ell}$ out of $(0,M]^p$ provided
that $M$ large enough. Since $V_{1,\gamma}$ is bounded on $(0,M]^p$, this
achieves the proof.
\subsection{Proof of Corollary~\ref{cor:moments-checkQvsN}}
\label{sec:proof-coroll-refc}
By \cite[Theorem~(1)]{bacry-delattre-hoffmann-muzy-2010}, using the definitions and notation of
  Section~\ref{sec:constr-hawk-proc}, we have
\begin{equation}
  \label{eq:lln-bacry}
\lim_{T\to\infty}\frac1T\mathbf{N}'(\1_{[0,T]}\otimes\mathbf{w}) = \overrightarrow{\mathbf{w}}^T(\IdMat_p-\aleph)^{-1}\boldsymbol{\mu}_0\quad\text{a.s.}
\end{equation}
On the other hand by the law of large numbers for Harris recurrent chains
(see~\cite[Theorem~17.1.7]{meyn-tweedie-2009}), we have
\begin{equation}
  \label{eq:lln-Qcheck}
  \lim_{n\to\infty}\frac1n\sum_{k=1}^n \mathbf{w}_o(\check{I}_k)=\check{\pi}(\mathbf{w}_o\otimes\1_{\mathbb{R}_+^p})\quad\text{a.s.}[P_*]\;.
\end{equation}
Here $[P_*]$ means that the result holds for any initial distribution on
$(\check{I}_0,\check{\boldsymbol{\lambda}}_0)$. In particular taking
$\check{\boldsymbol{\lambda}}_0=0$, we get that
\begin{equation}
  \label{eq:lln-Qcheck-bis}
\lim_{n\to\infty}\frac1n\mathbf{N}'(\1_{[0,\check{T}_n]}\otimes\mathbf{w}) =\check{\pi}(\mathbf{w}_o\otimes\1_{\mathbb{R}_+^p})\;,
\end{equation}
where $\check{T}_n$ is defined as $T_n$ in Section~\ref{sec:markov-assumption}
but for the unconstrained chain. Note that we used  that $\mathbf{w}_o(0)=0$,
so that only the marks in
$\{1,\dots,p\}$ are counted in the sum of the left-hand side of~(\ref{eq:lln-Qcheck}). Observe
now that under the event $\{\check{\boldsymbol{\lambda}}_0=0\}$, we have
$$
\mathrm{n}_0([0,\check{T}_n])+\mathbf{N}'(\1_{[0,\check{T}_n]}\otimes\1_{\{1,\dots,p\}})=n\;.
$$
(Recall that $\mathrm{n}_0$ is an independent homogeneous Poission point process which corresponds
to the events with marks equal to 0). Since $(\check{T}_n)$ is an increasing
sequence, we get that it diverges a.s. and thus, applying~(\ref{eq:lln-bacry})
with $\mathbf{w}=\1_{\{1,\dots,p\}}$, we
have,  under the event $\{\check{\boldsymbol{\lambda}}_0=0\}$,
$$
\lim_{n\to\infty}\frac1{\check{T}_n}\mathbf{N}'(\1_{[0,\check{T}_n]}\otimes\1_{\{1,\dots,p\}})= \IdVect_p^T(\IdMat_p-\aleph)^{-1}\boldsymbol{\mu}_0
\quad\text{a.s.}
$$
The last two displays yield, under the event $\{\check{\boldsymbol{\lambda}}_0=0\}$,
$$
\lim_{n\to\infty}\frac{\check{T}_n}{n}=\frac1{\mu_0^0+\IdVect_p^T(\IdMat_p-\aleph)^{-1}\boldsymbol{\mu}_0}\;.
$$
Applying~(\ref{eq:lln-bacry}),~(\ref{eq:lln-Qcheck-bis}) and the last  display,
we get~(\ref{eq:moments-checkQvsN}).

\subsection{Proof of Theorem~\ref{thm:nonergodic}}
\label{sec:proof-theor-refthm:n}
  The case~\ref{item:cas-toujours-recurrent} follows from Proposition
  \ref{prop:drift-step-1}.

  We next consider the case~\ref{item:pas-toujours-recurrent} so that
  $\psi(\{s_2\}\times(0,\infty)^p)>0$ for some
  $s_2\geq s^*$. By~(\ref{eq:transiant-cond}) there exists $i\in\{1,\dots,p\}$ such
  that $J(i)>0$. Thus, for any $m\geq1$, the constant sequence
  $(i,\dots,i)$ of length $m$ belongs to $\mathcal{A}_m(s_2)$ (see
  Definition~\ref{def:accessibility}). It follows that
  $\psi(\{s_2+m J(i)\}\times(0,\infty)^p)>0$ for all $m\geq1$.  We shall
  prove that for all $s'\geq1$ and $M>0$,
  \begin{equation}
    \label{eq:to-prove-for-transience}
    \liminf_{s\to\infty}\inf_{\boldsymbol{\ell}\in(0,M]^p}\mathbb{P}^{(s,\boldsymbol{\ell})}\left(\inf_{n\geq1}
    \mathbf{S}_n > s'\right) >0 \;.
  \end{equation}
  Take $s'$ such that $\psi(\{s'\}\times(0,\infty)^p)>0$. Define
  $s=s_2+m J(i)$ and take $m$ and $M>0$ large enough so that
  $\psi(\{s\}\times(0,M)^p)>0$ and
  $\mathbb{P}^{(s,\boldsymbol{\ell})}\left(\inf_{n\geq1} \mathbf{S}_n >
    s'\right) >0$ for all $\boldsymbol{\ell}\in(0,M]^p$. We thus have that the
  probability that the chain $\{\mathbf{X}_k,\,k\geq1\}$ avoids
  $\{s'\}\times(0,\infty)^p$ starting from anywhere in $\{s\}\times(0,M]^p$ is
  positive. Since these two sets have positive $\psi$-measure, we get by~\cite[Theorem~8.3.6]{meyn-tweedie-2009} that $Q$ is transient.

  Hence it only remains to show that~(\ref{eq:to-prove-for-transience}) holds.
  Applying Corollary~\ref{cor:moments-checkQvsN} and
  Condition~(\ref{eq:transiant-cond}), we have
  $$
\lim_{n\to\infty}M_n=\infty\quad \text{a.s.}[P_*], \quad\text{with}\quad M_n=\sum_{k=1}^n J_o(\check{I}_k)\;.
$$
Hence
$U=\inf_{n\geq1}M_n$ is valued in
$\mathbb{R}$ a.s.$[P_*]$ (it equals $-\infty$ with probability 0).  Let
$\epsilon\in(0, 1)$.  Now under the event $\{S_0=s\}$, we have that
$S_n = s+M_n$
as long as $S_k$ does not hit the constraint set
$\cup_{i=1}^p {A_i}$ for $k=1,\dots,n$.

For all  $s\geq s^*$, the
event $\{\inf_{n\geq0} S_n \geq  s^*\}$ coincides
with the event $\{U + s\geq s^*\}$. Hence, under
this latter event we have
$\inf_{n\geq1} S_n=s+U$.
Finally, we get that, for all $s\geq s^*$ and all
$\ell\in\mathbb{R}_+^p$,
\begin{equation}
  \label{eq:non-uniforme-en-ell}
\mathbb{P}^{(s,\boldsymbol{\ell})}\left(\inf_{n\geq1}
    S_n > s'\right) \geq \mathbb{P}^{\boldsymbol{\ell}}\left(U >
    (s'\vee s^*)-s \right) \;.
\end{equation}
(Recall that $U$ only depends on the unconstrained chain
$(\check{I}_n,\check{\boldsymbol{\lambda}}_n)$, $n\geq0$.)
Let us denote, for all $k\geq1$, $U_k=\inf_{n\geq k}M_n$ so that
$U=M_1\wedge\dots\wedge M_k\wedge U_{k+1}$. Since $M_1,\dots,M_k\geq -k\overline{J}$, we have, for all $s_1>k\overline{J}$,
$$
w_{s_1}(\boldsymbol{\ell}):=\mathbb{P}^{\boldsymbol{\ell}}\left(U > - s_1 \right)
=\mathbb{P}^{\boldsymbol{\ell}}\left(U_{k+1} > - s_1 \right)=[\check{Q}^k(w_{s_1})](\boldsymbol{\ell})\;.
$$
Observe that $w_{s_1}$ is a function valued in $[0,1]$. By Proposition~\ref{prop:embeddedunconstrained-geom-erg} and
\cite[Section~15]{meyn-tweedie-2009}), $\check{Q}$ admits a stationary
distribution $\check{\pi}$ and there are constants $\gamma,C>0$ and $\theta\in(0,1)$
not depending on $s_1$ such that, for all
$\boldsymbol{\ell},\boldsymbol{\ell}_o\in\mathbb{R}_+^p$ and $k\geq1$,
$$
[\check{Q}^k(w_{s_1})](\boldsymbol{\ell})\geq [\check{Q}^k(w_{s_1})](\boldsymbol{\ell}_o) - C\,\theta^k\,\left(V_{1,\gamma}(\boldsymbol{\ell})+V_{1,\gamma}(\boldsymbol{\ell}_o)\right)\;.
$$
Now, fix some $\boldsymbol{\ell}_o\in\mathbb{R}_+^p$ and $M>0$. Choose $k$
large enough so that
$$
c=C\,\theta^k\, \left(V_{1,\gamma}(M\IdVect_p)+V_{1,\gamma}(\boldsymbol{\ell}_o)\right)<1/2\;.
$$
Then we may choose $s_1$ large enough
so that $s_1>k\overline{J}$ and
$$
[\check{Q}^k(w_{s_1})](\boldsymbol{\ell}_o)
=\mathbb{P}^{\boldsymbol{\ell}_o}\left(U_{k+1} > -s_1 \right)
\geq \mathbb{P}^{\boldsymbol{\ell}_o}\left(U > -s_1 \right) \geq 1/2\;.
$$
The last four displays yield that, for all $\boldsymbol{\ell}\in(0,M]^p$, and
all $s_1$ large enough,
$$
\mathbb{P}^{\boldsymbol{\ell}}\left(U > - s_1 \right) \geq 1/2-c>0\;.
$$
This and~(\ref{eq:non-uniforme-en-ell})
yields~(\ref{eq:to-prove-for-transience}), which concludes the proof.

\subsection{Proof of Theorem~\ref{thm:geometrical-ergodicqGeQ2} : induction on $q$}
\label{sec:induction-q}

We have already  shown that the chains $Q$ and $\tilde{Q}$ are
$\psi$-irreducible and aperiodic and exhibited petite sets, see
Theorem~\ref{thm:aper-irred}.  To obtain a drift condition on the constrained
case $q\geq1$, we shall reason by induction on the number of constraints $q$.
We shall rely on the transition kernels $\tilde{Q}^{(-\mathcal{J})}$ introduced
in Section~\ref{sec:markov-assumption} for subsets $\mathcal{J}$ in
$\{1,\dots,q\}$. For such a given set $\mathcal{J}$, we shall further denote by
$\{(\check{I}_n, \check{\mathbf{S}}_n,
\check{\boldsymbol{\lambda}}_n),\,n\geq0\}$ the chain with transition kernel
$\tilde{Q}^{(-\mathcal{J})}$ which starts at the same state as $({I}_0,
{\mathbf{S}}_0, {\boldsymbol{\lambda}}_0)$ and, for all
$\mathbf{x}\in\mathbb{Z}_+^{q-\#\mathcal{J}}\times\mathbb{R}_+^p$, by
$\mathbb{E}^{(-\mathcal{J}),\mathbf{x}}[\dots]$ the expectation under
$\{(\check{\mathbf{S}}_0, \check{\boldsymbol{\lambda}}_0)=\mathbf{x}\}$.  (The
fact that this expectation only depends on $\mathbf{x}$ is explained in
Remark~\ref{rem:IvanishesInInitialCondition}.)

\begin{proposition}
\label{prop:drift-step-m-q-larger-1}
Let $p,q\geq1$ and define the
transition kernels $Q$ on the space
$\mathbb{Z}_+^q\times\mathbb{R}_+^p$ as in Section~\ref{sec:markov-assumption}.  Suppose that $\aleph$
is invertible and that Assumption~\ref{ass:Phi} holds.
For any non-empty $\mathcal{J}\subseteq\{1,\dots,q\}$, let $\{(\check{I}_n,
\check{\mathbf{S}}_n, \check{\boldsymbol{\lambda}}_n),\,n\geq0\}$ be a Markov
chain on the space
$\{0,\dots,p\}\times\mathbb{Z}_+^{q-\#\mathcal{J}}\times\mathbb{R}_+^p$ with
transition kernel $\tilde{Q}^{(-\mathcal{J})}$ defined in
Section~\ref{sec:markov-assumption}, and suppose that it satisfies the following
condition.
\begin{enumerate}[label=(C)]
\item\label{item:drift-step-q-larger-1}
For all $\gamma_1>0$ small enough, there exists $\gamma_0^*>0$ such that for all $\gamma_0\in(0,\gamma_0^*]$,
  \begin{equation}
    \label{eq:fromqeq0toqeq2-q-larger-1}
\lim_{m\to\infty}\quad\sup_{(\mathbf{s},\boldsymbol{\ell})\in\mathbb{Z}_+^{q-\#\mathcal{J}}\times\mathbb{R}_+^p}\quad\mathbb{E}^{(-\mathcal{J}),(\mathbf{s},\boldsymbol{\ell})}\left[\rme^{\gamma_0
    \sum_{k=1}^m
    \mathbf{w}(\check{I}_k)}V_{1,\gamma_1}(\check{\boldsymbol{\lambda}}_m-\boldsymbol{\ell})\right]
=0  \;,
  \end{equation}
where we denoted $\mathbf{w}$ is defined in~(\ref{eq:defw}).
\end{enumerate}
Then, for all $\gamma_1>0$ small enough, there exists $\gamma_0^*>0$ such that
for all $\gamma_0\in(0,\gamma_0^*]$, there exist $\theta \in (0, 1)$, $b>0$ and
$m \geq 1$, such that for any initial condition $\mathbf{x} \in\mathsf{X}$,
\begin{equation}
\label{eq:drift-step-m-q-larger-1}
Q^m(V_{0,\gamma_0}^{\otimes q}\otimes V_{1,\gamma_1})(\mathbf{x}) \leq \theta
\; (V_{0,\gamma_0}^{\otimes q}\otimes V_{1,\gamma_1})(\mathbf{x})+b \;,
\end{equation}
where $V_{1,\gamma}$ and $V_{0,\gamma}$ are defined in~(\ref{eq:V1}) and~(\ref{eq:V0}) respectively.
\end{proposition}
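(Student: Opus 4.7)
The plan is to reduce the $m$-step drift inequality for the product Lyapunov function $V_{0,\gamma_0}^{\otimes q}\otimes V_{1,\gamma_1}$ to the exponential-moment decay~\ref{item:drift-step-q-larger-1} by combining three ingredients: a multiplicative factorization of the Lyapunov function along the trajectory, a pathwise coupling between the $Q$-chain $(I_n,\mathbf{S}_n,\boldsymbol{\lambda}_n)$ and the $\tilde{Q}^{(-\mathcal{J})}$-chain $(\check I_n,\check{\mathbf{S}}_n,\check{\boldsymbol{\lambda}}_n)$ built from the same hazard-rate clocks, and $m$ iterations of the partial drift of Proposition~\ref{prop:drift-step-1}.

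The first step uses $\IdVect_q^T\mathbf{S}_m=\IdVect_q^T\mathbf{s}+\sum_{k=1}^m\mathbf{w}(I_k)$, which follows from the recursion~(\ref{eq:recurrence-of-S}) and the definition of $\mathbf{w}$. This yields the factorization
\begin{equation*}
Q^m(V_{0,\gamma_0}^{\otimes q}\otimes V_{1,\gamma_1})(\mathbf{s},\boldsymbol{\ell})
=V_{0,\gamma_0}^{\otimes q}(\mathbf{s})\,H_m(\mathbf{s},\boldsymbol{\ell}),
\qquad
H_m(\mathbf{s},\boldsymbol{\ell}):=\mathbb{E}^{(\mathbf{s},\boldsymbol{\ell})}\!\left[\rme^{\gamma_0\sum_{k=1}^m\mathbf{w}(I_k)}V_{1,\gamma_1}(\boldsymbol{\lambda}_m)\right],
\end{equation*}
so that~(\ref{eq:drift-step-m-q-larger-1}) is equivalent to $H_m(\mathbf{s},\boldsymbol{\ell})\leq\theta V_{1,\gamma_1}(\boldsymbol{\ell})+b/V_{0,\gamma_0}^{\otimes q}(\mathbf{s})$.

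The second step is a pathwise coupling driving both chains with a common family of exponential clocks from the hazard-rate description of Section~\ref{sec:markov-assumption}. Let $\tau$ be the first step at which the mark proposed by those clocks in the $\tilde{Q}^{(-\mathcal{J})}$-chain is forbidden in $Q$ (i.e.\ triggers a $\mathcal{J}$-indexed constraint); up to $\tau-1$ the paths coincide. Writing $V_{1,\gamma_1}(\check{\boldsymbol{\lambda}}_m)=V_{1,\gamma_1}(\boldsymbol{\ell})\,V_{1,\gamma_1}(\check{\boldsymbol{\lambda}}_m-\boldsymbol{\ell})$ and restricting to $\{\tau>m\}$, the corresponding contribution to $H_m$ is bounded by $V_{1,\gamma_1}(\boldsymbol{\ell})$ times the supremum in~(\ref{eq:fromqeq0toqeq2-q-larger-1}), which by hypothesis is made smaller than any prescribed $\epsilon>0$ by taking $m$ large.

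The main obstacle is to control the residual coming from $\{\tau\leq m\}$ and absorb it into the additive constant $b$. Since each jump of $\mathbf{S}$ shifts any coordinate by at most $\overline J$ (see~(\ref{eq:notation-majorants-minorants-J})), the event $\{\tau\leq m\}$ can occur only when $\mathbf{s}\in F_m:=\{\mathbf{s}\in\mathbb{Z}_+^q:\min_{j\in\mathcal{J}}\mathbf{s}_j\leq s^*+m\overline J\}$. Using $|\mathbf{w}(I_k)|\leq\overline J$ together with Cauchy--Schwarz and $m$ iterations of Proposition~\ref{prop:drift-step-1} at drift parameter $2\gamma_1$ to control the second moment of $V_{1,\gamma_1}(\boldsymbol{\lambda}_m)$, the residual in $H_m$ is at most $\rme^{\gamma_0 m\overline J}(\theta_1^{m/2}V_{1,\gamma_1}(\boldsymbol{\ell})+b_1)\1_{F_m}(\mathbf{s})$ for constants $\theta_1\in(0,1)$ and $b_1>0$ depending only on $\gamma_1$ and $m$. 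In the induction of Theorem~\ref{thm:geometrical-ergodicqGeQ2}, $\mathcal{J}$ is chosen so that on $F_m$ the factor $V_{0,\gamma_0}^{\otimes q}(\mathbf{s})$ is bounded by a finite constant $M_m$, using the ergodicity of $\tilde{Q}^{(+\mathcal{J}')}$ for the smaller subsets $\mathcal{J}'$ already handled at that stage. Multiplying by $V_{0,\gamma_0}^{\otimes q}(\mathbf{s})$, the total contribution of $H_m$ is thus
\begin{equation*}
V_{0,\gamma_0}^{\otimes q}(\mathbf{s})\,H_m(\mathbf{s},\boldsymbol{\ell})\leq
\bigl(\epsilon+M_m\,\rme^{\gamma_0 m\overline J}\theta_1^{m/2}\bigr)\,(V_{0,\gamma_0}^{\otimes q}\otimes V_{1,\gamma_1})(\mathbf{s},\boldsymbol{\ell})
+M_m\,\rme^{\gamma_0 m\overline J}\,b_1.
\end{equation*}
Choosing $m$ large so that $\epsilon<\tfrac14$ and $\gamma_0$ small so that $M_m\rme^{\gamma_0 m\overline J}\theta_1^{m/2}<\tfrac14$ yields~(\ref{eq:drift-step-m-q-larger-1}) with $\theta<\tfrac12$ and $b=M_m\rme^{\gamma_0 m\overline J}b_1$.
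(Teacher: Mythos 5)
Your overall architecture (the factorization $Q^m(V_{0,\gamma_0}^{\otimes q}\otimes V_{1,\gamma_1})(\mathbf{x})=(V_{0,\gamma_0}^{\otimes q}\otimes V_{1,\gamma_1})(\mathbf{x})\,\mathbb{E}^{\mathbf{x}}[\rme^{\gamma_0\sum_k\mathbf{w}(I_k)}V_{1,\gamma_1}(\boldsymbol{\lambda}_m-\boldsymbol{\ell})]$, a coupling with the reduced chain $\tilde{Q}^{(-\mathcal{J})}$, and $m$ iterations of Proposition~\ref{prop:drift-step-1}) is the right one, but the way you assemble it has a genuine gap for $q\geq2$. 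You fix a single $\mathcal{J}$, let the coupling break at the random time $\tau$, and try to absorb the contribution of $\{\tau\leq m\}$ into the additive constant $b$. You correctly note that $\{\tau\leq m\}$ forces $\mathbf{s}\in F_m=\{\min_{j\in\mathcal{J}}\mathbf{s}_j\leq s^*+m\overline{J}\}$, but $F_m$ only constrains \emph{one} coordinate of $\mathbf{s}$: all the other coordinates (whether in $\mathcal{J}$ or not) are free, so $V_{0,\gamma_0}^{\otimes q}(\mathbf{s})=\rme^{\gamma_0\IdVect_q^T\mathbf{s}}$ is unbounded on $F_m$ whenever $q\geq2$, and the claimed constant $M_m$ does not exist. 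The term $M_m\,\rme^{\gamma_0 m\overline{J}}b_1$ is therefore not a constant and the residual cannot be pushed into $b$. Your proposed repair --- invoking the ergodicity of $\tilde{Q}^{(+\mathcal{J}')}$ for smaller subsets from the induction of Theorem~\ref{thm:geometrical-ergodicqGeQ2} --- is not available here: the proposition is a self-contained statement whose only hypothesis is Condition~(C), and using external ergodicity assumptions would make the later induction circular.

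The paper avoids the decoupling event altogether by choosing $\mathcal{J}$ \emph{adaptively as a function of the starting state}: given $\mathbf{x}=(\mathbf{s},\boldsymbol{\ell})$ with $V_{0,\gamma_0}^{\otimes q}(\mathbf{s})>K$, it takes $\mathcal{J}=\{j:\mathbf{s}(j)\geq s^*_m(j)\}$ with $s^*_m(j)=1+m\overline{J}+\max(\mathbf{A}(j))$, which is non-empty for $K$ large; those coordinates cannot reach their constraint sets in $m$ steps, so the $Q$-chain coincides \emph{deterministically} with the $\tilde{Q}^{(-\mathcal{J})}$-chain (which retains the constraints for the small coordinates) up to time $m$, and Condition~(C) for that $\mathcal{J}$ makes the expectation factor smaller than any $\theta$ uniformly. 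The complementary region $\{V_{0,\gamma_0}^{\otimes q}(\mathbf{s})\leq K\}$ is then handled exactly by your ``crude'' bound: iterating Proposition~\ref{prop:drift-step-1} and using $V_{0,\gamma_0}^{\otimes q}(\mathbf{S}_m)\leq\rme^{\gamma_0\overline{J}m}V_{0,\gamma_0}^{\otimes q}(\mathbf{s})$, with $\gamma_0$ small enough that $\rme^{\gamma_0\overline{J}}\theta_1<1$; there the leftover term is bounded by $\rme^{\gamma_0\overline{J}m}b_1K/(1-\theta_1)$, a genuine constant. For $q=1$ your argument and the paper's coincide (there $F_m$ really is bounded), which is probably why the gap is easy to miss; to fix your write-up, replace the fixed $\mathcal{J}$ and the $\{\tau\leq m\}$ analysis by this state-dependent choice of $\mathcal{J}$ and the two-region split.
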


\begin{proof}

Take $\mathbf{x}=(\mathbf{s},\boldsymbol{\ell}) \in\mathsf{X}$. We already have the drift condition~(\ref{eq:drift-step-1}) of Proposition~\ref{prop:drift-step-1} (which holds for some $\gamma>0$ and
  thus, by Jensen inequality for any $\gamma_1\in(0,\gamma)$). By induction,
  for all small enough $\gamma_1>0$, there exists $b_1>0$ and $\theta_1>0$,
  such that, for all $m\geq1$,
\begin{align*}
Q^m (\1_{\mathbb{Z}_+^q} \otimes V_{1, \gamma_1})(\mathbf{x})
& \leq \theta_{1}^m V_{1, \gamma_1}(\boldsymbol{\ell})+ b_1 \left(1 + \theta_{1} + \dots + \theta_{1}^m\right) \nonumber \\
& \leq \theta_{1}^m V_{1, \gamma_1}(\boldsymbol{\ell})+ b_1 /\left(1 - \theta_{1}\right)\;.
\end{align*}
By~(\ref{eq:recurrence-of-S})
  and~(\ref{eq:notation-majorants-minorants-J}) we have, in the event $\{\mathbf{S}_0=\mathbf{s}\}$,
  $
  V_{0,\gamma_0}^{\otimes q}(\mathbf{S}_1)\leq \rme^{\gamma_0\overline{J}}\, V_{0,\gamma_0}^{\otimes q}(\mathbf{s}) \;,
$
then we get
$V_{0,\gamma_0}^{\otimes q}(\mathbf{S}_m)\leq\rme^{\gamma_0 \overline{J} m}V_{0,\gamma_0}^{\otimes q}(\mathbf{S}_0)\;.$
Hence, we obtain, for any $\gamma_0>0$,
  \begin{align}\nonumber
    Q^m(V_{0,\gamma_0}^{\otimes q}\otimes V_{1,\gamma_1})(\mathbf{x})
&\leq \rme^{\gamma_0 \overline{J} m}V_{0,\gamma_0}^{\otimes q}(\mathbf{s})
\left[\theta_{1}^m
V_{1,\gamma_1}(\boldsymbol{\ell})+ b_1 /\left(1 - \theta_{1}\right)\right] \\
    \label{eq:drift-ell-large}
&\hspace{-2cm}\leq  \left(\rme^{\gamma_0 \overline{J}}\theta_1\right)^m
(V_{0,\gamma_0}^{\otimes q}\otimes V_{1,\gamma_1})(\mathbf{x})+ \frac{b_1\rme^{\gamma_0 \overline{J}m}}{1-\theta_1}V_{0,\gamma_0}^{\otimes q}(\mathbf{s})\;.
\end{align}
Observe that, for a given $\theta_1<1$, we can choose $\gamma_0>0$ small enough
so that $\rme^{\gamma_0 \overline{J}}\theta_1<1$. It follows that, choosing
such a $\gamma_0$, for all $m\geq1$ and $K\geq1$, the
bound~(\ref{eq:drift-ell-large}) implies~(\ref{eq:drift-step-m-q-larger-1})
for all $\mathbf{x}=(\mathbf{s},\boldsymbol{\ell}) \in\mathsf{X}$ such that
$V_{0,\gamma_0}^{\otimes q}(\mathbf{s})\leq K$.  To address the case $V_{0,\gamma_0}^{\otimes
  q}(\mathbf{s})> K$, we shall prove the following assertion.
\begin{enumerate}[label=(D)]
\item\label{item:remains2} For all $\gamma_1>0$ small enough, there exists
  $\gamma_0^*>0$ such that for all $\gamma_0\in(0,\gamma_0^*]$,
  there exist $m\geq1$, $K>0$, $b\geq0$ and $\theta\in(0,1)$ such
  that~(\ref{eq:drift-step-m-q-larger-1}) holds for all
  $\mathbf{x}=(\mathbf{s},\boldsymbol{\ell}) \in\mathsf{X}$ satisfying
  $V_{0,\gamma_0}^{\otimes q}(\mathbf{s})> K$.
\end{enumerate}
To conclude the proof, we now show that Assertion~\ref{item:remains2}
holds. Let $m\geq1$ and $\mathbf{x}=(\mathbf{s},\boldsymbol{\ell})
\in\mathsf{X}$. Observe that, by~(\ref{eq:recurrence-of-S}), for all
$\gamma_0>0$, we have, in the event $\{\mathbf{X}_0=\mathbf{x}\}$,
\begin{align*}
V_{0,\gamma_0}^{\otimes q}(\mathbf{S}_m)& =V_{0,\gamma_0}^{\otimes
  q}(\mathbf{s})\,\rme^{\gamma_0 \sum_{k=1}^m \sum_{j=1}^q \mathbf{J}_j(I_k)}
 = V_{0,\gamma_0}^{\otimes
  q}(\mathbf{s})\,\rme^{\gamma_0 \sum_{k=1}^m \mathbf{w}(I_k)} \;.
\end{align*}
Hence we have, for all  $\mathbf{x}=(\mathbf{s},\boldsymbol{\ell})
\in\mathsf{X}$,
\begin{align}\nonumber
  Q^m(V_{0,\gamma_0}^{\otimes q}\otimes V_{1,\gamma_1})(\mathbf{x})&=
V_{0,\gamma_0}^{\otimes q}(\mathbf{s}) \,
\mathbb{E}^\mathbf{x}\left[\rme^{\gamma_0 \sum_{k=1}^m \mathbf{w}(I_k)}V_{1,\gamma_1}(\boldsymbol{\lambda}_m)\right]\\
  \label{eq:QmsGrand1}
&=
(V_{0,\gamma_0}^{\otimes q}\otimes V_{1,\gamma_1})(\mathbf{x}) \,
\mathbb{E}^\mathbf{x}\left[\rme^{\gamma_0 \sum_{k=1}^m \mathbf{w}(I_k)}V_{1,\gamma_1}(\boldsymbol{\lambda}_m-\boldsymbol{\ell})\right]\;.
\end{align}
Now the idea of the proof relies on the fact
that if $V_{0,\gamma_0}^{\otimes q}(\mathbf{S}_0)$ is large enough, then it
will take some time for all the components $j$ in a set $\mathcal{J}$
(with cardinal $\#\mathcal{J}$ at least $1$) of the multivariate spread to reach their
constraint set
\begin{align}
\label{eq:def-AofJ}
\mathbf{A}(j)=\bigcup_{i=1,\dots,p} {\mathbf{A}}_i(j)\;.
\end{align}
Up to this time the process defined by
$\{\mathbf{Y}_n^{(-\mathcal{J})}=(I_n,\mathbf{S}_n^{(-\mathcal{J})},\boldsymbol{\lambda}_n),\,n\geq0\}$,
where $\mathbf{S}_n^{(-\mathcal{J})}$ is obtained by removing the $j$-th component of
$\mathbf{S}_n$ for all $j\in\mathcal{J}$, behaves as a Markov chain with kernel $\tilde{Q}^{(-\mathcal{J})}$
(defined as $\tilde{Q}$ but without the $j$-th constraints with
$j\in\mathcal{J}$, see above).  More precisely, we
use the following coupling argument. For a positive integer $m$, we denote
\begin{align}
\label{eq:def-sm}
s^*_{m}(j) = 1+m \overline{J} + \max(\mathbf{A}(j))\;.
\end{align}
We can then define a Markov chain $\{(\check{I}_n, \check{\mathbf{S}}_n,
\check{\boldsymbol{\lambda}}_n),\,n\geq0\}$ with transition kernel
$\tilde{Q}^{(-\mathcal{J})}$ and such that in the event $\{\mathbf{S}_0(j)\geq
s^*_{m}(j),\,j\in\mathcal{J}\}$, the sequence $\{(\check{I}_n, \check{\mathbf{S}}_n,
\check{\boldsymbol{\lambda}}_n),\,n=0, 1, \dots, m\}$ coincides with
$\{(I_n,\mathbf{S}_n^{(-\mathcal{J})}, \boldsymbol{\lambda}_n),\, n=0, 1, \dots, m\}$.
Then, by construction, for all $\mathbf{x}=(\mathbf{s},\boldsymbol{\ell})$
such that $\mathbf{s}(j)\geq s^*_{m}(j)$ for all $j\in\mathcal{J}$, denoting
$\mathbf{x}'=(\mathbf{s}^{(-\mathcal{J})},\boldsymbol{\ell})$, we have
\begin{align}\label{eq:alpha0V1eta1}
 \mathbb{E}^{\mathbf{x}}\left[\rme^{\gamma_0 \sum_{k=1}^m \mathbf{w}(I_k)}V_{1,\gamma_1}(\boldsymbol{\lambda}_m-\boldsymbol{\ell})\right]
= \mathbb{E}^{{(-\mathcal{J})},\mathbf{x}'}\left[\rme^{\gamma_0 \sum_{k=1}^m
    \mathbf{w}(\check{I}_k)}V_{1,\gamma_1}(\boldsymbol{\lambda}_m-\boldsymbol{\ell})\right]\;.
\end{align}
Now, we take $K>0$ such that $\log K\geq
p\gamma_0\max_{j=1,\dots,q}s^*_m(j)$. We let $\mathcal{J}$ denote the set of all indices $j\in\{1,\dots,q\}$
such that $\mathbf{s}(j)\geq s^*_m(j)$. The set $\mathcal{J}$ is not empty since $V_{0,\gamma_0}^{\otimes
  q}(\mathbf{s})> K$ implies $\max(\mathbf{s})\geq \max(s^*_m)$.
Hence, Equality~(\ref{eq:alpha0V1eta1}) applies for  all $\mathbf{x}=(\mathbf{s},\boldsymbol{\ell})
\in\mathsf{X}$ satisfying $V_{0,\gamma_0}^{\otimes
  q}(\mathbf{s})> K$.
For $\gamma^*_0$ as in Condition~\ref{item:drift-step-q-larger-1} (we can
choose it independently of $\mathcal{J}$ since there is a finite number of
such subsets), we obtain Assertion~\ref{item:remains2}.
\end{proof}

Condition~\ref{item:drift-step-q-larger-1} of Proposition~\ref{prop:drift-step-m-q-larger-1} needs
to be simplified. A first step in this direction is given by the following lemma.
\begin{lemma}\label{lem:new-fromqeqtoqPlus1}
  Let $\{(I_n, \mathbf{S}_n,{\boldsymbol{\lambda}}_n),\,n\geq0\}$ be the
  Markov chain on the space $\{0,1,\dots,p\}\times\mathbb{Z}_+^q\times\mathbb{R}_+^p$ with
  transition kernel $\tilde{Q}$ defined in
  Section~\ref{sec:markov-assumption}. Let
  $\mathbf{w}:\{0,1,\dots,p\}\to\mathbb{R}$. Suppose that the following
  assertions hold.
\begin{enumerate}[label=(E-\arabic*)]
\item \label{item:q-larger-1-ergo} For all $\gamma'>0$ small enough, there
  exists $\gamma^*>0$ such that, for all
  $\gamma\in(0,\gamma^*]$,
$\tilde{Q}$ is $(\1_{\{0,1,\dots,p\}}\otimes V_{0,\gamma}^{\otimes q}\otimes
V_{1,\gamma'})$-geometrically ergodic, where $V_{0,\gamma}$ and $V_{1,\gamma'}$ are defined in~(\ref{eq:V0}) and~(\ref{eq:V1}) respectively.
\item \label{item:q-larger-1-decreasing-en-m}
The
  stationary distribution $\tilde{\pi}$ of $\tilde{Q}$ satisfies $\tilde{\pi}\left[\mathbf{w}\otimes\1_{\mathbb{Z}_+^q\times\mathbb{R}_+^p}\right]<0$.
\end{enumerate}
Then, for all $\gamma_1>0$ small enough, there are positive constants $\gamma$, $\rho$ and $\gamma_0^*$ such that, for all
$\gamma_0\in(0,\gamma_0^*]$,
\begin{equation}
    \label{eq:q-larger-1fromqeq0toqeq1}
\sup_{\mathbf{x}\in\mathsf{X}}\;\sup_{m\geq1}\;\frac
{\mathbb{E}^{\mathbf{x}}\left[\rme^{\gamma_0\sum_{k=1}^m
      \{\mathbf{w}(I_k)+\rho\}}\,(V_{0,\gamma}^{\otimes q}\otimes V_{1,\gamma_1})(\mathbf{X}_{m})\right]}
{(V_{0,\gamma}^{\otimes q}\otimes V_{1,\gamma_1})(\mathbf{x})}<\infty   \;.
  \end{equation}
\end{lemma}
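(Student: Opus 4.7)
The goal is to bound, uniformly in $m$, the ratio in~(\ref{eq:q-larger-1fromqeq0toqeq1}), which I rewrite as $\tilde K_{\gamma_0}^m V(\mathbf{y})/V(\mathbf{y})$, where
$$
\tilde K_{\gamma_0}\,f(\mathbf{y}):=\mathbb{E}^{\mathbf{y}}\!\left[\mathrm{e}^{\gamma_0(\mathbf{w}(I_1)+\rho)}\,f(\mathbf{Y}_1)\right]
$$
is the twisted transition kernel, and $V:=\1_{\{0,\ldots,p\}}\otimes V_{0,\gamma}^{\otimes q}\otimes V_{1,\gamma_1}$. I first pick $\rho\in(0,-\tilde\pi(\mathbf{w}))$ so that, writing $\tilde w:=\mathbf{w}+\rho$, condition~\ref{item:q-larger-1-decreasing-en-m} yields $\tilde\pi(\tilde w)=-c'<0$. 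Since $\mathbf{w}$ is defined on the finite set $\{0,\ldots,p\}$ it is uniformly bounded, making $\tilde K_{\gamma_0}$ a norm-$O(\gamma_0)$ perturbation of $\tilde Q$ on $\mathcal{L}^{\infty}_V$.

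The core of the argument is to exhibit a function $V^*$ on $\mathsf{Y}$ with $c_1V\leq V^*\leq c_2V$ for some $c_1,c_2>0$, together with a twisted geometric drift
\begin{equation*}
\tilde K_{\gamma_0}\,V^*\leq \eta\,V^*+b\,\1_{C^*},\qquad\eta\in(0,1),\ b<\infty,
\end{equation*}
on some bounded set $C^*$. Since $V$ (and hence $V^*$) is bounded from below by a positive constant (as $V_{0,\gamma}^{\otimes q}\geq \mathrm{e}^{\gamma q}$ and $V_{1,\gamma_1}\geq 1$), the indicator $\1_{C^*}$ can be absorbed into $V^*$. Iteration of the drift then yields $\tilde K_{\gamma_0}^m V^*\leq(\eta^m+C_0)V^*$ uniformly in $m$, and the conclusion~(\ref{eq:q-larger-1fromqeq0toqeq1}) follows from the equivalence $V\asymp V^*$.

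To construct $V^*$, I would use a perturbative argument around the unperturbed kernel $\tilde Q$, exploiting that assumption~\ref{item:q-larger-1-ergo} is equivalent, via \cite[Chapter~15]{meyn-tweedie-2009}, to a geometric drift $\tilde QV\leq\theta V+K\1_{C_0}$ with $\theta<1$, and hence to a spectral gap for $\tilde Q$ on $\mathcal{L}^\infty_V$. At $\gamma_0=0$ the leading eigenpair of $\tilde Q$ is $(1,\1)$ with left eigenmeasure $\tilde\pi$. Standard perturbation theory for $V$-geometrically ergodic kernels with bounded additive functional $\tilde w$ produces, for $\gamma_0$ small, a leading eigenpair $(\Lambda(\gamma_0),V^*)$ of $\tilde K_{\gamma_0}$ with
$$
\Lambda(\gamma_0)=1+\gamma_0\tilde\pi(\tilde w)+O(\gamma_0^2)=1-\gamma_0 c'+O(\gamma_0^2)<1
$$
and eigenfunction $V^*=V\cdot\phi_{\gamma_0}$ satisfying $\phi_{\gamma_0}\to 1$ as $\gamma_0\to 0$. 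The negative-mean condition~\ref{item:q-larger-1-decreasing-en-m} is precisely what forces $\Lambda(\gamma_0)<1$.

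The chief obstacle is ensuring the strict two-sided equivalence $c_1V\leq V^*\leq c_2V$. The naive first-order correction $V^*\approx V(1+\gamma_0 g)$, where $g$ satisfies the Poisson equation $g-\tilde Q g=\tilde w-\tilde\pi(\tilde w)$, is only known to obey $|g|\leq R(V+1)$ through $V$-geometric ergodicity and is a priori unbounded, so a multiplicative correction of the form $V\mathrm{e}^{\gamma_0 g}$ need not stay within constant multiples of $V$. To circumvent this I would argue directly by redoing the explicit Hawkes drift computation of Proposition~\ref{prop:drift-step-1} on the twisted kernel $\tilde K_{\gamma_0}$: the factor $\mathrm{e}^{\gamma_0\tilde w(I_1)}$ being bounded contributes only a multiplicative perturbation of size $1+O(\gamma_0)$ in the estimate of $\tilde K_{\gamma_0} V/V$, while averaging against $\tilde\pi$ injects the extra contracting term $-\gamma_0 c'$ provided by the negative-mean condition, keeping $\eta$ strictly below $1$ outside a large compact set $C^*$ independently of $\mathbf{y}$.
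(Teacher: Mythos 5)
There is a genuine gap, and it sits exactly where you place your ``chief obstacle'': the fallback you propose does not work. A one-step twisted drift inequality $\tilde K_{\gamma_0}V^*\leq\eta V^*+b\1_{C^*}$ with $\eta<1$ and $C^*$ bounded cannot be obtained by redoing the computation of Proposition~\ref{prop:drift-step-1} on the tilted kernel, for two reasons. First, Condition~\ref{item:q-larger-1-decreasing-en-m} is a statement about the \emph{stationary} mean of $\mathbf{w}$; it gives no pointwise control of $\mathbb{E}^{\mathbf{y}}[\mathbf{w}(I_1)]$, which for states far from equilibrium (e.g.\ $\boldsymbol{\lambda}_0$ large in directions favouring marks with $\mathbf{w}(i)>0$) can perfectly well be positive. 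There is no ``averaging against $\tilde\pi$'' in a one-step expectation from a fixed state, so the claimed contracting term $-\gamma_0 c'$ simply does not appear. Second, and independently, the factor $V_{0,\gamma}^{\otimes q}$ has no one-step contraction mechanism at all: $\mathbf{S}$ can increase by up to $\overline{J}$ per step with positive probability, so $V_{0,\gamma}^{\otimes q}(\mathbf{S}_1)$ can exceed $V_{0,\gamma}^{\otimes q}(\mathbf{S}_0)$, and the residual set where Proposition~\ref{prop:drift-step-1} fails to contract, namely $\{\boldsymbol{\ell}\in(0,M]^p\}$ with $\mathbf{s}$ arbitrary, is unbounded in $\mathbf{s}$ and cannot be absorbed into a bounded $C^*$.

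The missing idea is to pass to $l$-step transitions with $l$ large. The paper fixes $a>0$ with $b:=\tilde\pi(\exp(a\,\mathbf{w}\otimes\1))<1$ (possible by~\ref{item:q-larger-1-decreasing-en-m}), sets $\gamma_0=a/l$, and uses the $V$-geometric convergence $\|\tilde Q^l(\cdot,\cdot)-\tilde\pi\|\leq R\rme^{-\kappa l}V$ to show that $\tilde Q^l$ applied to $w^*=\rme^{a\mathbf{w}}V_{0,\gamma}^{\otimes q}V_{1,\gamma_1}$ is at most $b^{1/4}V_{0,\gamma}^{\otimes q}V_{1,\gamma_1}$ once $l$ is large; the stationary negative mean is thus converted into an actual contraction of the tilted \emph{multi-step} kernel. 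The full product $\rme^{\gamma_0\sum_{k\leq m}\mathbf{w}(I_k)}$ is then split over the $l$ arithmetic subsequences $\{kl+i\}$ via H\"older's inequality, each factor being handled by successive conditioning with the $l$-step bound. Your first, spectral route (a leading eigenpair $(\Lambda(\gamma_0),V^*)$ of the tilted kernel with $\Lambda(\gamma_0)=1+\gamma_0\tilde\pi(\tilde w)+O(\gamma_0^2)$) could in principle be made rigorous by invoking multiplicative ergodic theory for $V$-uniformly ergodic chains with bounded functionals, and would buy a cleaner statement; but as written you do not carry it out, you correctly note you cannot control $V^*\asymp V$, and the concrete substitute you offer is the step that fails.
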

\begin{proof}
Using Assumption~\ref{item:q-larger-1-ergo}, for any $\gamma'>0$ small enough, we may choose
$\gamma,\gamma_1\in(0,\gamma^*]$ such that $\tilde{Q}$ is $\1_{\{0, \dots, p\}} \otimes V_{0,\gamma}^{\otimes q} \otimes V_{1,
  {\gamma_1}}$-geometrically ergodic.
Let $\overline{w}=\max_{0\leq i\leq p}\mathbf{w}(i)$. Then, for any
$\gamma^*>0$, there
$\gamma, \gamma_1 > 0$ small enough, we have  for
all $j\geq1$ and $\mathbf{x}=(\mathbf{s},\boldsymbol{\ell})\in\mathsf{X}$,
\begin{align}
\nonumber
  \mathbb{E}^{\mathbf{x}}\left[\rme^{\gamma_0
    \sum_{k=1}^{j}
    \mathbf{w}(I_k)}\,(V_{0,\gamma}^{\otimes q}\otimes V_{1,\gamma_1})(\mathbf{X}_{j})\right]
&\leq \rme^{\gamma_0\overline{w}j}
  \mathbb{E}^{\mathbf{x}}\left[(V_{0,\gamma}^{\otimes q}\otimes V_{1,\gamma_1})(\mathbf{X}_{j})\right]\\
  \label{eq:crude-bound}
&\leq C\,\rme^{\gamma_0\overline{w}j} (V_{0,\gamma}^{\otimes q}\otimes V_{1,\gamma_1})(\mathbf{x}) \;,
\end{align}
where, in the last inequality, we used~\ref{item:q-larger-1-ergo} and thus
the constant $C$ only depends on $\gamma,\gamma_1$.
Let now $m,l\geq1$ and write  $m=nl+j$ with $j\in\{0,\dots,l\}$. Then,
using~(\ref{eq:crude-bound}), the H\"{o}lder inequality and~(\ref{eq:crude-bound})
again, we get
\begin{align}
\nonumber
&\mathbb{E}^{\mathbf{x}}\left[\rme^{\gamma_0
    \sum_{k=1}^{m}
    \mathbf{w}(I_k)}\,\,(V_{0,\gamma}^{\otimes q}\otimes V_{1,\gamma_1})(\mathbf{X}_{m})\right]\\
&\leq  C\,\rme^{\gamma_0\overline{w}j} \, \mathbb{E}^{\mathbf{x}}\left[\rme^{\gamma_0
    \sum_{k=1}^{nl}
    \mathbf{w}(I_k)}\,(V_{0,\gamma}^{\otimes q}\otimes V_{1,\gamma_1})(\mathbf{X}_{nl})\right]\\
\nonumber
&=  C\,\rme^{\gamma_0\overline{w}j} \, \mathbb{E}^{\mathbf{x}}\left[\left\{\prod_{i=1}^l\rme^{\gamma_0
    \sum_{k=0}^{n-1}
    \mathbf{w}(I_{kl+i})}\right\}\,(V_{0,\gamma}^{\otimes q}\otimes V_{1,\gamma_1})(\mathbf{X}_{nl})\right]\\
\nonumber
&\leq C\,\rme^{\gamma_0\overline{w}j} \,
\left(\prod_{i=1}^l\mathbb{E}^{\mathbf{x}}\left[\rme^{l\,\gamma_0
      \sum_{k=0}^{n-1} \mathbf{w}(I_{kl+i})}\,(V_{0,\gamma}^{\otimes q}\otimes
    V_{1,\gamma_1})(\mathbf{X}_{nl})\right]\right)^{1/l}\\
\nonumber
&\leq  C\,\rme^{\gamma_0\overline{w}j} \,
\left(\prod_{i=1}^l C\,\rme^{\gamma_0\overline{w}(l-i)} \,
\mathbb{E}^{\mathbf{x}}\left[\rme^{l\,\gamma_0
      \sum_{k=0}^{n-1} \mathbf{w}(I_{kl+i})}\,(V_{0,\gamma}^{\otimes q}\otimes
    V_{1,\gamma_1})(\mathbf{X}_{(n-1)l+i})\right]\right)^{1/l}\\
  \label{eq:holder}
&\leq C^2 \rme^{2\gamma_0\overline{w}l} \,
\left(\prod_{i=1}^l  \,
\mathbb{E}^{\mathbf{x}}\left[\rme^{l\,\gamma_0
      \sum_{k=0}^{n-1} \mathbf{w}(I_{kl+i})}\,(V_{0,\gamma}^{\otimes q}\otimes
    V_{1,\gamma_1})(\mathbf{X}_{(n-1)l+i})\right]\right)^{1/l}\;.
\end{align}
Applying \cite[Theorem~15.0.1]{meyn-tweedie-2009}, $\tilde{Q}$ has a unique invariant
probability measure $\tilde{\pi}$ and
there exists $R>0$ and $\kappa>0$ such that, for all
 $i\in\{0, \dots, p\}$, $\mathbf{s} \in \mathbb{Z}_+^{q}$ and
$\boldsymbol{\ell}\in\mathbb{R}_+^p$ and all positive integer $m$,
\begin{align}\label{eq:q-larger-1tildeQ-gem-ergodic}
\|\tilde{Q}^{m}((i, \mathbf{s}, \boldsymbol{\ell}),\cdot) - \tilde{\pi}\|_{(\gamma, \gamma_1)} & \leq R\,
\mathrm{e}^{-\kappa m} \;  V_{0,\gamma}^{\otimes q}(\mathbf{s}) \; V_{1, \gamma_1}(\boldsymbol{\ell})\;,
\end{align}
where for any signed measure $\xi$ on $\{0,\dots, p\}\times\mathbb{Z}_+^{q}\times\mathbb{R}_+^p$, we
set
$$
\|\xi\|_{\gamma,\gamma_1}=\sup\left\{\xi(g)~:~ \|g/(\1_{\{0, \dots, p\}} \otimes V_{0,\gamma}^{\otimes q} \otimes V_{1, {\gamma_1}})\|_\infty\leq1\right\}\;.
$$
Define
$$
w^*(i, \mathbf{s}, \boldsymbol{\ell})=\rme^{l\gamma_0 \mathbf{w}(i)}V_{0,\gamma}(\mathbf{s}) V_{1,\gamma_1}(\boldsymbol{\ell})\;,
$$
for all $(i, \mathbf{s}, \boldsymbol{\ell})\in\{0,\dots, p\}\times
\mathbb{Z}_+^q \times\mathbb{R}_+^p$. Then
$$
\left\|w^*/(\1_{\{0, \dots, p\}} \otimes V_{0,\gamma}^{\otimes q} \otimes V_{1,
  {\gamma_1}})\right\|_\infty\leq \rme^{l\gamma_0 \overline{w}} \;.
$$
Thus,~(\ref{eq:q-larger-1tildeQ-gem-ergodic}) yields that, for all $(i, \mathbf{s}, \boldsymbol{\ell})\in\{0,\dots, p\}\times \mathbb{Z}_+^q \times\mathbb{R}_+^p$,
\begin{equation}
  \label{eq:oubli}
\tilde{Q}^{l}((i, \mathbf{s}, \boldsymbol{\ell}),w^*)\leq \tilde{\pi}(w^*)+ R\,
\mathrm{e}^{-\kappa l} \; \rme^{l\gamma_0 \overline{w}} \;  V_{0,\gamma}^{\otimes q}(\mathbf{s}) \; V_{1, \gamma_1}(\boldsymbol{\ell})
\end{equation}
Now, by dominated convergence, we have, as $a\to0$,
$$
\tilde{\pi}\left(a^{-1}\left[\exp(a \mathbf{w}\otimes\1_{\mathbb{Z}_+^q\times\mathbb{R}_+^p})-1\right]\right)
\to\tilde{\pi}\left(\mathbf{w}\otimes\1_{\mathbb{Z}_+^q\times\mathbb{R}_+^p}\right)\;,
$$
which is negative by
Condition~\ref{item:q-larger-1-decreasing-en-m}. Hence there exists $a>0$
such that
\begin{equation}
  \label{eq:ab-def}
b:=\tilde{\pi}\left(\exp(a\mathbf{w}\otimes\1_{\mathbb{Z}_+^q\times\mathbb{R}_+^p})\right)<1 \;.
\end{equation}
Now, by dominated convergence, we may decrease $\gamma^*$
so that $\gamma,\gamma_1\in(0,\gamma^*]$ implies $\tilde{\pi}(w^*)\leq b^{1/2}$
for all $l$ and $\gamma_0$ satisfying $l\gamma_0=a$. Then
we have for these $a$ and $\gamma,\gamma_1$, which set the values of $R$ and $\kappa$, and for
all  $l$ and $\gamma_0$ satisfying $l\gamma_0=a$,
\begin{align*}
\tilde{Q}^{l}((i, \mathbf{s}, \boldsymbol{\ell}),w^*) \leq
\left(b^{1/2}+ R\,
\mathrm{e}^{-\kappa l} \; \rme^{a\overline{w}}\right) \;  V_{0,\gamma}^{\otimes
q}(\mathbf{s}) \; V_{1, \gamma_1}(\boldsymbol{\ell}) \;.
\end{align*}
We now denote by $L$ some integer such that $b^{1/2}+ R\,
\mathrm{e}^{-\kappa l} \; \rme^{a\overline{w}}\leq b^{1/4}$ for all $l\geq
L$. Then, provided that $l\gamma_0=a$ and $l\geq L$, we get
\begin{align*}
\tilde{Q}^{l}((i, \mathbf{s}, \boldsymbol{\ell}),w^*) \leq b^{1/4}V_{0,\gamma}^{\otimes
q}(\mathbf{s}) \; V_{1, \gamma_1}(\boldsymbol{\ell}) \;.
\end{align*}
Using this bound, by successive conditioning, we get that, for all $n\geq1$ and
all $1\leq j\leq l$, and $\mathbf{x}=(\mathbf{s},\boldsymbol{\ell})$, if
$l\gamma_0=a$,
\begin{align*}
&\mathbb{E}^{\mathbf{x}}\left[\rme^{l\,\gamma_0 \sum_{k=0}^{n-1}
    \mathbf{w}(I_{kl+i})} \,(V_{0,\gamma}^{\otimes q}\otimes
    V_{1,\gamma_1})(\mathbf{X}_{(n-1)l+i}))\right]\\
&\leq
b^{(n-1)/4}\mathbb{E}^{\mathbf{x}}\left[\rme^{l\,\gamma_0\mathbf{w}(I_{i})}
 \,(V_{0,\gamma}^{\otimes q}\otimes
    V_{1,\gamma_1})(\mathbf{X}_{i}))\right]\\
&\leq  C\,\rme^{a\,\overline{w}i}\,b^{(n-1)/4}\,(V_{0,\gamma}^{\otimes q}\otimes V_{1, \gamma_1})(\mathbf{x}) \;,
\end{align*}
where, in the last inequality, we used
$\rme^{l\,\gamma_0\mathbf{w}(I_{j})}\leq\rme^{a\overline{w}}$ and
Condition~\ref{item:q-larger-1-ergo} and thus $C$ is some constant not
depending on $n$.
Inserting this in~(\ref{eq:holder}), we get, for $m=nl+j$ with
$j\in\{0,\dots,l\}$ and if $l\gamma_0=a$ with $l\geq L$,
\begin{align}
\nonumber
&\mathbb{E}^{\mathbf{x}}\left[\rme^{\gamma_0
    \sum_{k=1}^{m}
    \mathbf{w}(I_k)}\,\,(V_{0,\gamma}^{\otimes q}\otimes
  V_{1,\gamma_1})(\mathbf{X}_{m})\right]\\
\nonumber
&\leq  C^3 \rme^{3\,a\,\overline{w}} \,b^{(n-1)/4}\,(V_{0,\gamma}^{\otimes q}\otimes V_{1, \gamma_1})(\mathbf{x}) \\
\nonumber
&\leq  C^3 \rme^{3\,a\,\overline{w}}b^{-1/2}
\,b^{m/(4l)}\,(V_{0,\gamma}^{\otimes q}\otimes V_{1, \gamma_1})(\mathbf{x}) \\
&\leq  C^3 \rme^{3\,a\,\overline{w}}b^{-1/2} \,b^{\gamma_0\,m/(4a)}\,(V_{0,\gamma}^{\otimes q}\otimes V_{1, \gamma_1})(\mathbf{x})
\end{align}
Now, this bound holds for any $\gamma_0=a/l$ with $l\geq L$, where the
constants $a$ and $b$ are chosen so that~(\ref{eq:ab-def}) holds and $L$ only
depends on $a,b,\gamma$ and $\gamma_1$.  This
yields~(\ref{eq:q-larger-1fromqeq0toqeq1}) by taking $\rho>0$ such that
$\rme^\rho<b^{-1/(4a)}$. Note that we have shown
that~(\ref{eq:q-larger-1fromqeq0toqeq1}) holds for all
$\gamma_0\in(0,a/L]$ but only if $a/\gamma_0$ is an integer. Nevertheless, it
is easy to show that this can be extended to all $\gamma_0\in(0,\gamma_0^*]$
by choosing $\rho$ and $\gamma_0^*$ adequately.
\end{proof}

If $q=1$ in Proposition~\ref{prop:drift-step-m-q-larger-1}, then we only need to prove
Condition~\ref{item:drift-step-q-larger-1} for the chain $\tilde{Q}^{-\{1\}}$, which
then corresponds to $q=0$, that is $V_{0,\gamma_0}^{\otimes q}\equiv1$.
Hence an immediate consequence of Lemma~\ref{lem:new-fromqeqtoqPlus1} is that
if $q=1$ in  Proposition~\ref{prop:drift-step-m-q-larger-1},
Condition~\ref{item:drift-step-q-larger-1} is implied by the geometric ergodicity of
the unconstrained chain $\check{Q}$ (see Section~\ref{sec:markov-assumption}) and a
simple moment condition on the stationary distribution $\check{\pi}$ of this
chain, see~\ref{item:q-larger-1-decreasing-en-m}.

However, for $q\geq2$, Lemma~\ref{lem:new-fromqeqtoqPlus1} is no longer sufficient to prove
Condition~\ref{item:drift-step-q-larger-1} of Proposition~\ref{prop:drift-step-m-q-larger-1}
because of the presence of $V_{0,\gamma_0}^{\otimes q}$ in the denominator
of~(\ref{eq:q-larger-1fromqeq0toqeq1}). Again we shall rely on an
inductive reasoning to obtain the following result.

\begin{lemma}
  \label{lem:recurrence-controle}
  Let $p,q\geq1$. Let $\{(I_n,
  \mathbf{S}_n,{\boldsymbol{\lambda}}_n),\,n\geq0\}$ be the Markov chain on the
  space $\{0,1,\dots,p\}\times\mathbb{Z}_+^q\times\mathbb{R}_+^p$ with
  transition kernel $\tilde{Q}$ defined in Section~\ref{sec:markov-assumption}.
  For any $\mathcal{J}\subseteq\{1,\dots,q\}$, further define the transition
  kernel $\tilde{Q}^{(-\mathcal{J})}$ on the space
  $\{0,\dots,p\}\times\mathbb{Z}_+^{q-\#\mathcal{J}}\times\mathbb{R}_+^p$, as
  in Section~\ref{sec:markov-assumption}. Let
  $\mathbf{w}:\{0,1,\dots,p\}\to\mathbb{R}$. Suppose that for any subset
  $\mathcal{J}$ in $\{1,\dots,q\}$, the transition kernel
  $\tilde{Q}^{(-\mathcal{J})}$ satisfies
  Assumptions~\ref{item:q-larger-1-ergo}
  and~\ref{item:q-larger-1-decreasing-en-m}. Then for all $\gamma_1>0$
  small enough, there are positive constants $\rho_0$ and $\gamma_0^*>0$ such
  that, for all $\gamma_0\in(0,\gamma_0^*]$,
  \begin{align}
  \label{eq:recurrence-controle}
\sup_{(\mathbf{s},\boldsymbol{\ell})\in\mathsf{X}}\;\sup_{m\geq1}\;\;
\mathbb{E}^{\mathbf{x}}\left[\rme^{\gamma_0
      \sum_{k=1}^{m}{\{\mathbf{w}(I_k)+\rho_0\}}}\,V_{1,\gamma_1}(\boldsymbol{\lambda}_m-\boldsymbol{\ell})\right]
<\infty\;.
 \end{align}
\end{lemma}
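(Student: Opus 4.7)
The plan is to proceed by induction on the number of constraint variables $q$. The base case $q=0$ follows directly from Lemma~\ref{lem:new-fromqeqtoqPlus1} applied to the unconstrained chain $\check{Q}$: since $V_{0,\gamma}^{\otimes 0}\equiv 1$, the bound there reads $\sup_{\boldsymbol{\ell}}\sup_{m}\mathbb{E}^{\boldsymbol{\ell}}[\rme^{\gamma_0\sum_{k=1}^{m}(\mathbf{w}(\check{I}_k)+\rho)}V_{1,\gamma_1}(\check{\boldsymbol{\lambda}}_{m})]/V_{1,\gamma_1}(\boldsymbol{\ell})<\infty$, and the identity $V_{1,\gamma_1}(\check{\boldsymbol{\lambda}}_{m}-\boldsymbol{\ell})=V_{1,\gamma_1}(\check{\boldsymbol{\lambda}}_{m})/V_{1,\gamma_1}(\boldsymbol{\ell})$ together with any $\rho_0\in(0,\rho]$ yields the conclusion.

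For the inductive step ($q\geq 1$), I would partition the state space according to the thresholds $s^*_m(j)$ introduced in~(\ref{eq:def-sm}). Given $\mathbf{x}=(\mathbf{s},\boldsymbol{\ell})$ and $m\geq 1$, let $\mathcal{J}=\mathcal{J}_m(\mathbf{s}):=\{j:\mathbf{s}(j)\geq s^*_m(j)\}$. When $\mathcal{J}\neq\emptyset$, the coupling argument already used in the proof of Proposition~\ref{prop:drift-step-m-q-larger-1} applies: the first $m$ steps of $\tilde{Q}$ started at $\mathbf{x}$ coincide in law with those of $\tilde{Q}^{(-\mathcal{J})}$ started at $\mathbf{x}'=(\mathbf{s}^{(-\mathcal{J})},\boldsymbol{\ell})$, so the expectation to be bounded equals its analogue for $\tilde{Q}^{(-\mathcal{J})}$. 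Because $\tilde{Q}^{(-\mathcal{J})}$ has $q-\#\mathcal{J}<q$ constraint variables and all its sub-chains inherit Assumptions~\ref{item:q-larger-1-ergo} and~\ref{item:q-larger-1-decreasing-en-m} from those assumed on $\tilde{Q}$, the inductive hypothesis applies and yields a uniform bound; taking a maximum over the finitely many non-empty $\mathcal{J}$'s handles this half of the partition.

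On the complementary region $\mathcal{J}_m(\mathbf{s})=\emptyset$, every component satisfies $\mathbf{s}(j)<s^*_m(j)$, whence $V_{0,\gamma}^{\otimes q}(\mathbf{s})\leq\rme^{\gamma q(1+m\overline{J}+\max_j\max\mathbf{A}(j))}$. Applying Lemma~\ref{lem:new-fromqeqtoqPlus1} to $\tilde{Q}$ and using $V_{0,\gamma}^{\otimes q}(\mathbf{S}_m)\geq 1$ to erase the spread factor inside the expectation, then dividing by $V_{1,\gamma_1}(\boldsymbol{\ell})$ and multiplying by $\rme^{\gamma_0(\rho_0-\rho)m}$, yields a bound of the form $C\,\rme^{m[\gamma q\overline{J}-\gamma_0(\rho-\rho_0)]}$.

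The main obstacle is reconciling these two exponential rates: uniform boundedness in $m$ forces $\gamma q\overline{J}\leq\gamma_0(\rho-\rho_0)$. I would resolve this by recalling that in Lemma~\ref{lem:new-fromqeqtoqPlus1} the parameter $\gamma$ is only constrained to be small enough to preserve the $(\1_{\{0,\dots,p\}}\otimes V_{0,\gamma}^{\otimes q}\otimes V_{1,\gamma_1})$-geometric ergodicity of $\tilde{Q}$, and so may be shrunk jointly with $\gamma_0$. Fixing $\rho_0\in(0,\rho)$ first and then choosing $\gamma_0^*$ and $\gamma$ sufficiently small makes the exponent non-positive; combining this with the first case completes the induction.
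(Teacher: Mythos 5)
Your overall architecture (induction on $q$, coupling with $\tilde{Q}^{(-\mathcal{J})}$ on the region where some components of the spread are large, Lemma~\ref{lem:new-fromqeqtoqPlus1} on the complementary region) is in the right spirit, and your base case and your first case ($\mathcal{J}_m(\mathbf{s})\neq\emptyset$) are sound. But the second case contains a genuine gap. On the region where $\mathbf{s}(j)<s^*_m(j)$ for all $j$, your bound $V_{0,\gamma}^{\otimes q}(\mathbf{s})\leq \rme^{\gamma q(1+m\overline{J}+\max_j\max\mathbf{A}(j))}$ grows like $\rme^{\gamma q\overline{J}m}$, so uniform boundedness in $m$ forces $\gamma q\overline{J}\leq\gamma_0(\rho-\rho_0)$. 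The conclusion must hold for \emph{all} $\gamma_0\in(0,\gamma_0^*]$, so with a fixed $\gamma>0$ this inequality fails as soon as $\gamma_0$ is small. Your proposed fix --- shrinking $\gamma$ ``jointly with $\gamma_0$'', i.e.\ taking $\gamma$ proportional to $\gamma_0$ --- runs into a circularity you do not address: in Lemma~\ref{lem:new-fromqeqtoqPlus1} the admissible range $(0,\gamma_0^*]$ is itself produced from $\gamma$ (through the geometric-ergodicity constants $R,\kappa$ of the drift function $V_{0,\gamma}^{\otimes q}\otimes V_{1,\gamma_1}$ and the integer $L$ in its proof), and nothing guarantees that $\gamma_0\leq\gamma_0^*(c\gamma_0)$ holds on a nontrivial interval as $\gamma\to0$.

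The paper's proof avoids this clash entirely by not using the $m$-dependent thresholds $s^*_m(j)$ in this lemma. It works with the fixed box $\overline{\mathbf{A}}=\prod_j\{1,\dots,1+\max\mathbf{A}(j)\}$ and the stopping time $\tau_m=\tau\wedge m$ where $\tau$ is the first hitting time of $\overline{\mathbf{A}}$. Before $\tau_m$ the chain is coupled with $\tilde{Q}^{(-\mathcal{J}_m)}$ for a nonempty random set $\mathcal{J}_m$ of components that have stayed away from their constraint sets, and the induction hypothesis gives decay $\rme^{-\rho'\gamma_0 l}$ on $\{\tau_m=l\}$; after $\tau_m$, Lemma~\ref{lem:new-fromqeqtoqPlus1} is applied only from starting points in $\overline{\mathbf{A}}$, where $V_{0,\gamma}^{\otimes q}(\mathbf{S}_{\tau_m})\leq\rme^{q\overline{s^*}\gamma}$ is a constant independent of $m$, yielding decay $\rme^{-\rho\gamma_0(m-\tau_m)}$. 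Summing over $l=0,\dots,m$ gives $C\,m\,\rme^{-(\rho\wedge\rho')\gamma_0 m}$, and any $\rho_0<\rho\wedge\rho'$ closes the induction. If you want to salvage your decomposition, you would need to replace the global application of Lemma~\ref{lem:new-fromqeqtoqPlus1} on the set $\{\mathcal{J}_m(\mathbf{s})=\emptyset\}$ by exactly this kind of stopping-time splitting.
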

\begin{proof}
We prove this lemma by induction on $q$.

For $q=0$, Lemma~\ref{lem:new-fromqeqtoqPlus1} holds with
$V_{0,\gamma}^{\otimes q}\equiv1$, which implies~(\ref{eq:recurrence-controle})
 for all $\gamma_1>0$ small enough with some $\rho_0,\gamma_0^*$ only depending on
$\gamma_1$. This provides the case $q=0$ which initiates the induction.

Next we show the result in the case $q\geq1$ using the induction hypothesis
that the result holds for lower values of $q$.

Define for any $j=1,\dots,q$, $\mathbf{A}(j)$ as in~(\ref{eq:def-AofJ}) and
$$
s^*(j) = 1 + \max(\mathbf{A}(j))\quad\text{and}\quad\overline{s^*}=\max_{j=1,\dots,q}s^*(j)\;.
$$
If   $\mathbf{x}=(\mathbf{s},\boldsymbol{\ell})\in\mathsf{X}$ is such that
$$
\mathbf{s}\in\overline{\mathbf{A}}= \{1,\dots,s^*(1)\}\times\dots\times\{1,\dots,s^*(q)\}\;,
$$
then, applying Lemma~\ref{lem:new-fromqeqtoqPlus1}, for all
$\gamma_1>0$ small enough, we can choose $\gamma,\rho,\gamma^*>0$ such that,
for all $\gamma_0\in(0,\gamma^*]$, there is some constant $C_0>0$ such that, for all $m\geq1$,
\begin{align}\nonumber
  \mathbb{E}^{\mathbf{x}}\left[\rme^{\gamma_0
    \sum_{k=1}^m \mathbf{w}(I_k)}V_{1,
    \gamma_1}(\boldsymbol{\lambda}_m)\right]
&\leq  \mathbb{E}^{\mathbf{x}}\left[\rme^{\gamma_0
    \sum_{k=1}^m \mathbf{w}(I_k)}
(V_{0,\gamma}^{\otimes q}\otimes V_{1,\gamma_1})(\mathbf{X}_{m})\right]\\
\nonumber
&\leq C_0 \,  \rme^{- \rho \gamma_0 m} \,(V_{0,\gamma}^{\otimes q}\otimes V_{1,\gamma_1})(\mathbf{x})\\
\label{eq:bound-insideAbar}
&\leq C_0\,\rme^{q\overline{s^*}\gamma} \,  \rme^{- \rho \gamma_0 m}\,V_{1, \gamma_1}(\boldsymbol{\ell}) \;.
\end{align}
We denote by $\tau$ the first hitting time of the set $\overline{A}$,
$$
\tau=\inf\left\{k\geq0,\, \mathbf{S}_k\in\overline{A} \right\}
$$
and we set $\tau_m=\tau\wedge m$, which is a stopping time with respect to the
filtration $\mathcal{F}_k=\sigma(\mathbf{Y}_n,\,0\leq n\leq k)$, $k\geq0$.
We define by $\mathcal{J}_m$   the set of indices $j$ such
that $\mathbf{S}_k(j)$ stay away of $\mathbf{A}(j)$ for all
$k=0,1,\dots,\tau_m$, which is a random set measurable with respect to
$\mathcal{F}_{\tau_m}$ and is non-empty by definition
of $\tau_m$.
By definition of $\tau_m$, we have, in the event $\{\tau_m<m\}$,
$\mathbf{S}_{\tau_m}\in\overline{A}$ and thus, by~(\ref{eq:bound-insideAbar}),
setting $C_1=C_0\,\rme^{q\overline{s^*}\gamma}$,
\begin{align}
  \label{eq:stopping-time-bound}
\mathbb{E}\left[\rme^{\gamma_0
    \sum_{k=\tau_{m}+1}^m \mathbf{w}(I_k)}V_{1, \gamma_1}(\boldsymbol{\lambda}_m)\,\vert\,\mathcal{F}_{\tau_m}\right]
&\leq C_1 \,  \rme^{- \rho \gamma_0(m-\tau_m)}\,V_{1,\gamma_1}(\boldsymbol{\lambda}_{\tau_m})\quad \text{a.s.}
\end{align}
On the other hand, up to $n=\tau_m$ the chain $\mathbf{X}_n$ (with transition
kernel $\tilde{Q}$) coincides with the chain with transition kernel
$\tilde{Q}^{(-\mathcal{J}_m)}$ that starts at the same state, where
$\mathcal{J}_m$ is defined as above.  Hence we may write, for all integer
$l=1,2,\dots,m$, all non-empty subsets $\mathcal{J}$ of $\{1,\dots,q\}$ and
$\mathbf{x}=(\mathbf{s},\boldsymbol{\ell})\in\mathsf{X}$ satisfying
$\mathbf{s}\notin\overline{\mathbf{A}}$,
\begin{align*}
&\mathbb{E}^{\mathbf{x}}\left[\rme^{\gamma_0
    \sum_{k=1}^{\tau_m} \mathbf{w}(I_k)}\rme^{- \rho \gamma_0(m-\tau_m)}\,V_{1,
  \gamma_1}(\boldsymbol{\lambda}_{\tau_m})\,\1_{\{\tau_m=l\}\cap\{\mathcal{J}_m=\mathcal{J}\}}\right]\\
&\phantom{ blabla} = \mathbb{E}^{(-\mathcal{J}),\mathbf{x}}\left[\rme^{\gamma_0
    \sum_{k=1}^{l} \mathbf{w}(\check{I}_k)}\rme^{- \rho\gamma_0 (m-l)}\,V_{1,
  \gamma_1}(\check{\boldsymbol{\lambda}}_{l})\,\1_{\{\tau_m=l\}\cap\{\mathcal{J}_m=\mathcal{J}\}}\right]  \\
&\phantom{ blabla}\leq \rme^{- \rho \gamma_0(m-l)}\,\mathbb{E}^{(-\mathcal{J}),\mathbf{x}}\left[\rme^{\gamma_0
    \sum_{k=1}^{l} \mathbf{w}(\check{I}_k)}\,V_{1,
  \gamma_1}(\check{\boldsymbol{\lambda}}_{l})\right]\,,
\end{align*}
where $\{(\check{I}_n, \check{\mathbf{S}}_n,
\check{\boldsymbol{\lambda}}_n),\,n\geq0\}$ here denotes the chain with transition kernel
$\tilde{Q}^{(-\mathcal{J})}$ which starts at the same state as $({I}_0,
{\mathbf{S}}_0, {\boldsymbol{\lambda}}_0)$.
Using  the induction hypothesis,  if $\gamma_1>0$ is small enough, there are positive constants
$\gamma'$ and $\rho'$ such that for all $\gamma_0\in(0,\gamma']$, $\mathbf{x}\in\mathsf{X}$ and $l\geq1$,
$$
\mathbb{E}^{(-\mathcal{J}),\mathbf{x}}\left[\rme^{\gamma_0
    \sum_{k=1}^{l} \mathbf{w}(\check{I}_k)}\,V_{1,
  \gamma_1}(\check{\boldsymbol{\lambda}}_{l})\right]\leq C_2\,\rme^{-\rho'\gamma_0\,l}\,
V_{1, {\gamma_1}} (\boldsymbol{\ell}) \;,
$$
for some constant $C_2>0$ independent of $l$ and $\mathbf{x}$.
Inserting this bound in the previous display and summing over all $l=0,1,\dots,m$
and $\mathcal{J}$, we get, for all $\gamma_0\in(0,\gamma']$ and
$\mathbf{x}=(\mathbf{s},\boldsymbol{\ell})\in\mathsf{X}$ such that
$\mathbf{s}\notin\overline{\mathbf{A}}$,
\begin{align*}
\mathbb{E}^{\mathbf{x}}\left[\rme^{\gamma_0
    \sum_{k=1}^{\tau_m} \mathbf{w}(I_k)}\rme^{- \rho \gamma_0(m-\tau_m)}\,V_{1,
  \gamma_1}(\boldsymbol{\lambda}_{\tau_m})\right]
&\leq C_3\sum_{l=0}^m\rme^{- \rho\gamma_0 (m-l)-\rho'\gamma_0\,l}V_{1,\gamma_1}(\boldsymbol{\ell})\\
&\leq C_3\,m\,\rme^{- (\rho\wedge\rho') \gamma_0 m}V_{1,\gamma_1}(\boldsymbol{\ell}) \;,
\end{align*}
where $C_3>0$ is independent of $m$ and  $\mathbf{x}$.
This last bound,
with~(\ref{eq:stopping-time-bound}) yields, for all $\gamma_0\in(0,\gamma'\wedge\gamma^*]$ and
$\mathbf{x}=(\mathbf{s},\boldsymbol{\ell})\in\mathsf{X}$,
\begin{align*}
 \mathbb{E}^{\mathbf{x}}\left[\rme^{\gamma_0 \sum_{k=1}^{m}{\mathbf{w}(I_k)}}V_{1,\gamma_1}(\boldsymbol{\lambda}_{m})
 \right] & \leq  \mathbb{E}^{\mathbf{x}}\left[\rme^{\gamma_0
    \sum_{k=1}^{\tau_m} \mathbf{w}(I_k)}
\mathbb{E}\left[\rme^{\gamma_0 \sum_{k=\tau_m+1}^{m}{\mathbf{w}(I_k)}}V_{1,\gamma_1}(\boldsymbol{\lambda}_{m})\,\vert\,\mathcal{F}_{\tau_m}\right]\right]\\
&\leq  C_1\,  \mathbb{E}^{\mathbf{x}}\left[\rme^{\gamma_0
    \sum_{k=1}^{\tau_m} \mathbf{w}(I_k)} \rme^{- \rho (m-\tau_m)}\,V_{1,
  \gamma_1}(\boldsymbol{\lambda}_{\tau_m})\right]\\
&\leq C_4\,m\,\rme^{- (\rho\wedge\rho') m}\,V_{1,\gamma_1}(\boldsymbol{\ell})\;,
\end{align*}
where $C_4$ is
some positive constant independent of $m$ and  $\mathbf{x}$. Now, taking $\gamma_0^*=\gamma'\wedge\gamma^*$
$\rho_0\in(0,\rho\wedge\rho')$, we get~(\ref{eq:recurrence-controle}).
\end{proof}

\begin{theorem}
\label{thm:geometrical-ergodic}
Let $p,q\geq1$ and suppose that Assumptions~\ref{ass:Phi} and~\ref{ass:access}
hold. Let $\{(I_n, \mathbf{S}_n,{\boldsymbol{\lambda}}_n),\,n\geq0\}$ be the
Markov chain on the space
$\{0,1,\dots,p\}\times\mathbb{Z}_+^q\times\mathbb{R}_+^p$ with transition
kernel $\tilde{Q}$ defined in Section~\ref{sec:markov-assumption}.  For any
$\mathcal{J}\subseteq\{1,\dots,q\}$, further define the transition kernel
$\tilde{Q}^{(-\mathcal{J})}$ on the space
$\{0,\dots,p\}\times\mathbb{Z}_+^{q-\#\mathcal{J}}\times\mathbb{R}_+^p$, as in
Section~\ref{sec:markov-assumption}. Define
\begin{equation}
  \label{eq:defw}
\mathbf{w}(i)=
\begin{cases}
\sum_{j=1}^q \overrightarrow{\mathbf{J}}_{i,j} &\text{ for all
$i=1,\dots,p$}\\
0 &\text{ for $i=0$}.
\end{cases}
\end{equation}
Suppose that for any subset
$\mathcal{J}$ in $\{1,\dots,q\}$, the transition kernel
$\tilde{Q}^{(-\mathcal{J})}$ satisfies
\begin{enumerate}[label=(B-\arabic*)]
\item  \label{item:condition-induction-1-geom-ergodic}
For all $\gamma'>0$ small enough, there exists $\gamma^*>0$ such that
for all $\gamma\in(0,\gamma^*]$, $\tilde{Q}^{(-\mathcal{J})}$ is $(\1_{\{0,1,\dots,p\}}\otimes V_{0,\gamma}^{\otimes (q-\#\mathcal{J})}\otimes
V_{1,\gamma'})$-geometrically ergodic.
\item  \label{item:condition-induction-2-geom-ergodic} The stationary distribution $\tilde{\pi}^{(-\mathcal{J})}$ of $\tilde{Q}^{(-\mathcal{J})}$ satisfies $\tilde{\pi}^{(-\mathcal{J})}\left[\mathbf{w}\otimes\1_{\mathbb{Z}_+^q\times\mathbb{R}_+^p}\right]<0$.
\end{enumerate}
Then, for all $\gamma_1>0$ small enough, there exists $\gamma_0^*>0$ such that
for all $\gamma_0\in(0,\gamma_0^*]$,  the transition kernel $\tilde{Q}$ is  $(\1_{\{0,1,\dots,p\}}\otimes V_{0,\gamma_0}^{\otimes q}\otimes
V_{1,\gamma_1})$-geometrically ergodic.
\end{theorem}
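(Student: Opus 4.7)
The plan is to deduce $(\1_{\{0,\dots,p\}}\otimes V_{0,\gamma_0}^{\otimes q}\otimes V_{1,\gamma_1})$-geometric ergodicity of $\tilde{Q}$ from the corresponding geometric ergodicity of the marginal kernel $Q$ on $\mathbb{Z}_+^q\times\mathbb{R}_+^p$. For the latter, I would combine the drift criterion supplied by Proposition~\ref{prop:drift-step-m-q-larger-1} with the irreducibility, aperiodicity and petite sets produced by Theorem~\ref{thm:aper-irred}, and appeal to the standard Meyn--Tweedie geometric drift criterion in \cite[Chapter~15]{meyn-tweedie-2009}. The crux of the argument is therefore to verify Condition~\ref{item:drift-step-q-larger-1} of Proposition~\ref{prop:drift-step-m-q-larger-1} for every non-empty $\mathcal{J}\subseteq\{1,\dots,q\}$.

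To this end, I would fix a non-empty $\mathcal{J}\subseteq\{1,\dots,q\}$ and observe that assumptions~\ref{item:condition-induction-1-geom-ergodic} and~\ref{item:condition-induction-2-geom-ergodic} of the theorem, restricted to the sub-family $\{\tilde{Q}^{(-\mathcal{J}')}:\mathcal{J}'\supseteq\mathcal{J}\}$, are exactly the hypotheses of Lemma~\ref{lem:recurrence-controle} applied to the kernel $\tilde{Q}^{(-\mathcal{J})}$, with $q-\#\mathcal{J}$ playing the role of $q$ in that lemma and subsets $\mathcal{J}''\subseteq\{1,\dots,q-\#\mathcal{J}\}$ corresponding to $\mathcal{J}\cup\mathcal{J}''\subseteq\{1,\dots,q\}$. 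The lemma then produces, for every $\gamma_1>0$ small enough, positive constants $\rho_0$ and $\gamma_0^*$ such that, for $\gamma_0\in(0,\gamma_0^*]$,
\begin{equation*}
\sup_{m\geq1}\;\sup_{(\mathbf{s},\boldsymbol{\ell})\in\mathbb{Z}_+^{q-\#\mathcal{J}}\times\mathbb{R}_+^p}
\mathbb{E}^{(-\mathcal{J}),(\mathbf{s},\boldsymbol{\ell})}\left[\rme^{\gamma_0\sum_{k=1}^m(\mathbf{w}(\check{I}_k)+\rho_0)}V_{1,\gamma_1}(\check{\boldsymbol{\lambda}}_m-\boldsymbol{\ell})\right]<\infty\;.
\end{equation*}
Factoring out the extra $\rme^{\gamma_0\rho_0 m}$ converts this into a uniform $O(\rme^{-\gamma_0\rho_0 m})$ bound for the quantity in~\eqref{eq:fromqeq0toqeq2-q-larger-1}, which is exactly Condition~\ref{item:drift-step-q-larger-1}. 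Since there are only finitely many non-empty $\mathcal{J}\subseteq\{1,\dots,q\}$, taking the minimum of the corresponding thresholds provides a common pair $(\gamma_0^*,\gamma_1)$ that validates~\ref{item:drift-step-q-larger-1} uniformly in $\mathcal{J}$.

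Once Condition~\ref{item:drift-step-q-larger-1} is in hand, Proposition~\ref{prop:drift-step-m-q-larger-1} yields integers $m\geq1$, constants $\theta\in(0,1)$ and $b\geq0$ such that $Q^m V\leq\theta V+b$ with $V=V_{0,\gamma_0}^{\otimes q}\otimes V_{1,\gamma_1}$. Theorem~\ref{thm:aper-irred} guarantees that $Q$ is $\psi$-irreducible and aperiodic and that each sublevel set $\{V\leq K\}$ is contained in a petite set of the form $\{1,\dots,K'\}^q\times(0,M]^p$, so the Meyn--Tweedie drift criterion delivers $V$-geometric ergodicity of $Q$. Finally, to transfer this ergodicity to $\tilde{Q}$ I would invoke the same observation as in the proof of Proposition~\ref{prop:embeddedunconstrained-geom-erg}: by Remark~\ref{rem:IvanishesInInitialCondition}, the conditional law of $(I_1,\mathbf{X}_1)$ given $(I_0,\mathbf{X}_0)$ does not depend on $I_0$, so $V$-geometric ergodicity of $\{\mathbf{X}_n\}$ transfers verbatim to $(\1_{\{0,\dots,p\}}\otimes V)$-geometric ergodicity of $\{(I_n,\mathbf{X}_n)\}$. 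The main obstacle I anticipate is purely bookkeeping: one must coordinate the constants $\gamma_0,\gamma_1,\rho_0$ as they pass through Lemma~\ref{lem:recurrence-controle}, Proposition~\ref{prop:drift-step-m-q-larger-1} and the drift--ergodicity theorem; this is harmless since Jensen's inequality lets us freely shrink $\gamma_0$ or $\gamma_1$ without losing any of the intermediate inequalities.
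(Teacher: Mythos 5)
Your proposal is correct and follows essentially the same route as the paper's own (very terse) proof: verify Condition~\ref{item:drift-step-q-larger-1} via Lemma~\ref{lem:recurrence-controle}, invoke Proposition~\ref{prop:drift-step-m-q-larger-1} for the drift bound, combine with the petite sets of Theorem~\ref{thm:aper-irred} and the Meyn--Tweedie criterion, and transfer from $Q$ to $\tilde{Q}$ exactly as in Proposition~\ref{prop:embeddedunconstrained-geom-erg}. Your explicit bookkeeping -- applying Lemma~\ref{lem:recurrence-controle} to $\tilde{Q}^{(-\mathcal{J})}$ with $q-\#\mathcal{J}$ in the role of $q$, matching its hypotheses to assumptions~\ref{item:condition-induction-1-geom-ergodic}--\ref{item:condition-induction-2-geom-ergodic} over the supersets $\mathcal{J}'\supseteq\mathcal{J}$, and converting the $\rme^{\gamma_0\rho_0 m}$ factor into the required decay -- is the correct reading of the one-line justification the paper gives for that step.
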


In practice, Theorem~\ref{thm:geometrical-ergodic} should be applied by
induction, so that only Conditions of the
form~\ref{item:condition-induction-2-geom-ergodic} have to be checked out. The
case $q=0$ is treated in
Proposition~\ref{prop:embeddedunconstrained-geom-erg}. We treat the case $q=1$
in the following section. Then we explain in Section~\ref{sec:case-q-lager-1} how to perform this
induction for $q\geq2$.

\begin{proof}[Proof of Theorem~\ref{thm:geometrical-ergodic}]
  By Conditions~\ref{item:condition-induction-1-geom-ergodic}
  and~\ref{item:condition-induction-2-geom-ergodic}, we may apply
  Lemma~\ref{lem:recurrence-controle} to show that
  Condition~\ref{item:drift-step-q-larger-1} of
  Proposition~\ref{prop:drift-step-m-q-larger-1} holds. Hence, applying this
  proposition, we get the drift condition~(\ref{eq:drift-step-m-q-larger-1}).

  The small sets of Theorem~\ref{thm:aper-irred} are of the form
  $\{1,\dots,K\}^q\times(0,M]^p$. Observe that the sublevel sets of
  $V_{0,\gamma_0}^q\otimes V_{1,\gamma_1}$
$$
C_{\gamma_0,\gamma_1}(r)=\left\{\mathbf{x}\in\mathbb{Z}_+\times\mathbb{R}_+^p~:~V_{0,\gamma_0}^q\otimes
  V_{1,\gamma_1}(\mathbf{x})\leq r\right\},\quad r>0\;,
$$
are included in such sets. Hence the drift
condition~(\ref{eq:drift-step-m-q-larger-1}) applies to show that the chain $Q$
is $(V_{0,\gamma_0}\otimes V_{1,\gamma_1})$-geometrically ergodic,
see~\cite[Chapter~15]{meyn-tweedie-2009}.  The fact that $\tilde{Q}$ is
$(\1_{\{0,\dots,p\}} \otimes V_{0,\gamma_0}\otimes
V_{1,\gamma_1})$-geometrically ergodic follows as in the proof of
Proposition~\ref{prop:embeddedunconstrained-geom-erg}.
\end{proof}

\begin{proof}[Proof of Theorem~\ref{thm:geometrical-ergodicqGeQ2}]
  This result is obtained by induction on $q$ using
  Theorem~\ref{thm:geometrical-ergodic} and
  Theorem~\ref{thm:geometrical-ergodicqIs1} to initiate the induction at $q=1$.
\end{proof}




\subsection{Proof of Corollary~\ref{cor:geo-ergo-overlineQ}}
\label{sec:proof-coroll-refc-1}
  We first show that for all $D > 0$, $K \geq 1$ and $M > 0$, the set $C = (0,
  D] \times \{0,\dots,p\}\times\{1,\dots,K\}^q \otimes(0, M]^p$ is a petite
  set.  By Corollary~\ref{cor:smallset-Qtilde}, there exits $m\geq1$, $\epsilon
  > 0$ and a probability measure $\tilde{nu}$ on $\mathsf{Y}$ such that
  $\{0,\dots,p\}\times\{1,\dots,K\}^q \otimes(0, M]^p$ is an $(m,\epsilon,\tilde\nu)$
  and $(m+1,\epsilon,\tilde\nu)$-small set for $\tilde{Q}$.
   For all $\mathbf{z} = (\delta,i, \mathbf{s},\boldsymbol{\ell})\in C$, under the initial condition
  $\mathbf{Z}_0 = \mathbf{z}$, we have $\boldsymbol{\lambda}_{m-1} \in \left(0,
    M + (m-1) \overline{\alpha}\beta\right]^p$. It follows that the density of
  $\Delta_m$ is bounded from below by $c_0 \rme^{-(m-1)c_1t}$ over $t \in
  \mathbb{R}_+$ for some positive constants $c_0,c_1$ (see the proof of
  Lemma~\ref{lem:lower-bound-small-set-step-1}). We get that for all Borel sets $A\subset\mathbb{R}_+$ and $B\subset\mathsf{Y}$,
\begin{align*}
\overline{Q}^{m}(\mathbf{z}, A\times B) & \geq c_0  \int_{A} \rme^{-c_1 t}\mathrm{d} t \; \tilde{Q}^{m}(\mathbf{y}, B)
\\ & \geq \epsilon (c_0/c_1) \nu (A\times B)\;,
 \end{align*}
 where we used that $\{0,\dots,p\}\times\{1,\dots,K\}^q \otimes(0, M]^p$ is
 an $(m,\epsilon,\tilde\nu)$-small set for $\tilde{Q}$ and $\nu$ is the probability
 measure defined by $\nu (A\times B)=\int_Ac_1\rme^{-c_1 t}\mathrm{d} t\,\tilde{\nu}(B)$.
Hence $C$ is an $(m,\epsilon,\nu)$-small set for $\overline{Q}$. Similarly, we have that
is also an $(m+1,\epsilon,\tilde\nu)$-small set. Therefore, we conclude that
$\overline{Q}$ is $\psi$-irreducible and aperiodic and that all bounded subsets
of $\mathsf{Z}$ are petite sets.

Next we show that the drift condition  obtained in
Proposition~\ref{prop:drift-step-m-q-larger-1} for $\tilde{Q}$ extends to
$\overline{Q}$. As explained in Section~\ref{sec:case-q-lager-1}, the
assumptions of Theorem~\ref{thm:geometrical-ergodicqGeQ2} imply that of
Theorem~\ref{thm:geometrical-ergodic} and thus of Proposition~\ref{prop:drift-step-m-q-larger-1}
(see the proof of Theorem~\ref{thm:geometrical-ergodicqGeQ2}).  Hence, for all
$\gamma_1>0$ small enough, there exists $\gamma_0^*>0$, $\theta \in (0, 1)$,
$b>0$ and $m \geq 1$ such that for all $\gamma_0\in(0,\gamma_0^*]$ such that
for any initial condition $\mathbf{x}\in\mathsf{X}$, we have
\begin{equation*}
\left[\tilde{Q}^m(V_{0,\gamma_0}\otimes V_{1,\gamma_1})\right](\mathbf{x}) \leq \theta \; (V_{0,\gamma_0}\otimes V_{1,\gamma_1})(\mathbf{x})+b \;.
\end{equation*}

By the Cauchy-Schwartz inequality and using that for all $\mathbf{x} \in
\mathsf{X}$, $\Delta_m$ is stochastically smaller than an exponential
distribution with mean $(\IdVect_p^T \boldsymbol{\mu}_0)^{-1}$ (see the proof
of Lemma~\ref{lem:lower-bound-small-set-step-1}), we have
\begin{align*}
& \overline{Q}^m(V_{2,\gamma_2}\otimes \1_{\{0,\dots,p\}} \otimes V_{0,\gamma_0/2}\otimes V_{1,\gamma_1/2})(\Delta, i, \mathbf{x})
\\ & \leq
\left(\overline{\mathbb{E}}^{\mathbf{x}}\left[V_{2,2\gamma_2}(\Delta_m)\right]\right)^{1/2} \left(\left[\;\overline{Q}^m(\1_{\{0,\dots,p\}} \otimes V_{0,\gamma_0}\otimes V_{1,\gamma_1})\right](\Delta, i, \mathbf{x})\right)^{1/2}
\\
&
\leq \left(\int_{0}^{+\infty} \left(\IdVect_p^T \boldsymbol{\mu}_0\right)\rme^{2\gamma_2 r} \rme^{-\IdVect_p^T \boldsymbol{\mu}_0 r} \rmd r\right)^{1/2}\;  \left(\theta \; (V_{0,\gamma_0}\otimes V_{1,\gamma_1})(\mathbf{x})+b\right)^{1/2} \;.
\end{align*}
Provided that $\gamma_2 < \IdVect_p^T \boldsymbol{\mu}_0/2$, we thus get
\begin{multline*}
\overline{Q}^m(V_{2,\gamma_2}\otimes \1_{\{0,\dots,p\}} \otimes V_{0,\gamma_0/2}\otimes V_{1,\gamma_1/2})(\Delta, i, \mathbf{x})
\\
\leq \left(1-2\gamma_2/\IdVect_p^T \boldsymbol{\mu}_0\right)^{-1/2}\;  \left[\theta^{1/2} \; (V_{0,\gamma_0/2}\otimes V_{1,\gamma_1/2})(\mathbf{x})+b^{1/2}\right]
\;.
\end{multline*}
Hence, for $\gamma_2 > 0$ small enough, we have, for some $\theta' \in (0, 1)$ and $b' > 0$,
$$ \overline{Q}^m(V_{2,\gamma_2}\otimes \1_{\{0,\dots,p\}} \otimes V_{0,\gamma_0/2}\otimes V_{1,\gamma_1/2})(\Delta, i, \mathbf{x})
 \leq \theta' (V_{0,\gamma_0/2}\otimes V_{1,\gamma_1/2})(\mathbf{x})+b' \;.
$$
Hence we obtain a drift condition with an unbounded off petite set function
$V_{2,\gamma_2}\otimes \1_{\{0,\dots,p\}} \otimes V_{0,\gamma_0/2}\otimes
V_{1,\gamma_1/2}$. The geometric ergodicity follows (see \cite[Section~15]{meyn-tweedie-2009}).
\subsection{Proof of Theorem~\ref{thm:fclt-cont-time}}
\label{sec:proof-theor-refeq:fc}
We need the following lemmas to complete the proof for this theorem. Denote the number of arrivals on $[0, u]$ by $\mathbf{N}_u = \mathbf{N} \left(\1_{(0,u]}\otimes \1_{\{0,\dots,p\}}\right)$ and the time interval between the last arrival before $u$ and $u$ by $\Gamma_{u} = u - \sum_{k=1}^{\mathbf{N}_u} \Delta_k$.

\begin{lemma}
\label{lem:Nu-sur-u}
We have
$$
\lim_{u \to \infty}\frac{\mathbf{N}_u}{u} \to \frac{1}{\overline{\mathbb{E}}(\Delta_1)}\;, \textit{ a.s. } [P_*] \;.
$$
\end{lemma}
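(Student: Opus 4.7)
The plan is to combine the strong law of large numbers for Harris recurrent Markov chains applied to the chain $\{\mathbf{Z}_n\}$ (whose geometric ergodicity was established in Corollary~\ref{cor:geo-ergo-overlineQ}) with the classical renewal sandwich. First, by Corollary~\ref{cor:geo-ergo-overlineQ}, for $\gamma_2>0$ small enough the kernel $\overline{Q}$ is $(V_{2,\gamma_2}\otimes \1_{\{0,\dots,p\}}\otimes V_{0,\gamma_0}^q\otimes V_{1,\gamma_1})$-geometrically ergodic. In particular, the stationary distribution $\overline{\pi}$ integrates $V_{2,\gamma_2}$, and hence integrates $\Delta_1$, so that $\overline{\mathbb{E}}(\Delta_1)<\infty$ (and is strictly positive, since $\Delta_1>0$ a.s.). Applying \cite[Theorem~17.1.7]{meyn-tweedie-2009} to the function $g(\delta,i,\mathbf{s},\boldsymbol{\ell})=\delta$, which is dominated by the drift function, one obtains, for any initial condition,
\begin{equation*}
\frac{T_n}{n}=\frac{1}{n}\sum_{k=1}^n \Delta_k\;\xrightarrow[n\to\infty]{\text{a.s.}}\;\overline{\mathbb{E}}(\Delta_1)\;.
\end{equation*}

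Next I would observe that $\mathbf{N}_u\to\infty$ a.s. as $u\to\infty$. Indeed, by construction the successive sample times $T_n$ are finite a.s.\ (the underlying PPP $\mathrm{n}_0$ alone already guarantees finitely many arrivals on any bounded interval), and by the display above $T_n\to\infty$ a.s., so for each fixed $u$ only finitely many $T_n$ lie in $[0,u]$, while eventually $T_n>u$: this forces $\mathbf{N}_u\to\infty$.

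By definition of $\mathbf{N}_u$ we have the renewal-type sandwich
\begin{equation*}
T_{\mathbf{N}_u}\;\leq\;u\;<\;T_{\mathbf{N}_u+1}\;,
\end{equation*}
so that, on the almost sure event $\{\mathbf{N}_u\to\infty\}$,
\begin{equation*}
\frac{T_{\mathbf{N}_u}}{\mathbf{N}_u}\;\leq\;\frac{u}{\mathbf{N}_u}\;<\;\frac{T_{\mathbf{N}_u+1}}{\mathbf{N}_u+1}\cdot\frac{\mathbf{N}_u+1}{\mathbf{N}_u}\;.
\end{equation*}
Both extremes converge a.s.\ to $\overline{\mathbb{E}}(\Delta_1)$ by the previous step, whence $u/\mathbf{N}_u\to\overline{\mathbb{E}}(\Delta_1)$ a.s., which is the desired conclusion after taking reciprocals.

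No genuine obstacle is expected: the only point to justify carefully is that the $V$-geometric ergodicity of $\overline{Q}$ supplies both the integrability $\overline{\pi}(\Delta_1)<\infty$ needed to invoke \cite[Theorem~17.1.7]{meyn-tweedie-2009}, and the Harris recurrence needed for the law of large numbers to hold starting from \emph{any} initial condition (not just $\overline{\pi}$-a.e.\ one). This is standard — Harris recurrence follows from $\psi$-irreducibility and the existence of a small set charged with positive probability from every state, both of which have already been verified in the proof of Corollary~\ref{cor:geo-ergo-overlineQ}.
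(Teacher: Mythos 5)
Your proof is correct and follows essentially the same route as the paper: the strong law of large numbers for the positive Harris chain $\{\mathbf{Z}_n\}$ applied to $\Delta_k$ (the paper invokes \cite[Theorem~17.3.2]{meyn-tweedie-2009} where you cite Theorem~17.1.7, but the content is the same), followed by the renewal sandwich $T_{\mathbf{N}_u}\leq u< T_{\mathbf{N}_u+1}$. Your additional remarks on $\overline{\mathbb{E}}(\Delta_1)<\infty$ and $\mathbf{N}_u\to\infty$ merely spell out details the paper leaves implicit.
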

\begin{proof}
By definition of $\mathbf{N}_u$, we have
$$
\sum_{k=1}^{\mathbf{N}_u} \Delta_k \leq u \leq \sum_{k=1}^{\mathbf{N}_u+1} \Delta_k \;.
$$

Since $\overline{Q}$ is positive Harris, using~\cite[Theorem~17.3.2]{meyn-tweedie-2009}, we get
$$
\lim_{n\to\infty}\frac{1}{n}\sum_{k=1}^{n} \Delta_k = \overline{\mathbb{E}}(\Delta_1)\;, \textit{ a.s. } [P_*] \;.
$$
We deduce that
$$
\lim_{u\to+\infty}\frac{u}{{\mathbf{N}_u}} = \lim_{u\to+\infty}\frac{\sum_{k=1}^{\mathbf{N}_u} \Delta_k}{{\mathbf{N}_u}} = \lim_{u\to+\infty}\frac{\sum_{k=1}^{\mathbf{N}_u+1} \Delta_k}{{\mathbf{N}_u}} = \overline{\mathbb{E}}(\Delta_1) \;, \textit{ a.s. } [P_*] \;.
$$
Hence the result.
\end{proof}

\begin{lemma}
\label{lem:gamma-tau-T}
For all $T \in \mathbb{R}_+$, we have
$$
\sup_{\tau\in[0,1]} \Gamma_{\tau T}^{\mathbf{x}} = o_{P_*} \left(\sqrt{T}\right) \;,
$$
where as in the notation a.s.$[P_*]$, the subscript $P_*$ in $o_{P_*}$
indicates that the result holds for any initial distribution.
\end{lemma}

\begin{proof}
Let us define
\begin{equation}
\label{eq:def-nT}
n_T = \left[\frac{T}{\overline{\mathbb{E}}(\Delta_1)}\right] \;.
\end{equation}
By definition of $\Gamma_{u}$, for all $\tau\in[0,1]$, we have that $\Gamma_{\tau T} \leq \max_{k=1,\dots,\mathbf{N}_T+1} {\Delta_k}$.

For any $\tau\in[0,1]$, $\epsilon > 0$ and $T \in \mathbb{R}_+$, we get, for all $\mathbf{x} \in \mathsf{X}$,
$$
\mathbb{P}\left(\Gamma_{\tau T} \geq \epsilon \sqrt{T}\right) \leq \mathbb{P}^{\mathbf{x}}\left(\max_{k=1,\dots,2 n_T} {\Delta_k} \geq \epsilon \sqrt{T}\right) + \mathbb{P}^{\mathbf{x}}\left(\mathbf{N}_T + 1 > 2 n_T \right)\;.
$$
Using Lemma~\ref{lem:Nu-sur-u}, we have $\lim_{T\to\infty} \mathbb{P}^{\mathbf{x}}\left(\mathbf{N}_T + 1 > 2 n_T \right) = 0$. Observe that we have, for all $k\geq 1$, $\Delta_k \overset{st}{\leq} E_k$ where $\{E_i\}_{i \geq 1}$ are the inter-arrivals of a Poisson point process with intensity $\1_p^T \boldsymbol{\ell}$. We get, for all $\mathbf{x}\in\mathsf{X}$,
$$
\lim_{T\to\infty} \mathbb{P}^{\mathbf{x}}\left(\max_{k=1,\dots,2 n_T} {\Delta_k} \geq \epsilon \sqrt{T}\right) \leq \lim_{T\to\infty} \mathbb{P}^{\mathbf{x}}\left(\max_{k=1,\dots,2 n_T} {E_k} \geq \epsilon \sqrt{T}\right) = 0\;,
$$
which concludes the proof.
\end{proof}

We now provide a proof of Theorem~\ref{thm:fclt-cont-time} using Lemmas~\ref{lem:Nu-sur-u} and~\ref{lem:gamma-tau-T}.

\begin{proof}[Proof of Theorem~\ref{thm:fclt-cont-time}]
For all $t \in [0, 1]$, $T \in \mathbb{R}_+$ and $\mathbf{x} \in \mathsf{X}$, we have
\begin{align*}
\left(\mathbf{N}^{\mathbf{x}}(\1_{[0,tT]}\otimes\mathbf{w})-t T \; E(\mathbf{w})\right)
 =
 \sum_{k=1}^{\mathbf{N}_{tT}^{\mathbf{x}}} \left(\mathbf{w}(I_k) - E(\mathbf{w}) \Delta_k\right) - E(\mathbf{w}) \Gamma_{tT}^{\mathbf{x}} \;.
\end{align*}
We also observe that, defining $n_T$ as in~(\ref{eq:def-nT}),
$$
 \sum_{k=1}^{\mathbf{N}_{tT}^{\mathbf{x}}} \left(\mathbf{w}(I_k) - E(\mathbf{w}) \Delta_k\right)
=  \Sigma_{n_T}\left(\frac{\mathbf{N}_{tT}^{\mathbf{x}}}{n_T}, g\right)=s_{n_T}\left(\frac{\mathbf{N}_{tT}^{\mathbf{x}}}{n_T}, g\right)\;,
$$
where we used that, by definition of $E(\mathbf{w})$, $\overline{\pi} (g) = 0$
and that $\mathbf{N}_{tT}$ is an integer.
Hence by Lemma~\ref{lem:gamma-tau-T}, we get
\begin{equation}
  \label{eq:fclt-beforecomposition}
\sup_{t\in[0,1]}\left|\left(\mathbf{N}^{\mathbf{x}}(\1_{[0,tT]}\otimes\mathbf{w})-t T \;
  E(\mathbf{w})\right) - s_{n_T}\left(\frac{\mathbf{N}_{tT}^{\mathbf{x}}}{n_T},
  g\right)\right|= o_{P_*}(\sqrt{T})\;,
\end{equation}
Now, by Lemma~\ref{lem:Nu-sur-u}, we have, for all $t\in[0,1]$,
$\mathbf{N}_{tT}^{\mathbf{x}}/n_T\to t$ as $T\to\infty$ a.s.$[P_*]$. By the Dini theorem,
this convergence holds uniformly over $t\in[0,1]$. Applying
\cite[Theorem~13.2.1]{whitt-2002}, this convergence and the one
in~(\ref{eq:fclt}) implies that, for any initial distribution,
$$
(n_T\sigma_g^2)^{-1/2} s_{n_T} \left(\frac{\mathbf{N}_{tT}^{\mathbf{x}}}{n_T}, g\right) \overset{d}{\to} B_t \;,
$$
where the weak convergence holds in $D([0,1])$.
This,~(\ref{eq:fclt-beforecomposition}), $\sigma_g^2=v(\mathbf{w})$ and the fact that $n_T/T\to
1/\overline{\mathbb{E}}[\Delta_1]$ yield~(\ref{eq:fclt-realtime}), which
achieves the proof.
\end{proof}

\section{Conclusion}

In this paper we have introduced and studied constrained multivariate Hawkes
processes. The constraints are expressed using a multidimensional variable, whose evolution is driven by the point
process. Under the Markov setting (exponential fertility functions), we have
proven that the underlying Markov chain is $V$-geometrically ergodic under some
conditions on the parameters. A converse result leading to the transience of
the chain in the case where an univariate spread variable is used ($q=1$)
illustrates the sharpness of our conditions.  Moreover, in the general case, we
used a functional central limit theorem applying to the chain to derive the
scaling limit of the integrated point process in physical time.
Finally we have briefly explained how
the constrained multivariate Hawkes process can be applied to model the
dynamics of a limit order book. The scaling limit of the mid-price can be
deduced from our findings.

We only presented the case of the order book of one asset with only two
limits. This is clearly not a restriction.  Multi-assets limit order books can
be considered, yielding multivariate boundary conditions (hence $q\geq2$).  In
a forthcoming paper, we will use this model to provide an empirical study from
real data and discuss the potential application of this model to the dynamics
of limit order books.  We believe that the applicability of the constrained
multivariate Hawkes process extends well beyond financial applications, as it
could model the motion of an object on a discrete net with some boundary
conditions.

\section{Acknowledgements}
This research is supported by NATIXIS quantitative research department.

\appendix

\section{Additional technical lemmas}
\label{app:technical}

The following result is used in the proof of
Proposition~\ref{prop:general-smallset}.

\begin{lemma}
\label{lem:general-smallset}
Let $p\geq1$,  and $\Gamma$ is a $p\times p$ invertible
matrix. Then there exists a probability measure $\nu$ such that,
for all $M>0$, $\gamma > 0$, there exists
$\epsilon > 0$ such
that for all $g: \mathbb{R}^p\rightarrow{\mathbb{R}_{+}}$ and
$\boldsymbol{\ell} \in (0, M]^p$,
\[
\int_{D}g(u\boldsymbol{\ell} + \Gamma \boldsymbol{\vartheta}) u^{\gamma}
\mathrm{d}u \mathrm{d}\boldsymbol{\vartheta} \geq \epsilon \int g \; \mathrm{d}\nu \;,
\]
where $D = \{(u, \boldsymbol{\vartheta}) \in (0, 1] \times (0, 1]^p, 0< u < \boldsymbol{\vartheta}(1) < \cdots < \boldsymbol{\vartheta}(p)\}$.
\end{lemma}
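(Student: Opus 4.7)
The plan is to perform an affine change of variables $\mathbf{x}=u\boldsymbol{\ell}+\Gamma\boldsymbol{\vartheta}$ with $u$ held fixed, which, by invertibility of $\Gamma$, has constant Jacobian $|\det\Gamma|$. This rewrites the left-hand side as
\[
|\det\Gamma|^{-1}\int_0^1 u^\gamma\int_{\mathbb{R}^p} g(\mathbf{x})\,\1_{D_{\boldsymbol{\ell}}(u)}(\mathbf{x})\,\mathrm{d}\mathbf{x}\,\mathrm{d}u,
\]
where $D_{\boldsymbol{\ell}}(u)=\{\mathbf{x}\in\mathbb{R}^p:(u,\Gamma^{-1}(\mathbf{x}-u\boldsymbol{\ell}))\in D\}$. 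The game is now to exhibit a single open ball $E\subset\mathbb{R}^p$ and a number $u_*\in(0,1)$, both \emph{independent} of $g$ and of $\boldsymbol{\ell}\in(0,M]^p$, such that $E\subset D_{\boldsymbol{\ell}}(u)$ for every $u\in(0,u_*]$. Taking $\nu$ as normalised Lebesgue measure on $E$ (which will depend only on $\Gamma$, not on $M$ or $\gamma$) and setting $\epsilon=|\det\Gamma|^{-1}|E|\,u_*^{\gamma+1}/(\gamma+1)$ would then conclude.

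To build $E$, I fix any point $\boldsymbol{\vartheta}^*=(\vartheta_1^*,\dots,\vartheta_p^*)$ strictly inside the open simplex $\{0<\vartheta_1<\cdots<\vartheta_p<1\}$ and set $\mathbf{x}_0=\Gamma\boldsymbol{\vartheta}^*$. I choose $r>0$ small enough that the Euclidean ball $B(\boldsymbol{\vartheta}^*,r)$ sits inside the simplex with a definite buffer away from every face, in particular $\vartheta_1^*-r>0$. Since $\Gamma$ is a linear homeomorphism, there is $r'>0$ with $B(\mathbf{x}_0,r')\subset\Gamma\bigl(B(\boldsymbol{\vartheta}^*,r/2)\bigr)$, and I define $E:=B(\mathbf{x}_0,r')$; this depends only on $\Gamma$.

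Next I pick $u_*>0$ small enough that simultaneously $u_*<\vartheta_1^*-r$ and $u_*\,\|\Gamma^{-1}\|_{\mathrm{op}}\sqrt{p}\,M<r/2$; this is the only step where $M$ enters. For any $\boldsymbol{\ell}\in(0,M]^p$, $u\in(0,u_*]$ and $\mathbf{x}\in E$,
\[
\Gamma^{-1}(\mathbf{x}-u\boldsymbol{\ell})=\boldsymbol{\vartheta}^*+\Gamma^{-1}(\mathbf{x}-\mathbf{x}_0)-u\,\Gamma^{-1}\boldsymbol{\ell};
\]
both correction terms have norm at most $r/2$ by construction of $E$ and the choice of $u_*$, so $\Gamma^{-1}(\mathbf{x}-u\boldsymbol{\ell})\in B(\boldsymbol{\vartheta}^*,r)$, which lies in the simplex, and its first coordinate is at least $\vartheta_1^*-r>u_*\geq u$. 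Hence $(u,\Gamma^{-1}(\mathbf{x}-u\boldsymbol{\ell}))\in D$, i.e. $E\subset D_{\boldsymbol{\ell}}(u)$. Plugging this back into the Jacobian identity delivers the claimed bound with $\nu$ and $\epsilon$ as above.

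The main obstacle is making the choice of $E$ and $u_*$ \emph{uniform} in $\boldsymbol{\ell}\in(0,M]^p$: this is what forces $E$ to be a small ball (so that $\Gamma^{-1}(\mathbf{x}-\mathbf{x}_0)$ is controlled) and $u_*$ to be small (so that the $\boldsymbol{\ell}$-perturbation $u\Gamma^{-1}\boldsymbol{\ell}$ is controlled uniformly over the box $(0,M]^p$). Invertibility of $\Gamma$ is used twice, once for the Jacobian and once to transport open neighbourhoods inside the simplex to open neighbourhoods of $\mathbf{x}_0$.
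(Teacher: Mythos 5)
Your proof is correct and follows essentially the same route as the paper: both perform the change of variables $\boldsymbol{\omega}=u\boldsymbol{\ell}+\Gamma\boldsymbol{\vartheta}$, fix an open set compactly contained in the ordered simplex (the paper uses a shrunken simplex $B_\eta$, you use a Euclidean ball), push it forward by $\Gamma$ to define a $\nu$ depending only on $\Gamma$, and choose a threshold $u_*$ (the paper's $\eta'$) small enough, uniformly over $\boldsymbol{\ell}\in(0,M]^p$, that the perturbation $u\Gamma^{-1}\boldsymbol{\ell}$ cannot push the point out of the simplex.
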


\begin{proof}
Setting $\boldsymbol{\omega} = u\boldsymbol{\ell} + \Gamma \boldsymbol{\vartheta}$, we have
\[
\int_{D}g(u\boldsymbol{\ell} + \Gamma \boldsymbol{\vartheta}) u^{\gamma}
\mathrm{d}u \mathrm{d}\boldsymbol{\vartheta} = \frac{1}{\det{\Gamma}} \int
g(\boldsymbol{\omega}) u^{\gamma} {\1}_{D}{((u, \Gamma^{-1} \boldsymbol{\omega}
  - u \Gamma^{-1} \boldsymbol{\ell}))} \mathrm{d}u
\mathrm{d}\boldsymbol{\omega} \;.
\]

We choose any $0 < \eta < 1/(p+1)$ so that the open set $B_{\eta}
= \{ \boldsymbol{\vartheta} \in (0,1]^p : \eta < \boldsymbol{\vartheta}(1),
\eta + \boldsymbol{\vartheta}(1) < \boldsymbol{\vartheta}(2), \cdots, \eta +
\boldsymbol{\vartheta}({p-1}) < \boldsymbol{\vartheta}({p}),
\boldsymbol{\vartheta}({p}) < 1-\eta\}$ is not empty.

Denoting $\Gamma B_{\eta} = \{\Gamma \boldsymbol{\vartheta} :
\boldsymbol{\vartheta} \in B_{\eta}\}$ and setting
$$
\eta' = \frac12\eta \min\left(1,\left(\max_{i=1,\dots,p}\sup_{\boldsymbol{\ell}\in(0,M]^p}\left|[\Gamma^{-1}\boldsymbol{\ell}](i)\right| \right)^{-1}\right)\,
$$
we have that for all $u \in (0, \eta']$, $\boldsymbol{\ell} \in (0,
M]^p$ and $\boldsymbol{\omega}\in\Gamma B_{\eta}$, $[u, \Gamma^{-1} \boldsymbol{\omega} - u \Gamma^{-1}\boldsymbol{\ell}]
\in D$. We obtain that
\begin{align*}
\int_{D}g(u\boldsymbol{\ell} + \Gamma \boldsymbol{\vartheta}) u^{\gamma} \mathrm{d}u \mathrm{d}\boldsymbol{\vartheta} &
\geq \frac{1}{\det{\Gamma}}
\int_{\Gamma B_{\eta}}g(\boldsymbol{\omega})
\left(\int_{0}^{\eta'}u^{\gamma} \mathrm{d}u\right) \mathrm{d}\boldsymbol{\omega} \\
& = \frac{\mu^{Leb}(\Gamma B_{\eta})(\eta')^{\gamma + 1}}{(\gamma +
  1)\det{\Gamma}} \int g \mathrm{d}\nu \;,
\end{align*}
where $\mu^{Leb}$ is Lebesgue measure, $\nu$ is the uniform probability measure on $\Gamma B_{\eta}$.
\end{proof}

The following result is used in the proof of Proposition~\ref{prop:drift-step-1}.

\begin{lemma}
\label{lem:appendix-asymptotic-equivalence}
Let $\beta > 0$ and $\beta' > 0$.
Then, as $a\to\infty$,
\begin{equation}
  \label{eq:limit-lem2}
\int_0^{\infty} \rme^{a(\rme^{-\beta t} - 1) - \beta'
  t}\mathrm{d}t \to 0 \;.
\end{equation}
If moreover $\beta' > \beta$, then we have the following asymptotic equivalence
as $a\to\infty$,
\begin{equation}
  \label{eq:asymp-lem2}
\int_0^{\infty} \rme^{a(\rme^{-\beta t} - 1) -
  \beta' t} \mathrm{d}t \sim \frac{1}{a\beta} \;.
\end{equation}
\end{lemma}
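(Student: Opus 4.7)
The plan is to treat both assertions via a single change of variables. I would set $u = \mathrm{e}^{-\beta t}$, then $v = a(1-u)$, so that $t \in [0,\infty)$ is mapped bijectively to $v \in [0,a]$ with $\mathrm{e}^{a(\mathrm{e}^{-\beta t}-1)} = \mathrm{e}^{-v}$, $\mathrm{e}^{-\beta' t} = (1-v/a)^{\beta'/\beta}$, and $\mathrm{d}t = \mathrm{d}v/(\beta(a-v))$. A direct computation then yields
\[
\int_0^{\infty} \mathrm{e}^{a(\mathrm{e}^{-\beta t}-1) - \beta' t}\,\mathrm{d}t = \frac{1}{a\beta} \int_0^a \mathrm{e}^{-v}\,(1-v/a)^{\beta'/\beta - 1}\,\mathrm{d}v\;.
\]

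For (\ref{eq:limit-lem2}), I only need to bound the integral on the right uniformly in $a$, after which dividing by $a\beta$ gives the claimed vanishing limit. When $\beta' \geq \beta$ the exponent $\gamma := \beta'/\beta - 1$ is non-negative and the integrand is dominated by $\mathrm{e}^{-v}$, so the integral is at most $1$. When $0 < \beta' < \beta$ we have $\gamma \in (-1,0)$; I would split at $v = a/2$, noting that on $[0,a/2]$ the factor $(1-v/a)^{\gamma}$ is bounded by $2^{-\gamma}$, while on $[a/2,a]$ the exponential factor $\mathrm{e}^{-v}\leq \mathrm{e}^{-a/2}$ overwhelms the integrable singularity $\int_{a/2}^a(1-v/a)^\gamma\,\mathrm{d}v = a/(2^{\gamma+1}(\gamma+1))$. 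Either way the integral stays bounded as $a \to \infty$.

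For (\ref{eq:asymp-lem2}), the assumption $\beta' > \beta$ ensures $\gamma > 0$, so $(1-v/a)^\gamma \leq 1$ on $[0,a]$ and $(1-v/a)^\gamma \to 1$ for each fixed $v \geq 0$. Dominated convergence with dominating function $\mathrm{e}^{-v}$ then gives
\[
\int_0^a \mathrm{e}^{-v}\,(1-v/a)^{\gamma}\,\mathrm{d}v \longrightarrow \int_0^\infty \mathrm{e}^{-v}\,\mathrm{d}v = 1\;,
\]
from which the equivalent $1/(a\beta)$ follows at once.

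The only substantive step is the initial change of variables; all the subsequent work is an elementary application of dominated convergence, and the slight complication in the case $0 < \beta' < \beta$ of (\ref{eq:limit-lem2}) — the singularity of $(1-v/a)^\gamma$ at $v=a$ — is the only place where care is needed, but it is handled cleanly by the $[0,a/2]$ / $[a/2,a]$ split.
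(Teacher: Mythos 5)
Your proof is correct and follows essentially the same route as the paper: the substitution $\vartheta=\rme^{-\beta t}$ followed by $\omega=-a(\vartheta-1)$ (which you combine into one step), with dominated convergence giving the equivalence~(\ref{eq:asymp-lem2}). The only minor variation is for~(\ref{eq:limit-lem2}), where the paper applies dominated convergence directly to $\frac{1}{\beta}\int_0^1\rme^{a(\vartheta-1)}\vartheta^{\beta'/\beta-1}\,\mathrm{d}\vartheta$ (the dominating function $\vartheta^{\beta'/\beta-1}$ being integrable since $\beta'/\beta-1>-1$), whereas you bound the fully transformed integral uniformly in $a$ and exploit the $1/(a\beta)$ prefactor; both arguments are valid.
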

\begin{proof}
Setting $\vartheta = \rme^{ - \beta t}$, we get
$$
\int_0^{\infty} \rme^{a(e^{-\beta t} - 1)} (\rme^{ - \beta t})^{\frac{\beta'}{\beta}} \mathrm{d}t =
\frac{1}{\beta} \int_0^{1} \rme^{a(\vartheta - 1)} \vartheta^{\frac{\beta'}{\beta}-1} \mathrm{d}\vartheta \;.
$$
Letting $a\to\infty$, by dominated convergence, we get~(\ref{eq:limit-lem2}).

Now we set $\omega = -a(\vartheta - 1)$ and obtain that
\begin{align*}
\frac{1}{\beta} \int_0^{1} \rme^{a(\vartheta - 1)} \vartheta^{\frac{\beta'}{\beta}-1} \mathrm{d}\vartheta
& = \frac{1}{a\beta} \int_0^{a} \rme^{-\omega} (1-\frac{\omega}{a})^{\frac{\beta'}{\beta}-1} \mathrm{d}\omega \;.
\end{align*}
Letting $a \rightarrow \infty$, we obtain~(\ref{eq:asymp-lem2}) by dominated convergence.
\end{proof}

\bibliographystyle{plain}
\bibliography{hawkes}

\end{document}